\titlespacing*{\paragraph}{0pt}{1ex plus 1ex minus .2ex}{1em}
\newcommand\blfootnote[1]{%
  \begingroup
  \renewcommand\thefootnote{}\footnote{#1}%
  \addtocounter{footnote}{-1}%
  \endgroup
}
\DeclareMathOperator*{\argmax}{argmax}
\DeclareMathOperator*{\argmin}{argmin}
\newtheorem{thm}{Theorem}
\newtheorem{prop}{Proposition}
\newtheorem{lem}{Lemma} 
\newtheorem{cor}{Corollary}
\theoremstyle{definition}
\title{Phylogenetic Diversity Rankings in the Face of Extinctions: the Robustness of the Fair Proportion Index}
\author[1,$^\ast$]{Mareike Fischer}
\author[2]{Andrew Francis}
\author[3]{Kristina Wicke}
\affil[1]{Institute of Mathematics and Computer Science, University of Greifswald, Greifswald, Germany}
\affil[2]{Centre for Research in Mathematics and Data Science, Western Sydney University, Sydney, Australia}
\affil[3]{Department of Mathematics, The Ohio State University, Columbus (OH), USA}
\date{}                                           
\begin{document}
\maketitle

\begin{abstract}
Planning for the protection of species often involves difficult choices about which species to prioritize, given constrained resources.
One way of prioritizing species is to consider their ``evolutionary distinctiveness'', i.e. their relative evolutionary isolation on a phylogenetic tree. Several evolutionary isolation metrics or phylogenetic diversity indices have been introduced in the literature, among them the so-called Fair Proportion index (also known as the ``evolutionary distinctiveness'' score). This index apportions the total diversity of a tree among all leaves, thereby providing a simple prioritization criterion for conservation. 

Here, we focus on the prioritization order obtained from the Fair Proportion index and analyze the effects of species extinction on this ranking. More precisely, we analyze the extent to which the ranking order may change when some species go extinct and the Fair Proportion index is re-computed for the remaining taxa. We show that for each phylogenetic tree, there are edge lengths such that the extinction of one leaf per cherry completely reverses the ranking. Moreover, we show that even if only the lowest ranked species goes extinct, the ranking order may drastically change. 
We end by analyzing the effects of these two extinction scenarios (extinction of the lowest ranked species and extinction of one leaf per cherry) for a collection of empirical and simulated trees. In both cases, we can observe significant changes in the prioritization orders, highlighting the empirical relevance of our theoretical findings.
\end{abstract}
{\it Keywords:} Biodiversity conservation, Fair Proportion index, phylogenetic diversity, species prioritization

\blfootnote{$^\ast$Corresponding author\\ \textit{Email address:} \texttt{email@mareikefischer.de} (Mareike Fischer)}

\section{Introduction}
Evolutionary isolation measures or phylogenetic diversity indices have become an increasingly popular tool to prioritize species for conservation (e.g. \citet{Vanewright1991,Redding2006, Redding2008, Redding2014, Isaac2007, Vellend2011}). These indices assess the importance of species for overall biodiversity based on their placement in an underlying phylogenetic tree and can thus, next to other criteria such as threat status, serve as a prioritization tool in conservation planning.  For instance, a species that is only distantly related to others in its family may be prioritized for preservation over one that is more closely related, in order to preserve more breadth in the biodiversity.

One simple index that has been introduced in this regard is the ``Fair Proportion (FP) index'', also known as the ``evolutionary distinctiveness'' (ED) score (\citet{Redding2003,Isaac2007}). The FP index apportions the total diversity of a tree (measured as the sum of edge lengths of the tree) among all leaves by distributing each edge length equally among descending leaves. It is employed in the so-called ``EDGE of Existence'' project established by the Zoological Society of London, a conservation initiative focusing specifically at threatened species that represent a high amount of unique evolutionary history (\citet{Isaac2007}; see also \url{https://www.edgeofexistence.org/}).

In this paper, we focus on the ranking order obtained from the FP index and analyze its robustness to species extinction. More precisely, we consider the following scenario: Suppose you use the FP index to rank species for conservation, i.e. to allocate resources to protect \emph{some} but \emph{not all} species under consideration. Now suppose one or more species go extinct, for example because they did not receive any conservation attention. This will result in a change in the underlying phylogenetic tree, because some species are now extinct.  
If you now re-compute the FP index on this revised tree, how confident can you be that your initial choice of priority is unchanged?

Here, we investigate circumstances in which after such an extinction event, the ranking order obtained from the FP index and thus conservation priorities might radically change. We consider  scenarios where several species go extinct, as well as scenarios where only one species goes extinct. 

The manuscript is organized as follows. We first introduce all relevant concepts and notation. We then show that for each phylogenetic tree, there are edge lengths such that the extinction of one leaf per cherry completely reverses the ranking obtained from the FP index (Theorem \ref{thm_cherries}). Afterwards, we analyze cases in which only one species, e.g. the lowest ranked one, goes extinct, and show that this can already have drastic effects on the prioritization order (Theorem \ref{thm_secondleast_becomes_first} and Theorem \ref{thm_boundoneleaf}).  
After briefly considering the case of ultrametric caterpillar trees, we complement our theoretical results by studying the effects of species extinction for a collection of empirical trees obtained from the TreeBase database (\citet{treebase2,treebase1}) as well as for simulated data. In both cases, we can observe changes in the prioritization orders when species go extinct, indicating that our theoretical results are not merely mathematical artifacts but are of practical importance. However, these studies also show that the rankings become more ``robust'' to species extinctions the larger the tree.
We end by discussing our results and indicating directions for future research.

\section{Definitions and Background}\label{Sec_Preliminaries}

\subsection*{Phylogenetic $X$-trees and related concepts}
Let $X$ denote a non-empty finite set (of taxa) with $|X|=n$. A \emph{rooted binary phylogenetic $X$-tree} $T=(V(T),E(T))$ is a rooted tree (or, more precisely, an \emph{arborescence}) with root vertex $\rho$ of in-degree 0 and out-degree 2, where all edges are directed away from the root, all interior vertices apart from $\rho$ have in-degree 1 and out-degree 2, and the leaves (also referred to as taxa) are bijectively labelled by $X$. For technical reasons, if $|X|=1$, we additionally allow $T$ to consist of a single vertex, which is at the same time the root and only leaf of $T$.
Since all phylogenetic $X$-trees in this paper are rooted and binary, we will often refer to them simply as \emph{phylogenetic trees} or \emph{trees}. Moreover, we call the graph-theoretical tree without leaf labels underlying $T$, the \emph{tree shape} or \emph{topology} of $T$. 
Furthermore, the edges incident to the leaves are referred to as \emph{pendant} edges, whereas all other edges are called \emph{inner} edges.
Additionally, we assume that each edge $e$ of $T$ is assigned a strictly positive edge length $\lambda_T(e) \in \mathbb{R}_+$, representing time or evolutionary distance. Whenever there is no ambiguity we simply refer to the length of an edge $e$ as $\lambda_e$. Moreover, we call $T$ an \emph{ultrametric tree} if the path lengths from the root to all leaves of $T$ are identical. Note that the concept of ultrametric trees is also often referred to as the \emph{molecular clock hypothesis} in biology.

A vertex $v$ of $T$ is a \emph{descendant} of a vertex $u$ of $T$ (and $u$ is an \emph{ancestor} of $v$), if $u$ lies on the unique path from $\rho$ to $v$ in $T$. In particular, a vertex $u$ is the \emph{parent} of a vertex $v$ in $T$, if $v$ is a descendant of $u$ and the edge $e=(u,v)$ exists, i.e. $e=(u,v) \in E(T)$. Then, a \emph{cherry} $[x_i,x_j]$ of $T$ is a pair of taxa $x_i,x_j \in X$ such that $x_i$ and $x_j$ have the same parent in $T$. We use $c_T$ to denote the number of cherries of $T$. A phylogenetic tree that has precisely one cherry is called a \emph{caterpillar tree} (note that up to permuting leaf labels, the caterpillar tree is unique).

When $n \geq 2$, we will also often decompose $T$ into its two \emph{maximal pendant subtrees} $T_a$ and $T_b$ rooted at the children $a$ and $b$ of $\rho$, and we denote this decomposition by $T=(T_a,T_b)$. We use $X_a$ and $X_b$ to denote the leaf sets of $T_a$ and $T_b$, respectively. Moreover, we use $n_a$ and $n_b$ to refer to $|X_a|$ and $|X_b|$, respectively, and assume without loss of generality that $n_a \geq n_b \geq 1$.

Finally, for a subset $Y\subseteq X$, the \textit{induced subtree} ${T_Y}$ of $T$ is the rooted phylogenetic $Y$-tree obtained from the minimal subtree of $T$ connecting the taxa in $Y$ by suppressing all non-root, degree-2 vertices, and adding up the edge lengths of edges that are ``merged'' into a new edge. 
Note that if $T=(T_a,T_b)$ is a tree with $n \geq 2$ leaves and $Y \subseteq X$ is such that $Y \cap X_a = \emptyset$ or $Y \cap X_b = \emptyset$, i.e. $Y$ contains only taxa from $X_a$ or from $X_b$ but not from both, the minimal subtree of $T$ connecting the taxa in $Y$ will still contain the root $\rho$ of $T$ after suppressing all non-root degree-2 vertices, and $\rho$ will have out-degree 1 in this subtree. In particular, there will be a ``root edge'' from $\rho$ to the lowest common ancestor (in $T$) of all taxa in $Y$. In this case, we additionally delete $\rho$ and its incident edge to obtain $T_Y$ in order to ensure that $T_Y$ still is a rooted binary phylogenetic tree according to our definition. However, in the following, we often explicitly enforce induced subtrees $T_Y$ such that $Y$ contains taxa from both $X_a$ and $X_b$ to prevent this from happening.

\subsection*{The Fair Proportion index}
The \emph{Fair Proportion (FP) index} apportions the total sum of edge lengths of $T$ (also referred to as the `phylogenetic diversity' of $X$ (\citet{Faith1992})) among the taxa in $X$ (\citet{Redding2003, Isaac2007}).
More precisely, the FP index for $x \in X$ 
is defined as 
\begin{equation}\label{Def_FP}
    FP_T(x) = \sum\limits_{e \in P(T;\rho,x)} \frac{\lambda_e}{D_e},
\end{equation}
where $P(T; \rho,x)$ denotes the path in $T$ from the root to leaf $x$ and $D_e$ is the number of leaves descended from the edge $e$. Essentially, the FP index distributes each edge length equally among descending leaves. It is thus not hard to see that $\sum\limits_{x \in X} FP_T(x) = \sum\limits_{e \in E(T)} \lambda_e$.  As an example, for tree $T$ depicted in Figure \ref{fig_goodexample} and taxon $x_1$, we have $FP_T(x_1)=\frac{1}{6} + 61 = 61 \frac{1}{6}$.

\subsection*{Strict and reversible rankings} 
Recall that a \emph{ranking} $\pi(S,f)$ for a set $S=\{s_1,\ldots,s_n\}$ based on a function $f:S\rightarrow \mathbb{R}$ is an ordered list of the
elements of $S$ such that $f(s_i)\geq f(s_j)$ if and only if $s_i$
appears before $s_j$ in $\pi$. A ranking function $f$ is called \emph{strict} if it is one-to-one, that is, if there are no
ties.  In other words, $f$ is strict if $f(s_i)\neq f(s_j)$ for all $i\neq j$.
Note that in the following we mostly consider rankings $\pi(X,FP_T)$, where $T$ is a phylogenetic $X$-tree. Therefore, whenever there is no ambiguity, we use the shorthand $\pi_T$ instead of $\pi(X,FP_T)$.

We are interested in diversity rankings whose order is reversed if a species or set of species goes extinct.  Formalizing this intuition, we call a ranking $\pi_T$ \emph{reversible} if there is a subset $X'$ of $X$ whose removal from $X$ and from $T$ leads to an induced subtree $\widetilde{T}=T_{\widetilde{X}}$ of $T$ (with $\widetilde{X}:=X\setminus X'$) whose corresponding ranking $\pi_{\widetilde{T}}=\pi(\widetilde{X},FP_{\widetilde{T}})$ ranks the species in the opposite order to $\pi_T$. In particular, if $FP_{\widetilde{T}}(x_i)>FP_{\widetilde{T}}(x_j)$ in $\pi_{\widetilde{T}}$, then $FP_T(x_i)<FP_T(x_j)$ in $\pi_T$. 

An example of a tree $T$ with edge lengths that induce a strict and reversible ranking is given in Figure \ref{fig_goodexample}.

\begin{figure}[htbp]
\flushleft
\begin{minipage}{0.65\textwidth}
\includegraphics[scale=.225]{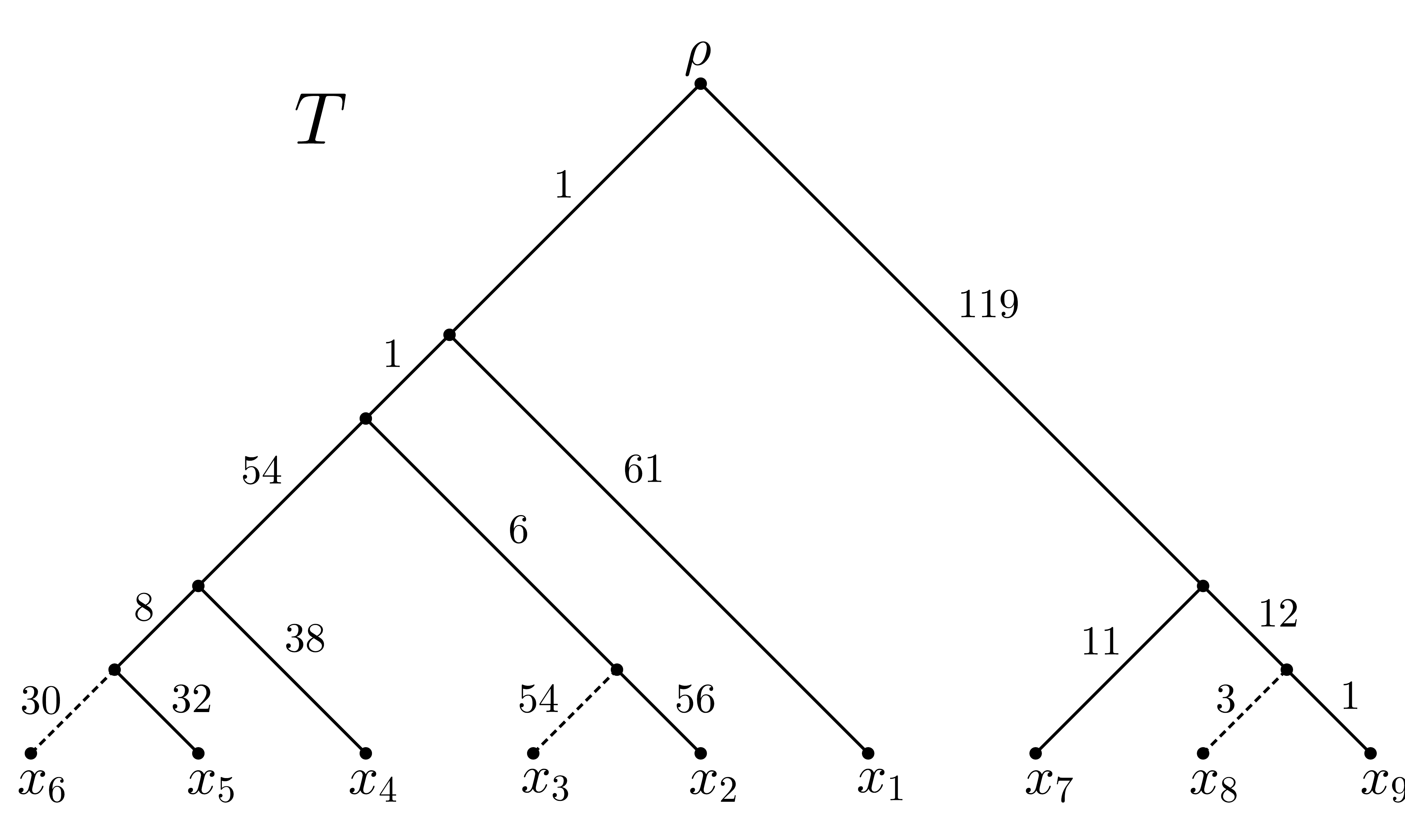}
\hfill
\end{minipage}
\begin{minipage}{0.25\textwidth}
\begin{tabular}{ccc}
\toprule
$x$ & $FP_T(x)$ & $FP_{\widetilde T}(x)$ \\
\midrule
$x_1$ & 61.17 & 61.25 \\
$x_2$ & 59.37 & 62.58 \\
$x_3$ & 57.37 & --\\
$x_4$ & 56.37 & 65.58\\
$x_5$ & 54.37 & 67.58 \\
$x_6$ & 52.37 & -- \\
$x_7$ & 50.67 & 70.50 \\
$x_8$ & 48.67 & --\\
$x_9$ & 46.67 & 72.50\\
\bottomrule
\end{tabular}
\end{minipage}
\caption{A phylogenetic tree $T$ on $X=\{x_1, \ldots, x_9\}$, with edge lengths that induce a strict and reversible ranking when leaves $x_3$, $x_6$ and $x_8$ (i.e. one leaf per cherry) are deleted from $T$ to form $\widetilde{T}$.
More precisely, $FP_{T}(x_1) >FP_{T}(x_2) > \ldots > FP_{T}(x_9) $, but when leaves $x_3$, $x_6$ and $x_8$ are deleted, we have  $FP_{\widetilde{T}}(x_1) <  FP_{\widetilde{T}}(x_2) < \ldots <FP_{\widetilde{T}}(x_9)$. Thus, the induced ranking $\pi_{T}$ is strict and reversible.}
\label{fig_goodexample}
\end{figure}

\subsection*{Kendall's $\tau$ rank correlation coefficient}
The \emph{Kendall's $\tau$ rank correlation coefficient} can be used to quantify the similarity and association of ranked data obtained from different ranking functions. Let $\pi_1(S,f_1)$ and $\pi_2(S,f_2)$ be two rankings for a set $S = \{s_1, \ldots, s_n\}$. Note that these rankings do not need to be strict, but may contain ties. We say that a pair $(s_i,s_j)$ of elements from $S$ with $i < j$ is a \emph{concordant} pair if $s_i$ and $s_j$ are in the same order in $\pi_1$ and $\pi_2$ (i.e., $f_1(s_i) > f_1(s_j)$ and $f_2(s_i) > f_2(s_j)$; or $f_1(s_i) < f_1(s_j)$ and $f_2(s_i) < f_2(s_j)$). On the other hand, if $s_i$ and $s_j$ are in the opposite order in $\pi_1$ and $\pi_2$ (i.e., $f_1(s_i) > f_1(s_j)$ and $f_2(s_i) < f_2(s_j)$; or $f_1(s_i) < f_1(s_j)$ and $f_2(s_i) > f_2(s_j)$), then $(s_i,s_j)$ is called a \emph{discordant} pair. Let $n_c$ denote the number of concordant pairs, let $n_d$ denote the number of discordant pairs, and let $n_{\pi_1}$ and $n_{\pi_2}$ denote the number of pairs that are tied only in $\pi_1$ or only in $\pi_2$, respectively (if a tie occurs for the same pair in both $\pi_1$ and $\pi_2$, it is not added to neither $n_{\pi_1}$ nor $n_{\pi_2}$).
Then, the Kendall's $\tau$ coefficient (or Kendall's $\tau_b$ as this version of the coefficient, which allows for ties, is often called) is defined as
\begin{align*}
    \tau =\frac{n_c-n_d}{\sqrt{(n_c+n_d+n_{\pi_1})(n_c+n_d+n_{\pi_2})}}.
\end{align*}
Note that $\tau \in [-1,1]$, where the rankings are the same if $\tau=1$ and they are completely reversed if $\tau=-1$. If $\tau=0$, the rankings are uncorrelated.

As an example, for tree $T$ on $X=\{x_1, \ldots, x_9\}$ depicted in Figure \ref{fig_goodexample} and tree $\widetilde{T}$ on $\widetilde{X}=X \setminus \{x_3,x_6,x_8\}$ obtained from $T$ by deleting leaves $x_3$, $x_6$, and $x_8$, we have $\pi_T(\widetilde{X}, FP_T) = (x_1, x_2, x_4, x_5, x_7, x_9)$ and $\pi_{\widetilde{T}}(\widetilde{X}, FP_{\widetilde{T}}) = (x_9, x_7, x_5, x_4, x_2, x_1)$. As $|\widetilde{X}|=6$, there are $\binom{6}{2}=15$ possible pairs and all of them are discordant. Moreover, both rankings are strict. Thus, $n_d=15$ and $n_c=n_{\pi_T} = n_{\pi_{\widetilde{T}}}=0$, and we have 
$$ \tau = \frac{0-15}{\sqrt{(0+15+0)(0+15+0)}} = -1.$$

\section{Results}
We are now in the position to study the effects of species extinction, i.e. leaf deletions, on rankings obtained from the FP index. We begin by considering circumstances that lead to strict and reversible rankings, i.e. circumstances in which the extinction of species completely reverses conservation priorities. 

\subsection{Extinction scenarios completely reversing conservation priorities} \label{Sec_CherryExtinction}

We start by showing that a strict ranking can only be reversed if the set $X'$ of deleted leaves contains at least one leaf of each cherry.

\begin{thm}\label{thm_cherries1} Let $T$ be a phylogenetic $X$-tree with $|X|\geq 3$. Let $ \pi_T$ be a strict and reversible ranking for $T$ with respect to $X'\subset X$ and induced subtree $\widetilde{T}$ on taxon set $\widetilde{X}=X\setminus X'$. Let $c=[x_i,x_j]$ be a cherry of $T$. Then, $X'$ contains at least one of the elements $x_i$, $x_j$, i.e. $|X'\cap \{x_i,x_j\}|\geq 1$. 
\end{thm}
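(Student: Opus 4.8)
The plan is to argue by contraposition. I would assume that some cherry $c=[x_i,x_j]$ has \emph{both} of its leaves surviving, i.e. $x_i,x_j\in\widetilde{X}$, and then show that this forces $x_i$ and $x_j$ to appear in the \emph{same} relative order in $\pi_T$ and in $\pi_{\widetilde{T}}$, contradicting reversibility.

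The engine of the proof is the observation that the two leaves of a cherry have FP indices differing only by their pendant edge lengths. Let $p$ be the common parent of $x_i$ and $x_j$ in $T$ and write $e_i=(p,x_i)$ and $e_j=(p,x_j)$ for the two pendant edges, so that $D_{e_i}=D_{e_j}=1$. The root-to-leaf paths $P(T;\rho,x_i)$ and $P(T;\rho,x_j)$ coincide on every edge strictly above $p$, so with $S:=\sum_{e\in P(T;\rho,p)}\lambda_e/D_e$ denoting their shared contribution, definition~\eqref{Def_FP} gives $FP_T(x_i)=\lambda_{e_i}+S$ and $FP_T(x_j)=\lambda_{e_j}+S$, whence
\begin{equation*}
FP_T(x_i)-FP_T(x_j)=\lambda_{e_i}-\lambda_{e_j}.
\end{equation*}
Since $\pi_T$ is strict this difference is nonzero, so (after possibly swapping the roles of $i$ and $j$) I may assume $\lambda_{e_i}>\lambda_{e_j}$, placing $x_i$ before $x_j$ in $\pi_T$.

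Next I would check that this local picture is unchanged in $\widetilde{T}$. Because both $x_i$ and $x_j$ are retained, their parent $p$ still has out-degree $2$ in the minimal subtree connecting $\widetilde{X}$; hence $p$ is not a suppressed degree-2 vertex and $[x_i,x_j]$ is still a cherry of $\widetilde{T}$, with the pendant edges $e_i,e_j$ never merged into any other edge, so that their lengths $\lambda_{e_i},\lambda_{e_j}$ are preserved. Repeating the computation inside $\widetilde{T}$ (with some new shared contribution $\widetilde{S}$ along the path from the root of $\widetilde{T}$ to $p$) yields $FP_{\widetilde{T}}(x_i)-FP_{\widetilde{T}}(x_j)=\lambda_{e_i}-\lambda_{e_j}>0$, so $x_i$ again precedes $x_j$, now in $\pi_{\widetilde{T}}$.

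Thus the pair $(x_i,x_j)$ occurs in the same order in both rankings, contradicting the defining property of a reversible ranking, which would require $FP_{\widetilde{T}}(x_i)>FP_{\widetilde{T}}(x_j)$ to force $FP_T(x_i)<FP_T(x_j)$. Hence at least one of $x_i,x_j$ lies in $X'$, i.e. $|X'\cap\{x_i,x_j\}|\geq 1$. The only genuinely delicate point is the middle paragraph's bookkeeping---verifying that the cherry and its two pendant edge lengths really are preserved by the induced-subtree construction---whereas the surrounding computation is entirely routine.
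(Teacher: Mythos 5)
Your proof is correct and follows essentially the same route as the paper's: both arguments assume the cherry survives intact, observe that $FP_T(x_i)-FP_T(x_j)=\lambda_{e_i}-\lambda_{e_j}=FP_{\widetilde{T}}(x_i)-FP_{\widetilde{T}}(x_j)$ because the two pendant edges are neither deleted nor merged, and derive a contradiction with reversibility. Your explicit check that the parent retains out-degree $2$ in the induced subtree is the same bookkeeping the paper handles when noting that $e_j$ cannot be merged with the edge above $w$.
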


The proof of this theorem is provided in the appendix. As an illustration, consider Figure \ref{fig_goodexample}. Here, $T$ induces a strict and reversible ranking when one leaf per cherry is deleted (in this case, leaves $x_3$, $x_6$, and $x_8$ are deleted). If we had kept at least one of those leaves, say $x_8$, the resulting ranking would not have been reversible. More explicitly, if we let $\widehat{T}$ denote the phylogenetic $\widehat{X}$-tree with $\widehat{X}=X \setminus \{x_3,x_6\}$, we have $\pi_T(\widehat{X}, FP_T)=(x_1,x_2,x_4,x_5,x_7,x_8,x_9)$ and $\pi_{\widehat{T}}(\widehat{X},FP_{\widehat{T}}) = (x_5,x_4,x_2,x_1,x_7,x_8,x_9)$, which shows that the induced rankings are not completely reversed.\\

So if we want to find a strict and reversible ranking $\pi_T$, then at least one leaf per cherry of $T$ has to be deleted; otherwise no suitable edge lengths for $T$ can exist that induce such a ranking. The following fundamental theorem, however, shows that this necessary condition is even sufficient: For each phylogenetic tree, deleting one leaf per cherry is sufficient for the existence of edge lengths that induce a strict and reversible ranking. Moreover, we can even ensure that the species that has the highest FP index in $T$ is still present in $\widetilde{T}$, where it will have the lowest FP index (as $\pi_T$ is strict and reversible). 

We formalize this in the following main theorem of this section. 

\begin{thm}\label{thm_cherries}
Let $n\geq 2$ and let $T=(T_a,T_b)$ be a rooted binary phylogenetic $X$-tree with $|X|=n$. 
Let $\widetilde{T}$ be the induced subtree on leaf set $\widetilde{X}\subset X$ that results from $T$ when we delete one leaf out of each cherry of $T$ and suppress the resulting vertices of in-degree 1 and out-degree 1. 

Then there exist strictly positive edge lengths $\lambda_1,\ldots,\lambda_{2n-2}$ for $T$ such that there is a strict ranking $\pi_T$ for the leaves of $T$ concerning the FP index which is reversible with respect to $\widetilde{T}$ and such that $\widetilde{T}$ contains the species which has the highest FP index in $T$ and such that the species with the lowest FP index in $T$ is \emph{not} contained in $\widetilde{T}$. 

In particular, if $c_T$ denotes the number of cherries of $T$ and if $x_1 \in X$ is such that $x_1=\argmax\limits_{x \in X}FP_T(x)$, then we have $FP_T(x_1)>FP_T(x_2)>\ldots > FP_T(x_n)$ and $FP_{\widetilde{T}}(x_1)<FP_{\widetilde{T}}(\widetilde{x}_2)< \ldots < FP_{\widetilde{T}}(\widetilde{x}_{n-c_T})$, where $x_1$ as well as $\widetilde{x}_i$ are contained in $ \widetilde{X}$ for all $i=2,\ldots,n-c_T$ and where $FP_T(\widetilde{x}_i)>FP_T(\widetilde{x}_j)$ if and only if $FP_{\widetilde{T}}(\widetilde{x}_i)<FP_{\widetilde{T}}(\widetilde{x}_j)$. Moreover, if $x' \in X$ is such that $x'=\argmin\limits_{x \in X}FP_T(x)$, then $x' \not\in \widetilde{X}$.
\end{thm}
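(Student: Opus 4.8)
The plan is to construct the required edge lengths explicitly rather than to argue inductively. The key device is to write, for every surviving leaf $x\in\widetilde{X}$,
\[ FP_{\widetilde{T}}(x) = FP_T(x) + g(x), \]
where $g(x)\geq 0$ is the \emph{gain} that $x$ accrues because deleting leaves removes co-descendants and hence increases $x$'s share of the ancestral edges. Writing $S_e$ for the set of surviving descendants of an edge $e$ and $\delta_e = D_e - |S_e|$ for its number of deleted descendants, one checks edge-by-edge (accounting for the edges merged by suppression at the affected cherry-parents) that
\[ g(x) = \sum_{e} \lambda_e\Big(\tfrac{1}{|S_e|} - \tfrac{1}{D_e}\Big)\,\mathbf{1}[x\in S_e]. \]
Thus $g$ is a nonnegative combination of the clade-indicator vectors $\mathbf{1}_{S_e}$, where the coefficient attached to $S_e$ is strictly positive exactly when $\delta_e>0$, and is freely tunable through $\lambda_e$.

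The reversal requirement is that, listing the $m=n-c_T$ survivors as $s_1,\dots,s_m$, the sequence $FP_T(s_i)$ is strictly decreasing while $FP_{\widetilde{T}}(s_i)=FP_T(s_i)+g(s_i)$ is strictly increasing. I would exploit that the pendant edge of every survivor has $D_e=|S_e|=1$, so it contributes nothing to $g$ and affects the $FP_T$-value of that one leaf only: the pendant lengths therefore give free, independent control of the baselines $FP_T(s_i)$, whereas the gains $g(s_i)$ depend solely on the inner edges. Hence the construction is: first fix inner edge lengths so that the gains $g(s_1),\dots,g(s_m)$ are pairwise distinct; relabel so that $g(s_1)<\cdots<g(s_m)$; then choose pendant lengths so that $FP_T(s_i)=K-i\varepsilon$ for a large $K$ and a small $\varepsilon<\min_i\big(g(s_{i+1})-g(s_i)\big)$. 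Positivity of the reconstructed pendant lengths holds for $K$ large. Then $FP_T$ is strictly decreasing while $FP_{\widetilde{T}}(s_i)=K-i\varepsilon+g(s_i)$ is strictly increasing, which is exactly the reversal, with $s_1$ (the smallest-gain survivor) being the maximum in $T$ and the minimum in $\widetilde{T}$.

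The pendant edges of the deleted leaves lie on no surviving leaf's path, so their lengths are free parameters influencing only the deleted leaves' own $FP_T$-values. Taking them tiny (and $K$ large) pushes every deleted leaf below $FP_T(s_1)$ and makes one of them the global minimum below $FP_T(s_m)$; a final generic perturbation removes residual ties so that $\pi_T$ is strict. This secures both side conditions at once: $x_1=s_1=\argmax_x FP_T(x)$ lies in $\widetilde{X}$ and becomes the minimum of $FP_{\widetilde{T}}$, while $x'=\argmin_x FP_T(x)$ is one of the deleted leaves, so $x'\notin\widetilde{X}$.

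The crux, and the step I expect to be the real work, is showing the gains can be made pairwise distinct. Since each difference $g(s_i)-g(s_j)$ is a linear form in the coefficients $\gamma_e=\lambda_e(1/|S_e|-1/D_e)$, the set of bad parameters (where some pair ties) is a finite union of hyperplanes, and a generic positive choice avoids them all, \emph{provided} each form $g(s_i)-g(s_j)$ is not identically zero. This reduces to a purely combinatorial claim: any two distinct survivors $s_i,s_j$ are separated by a clade $S_e$ with $\delta_e>0$, i.e.\ one containing a deleted leaf. To see this, let $w=\mathrm{lca}(s_i,s_j)$; the two subtrees hanging below $w$ toward $s_i$ and toward $s_j$ separate $s_i$ from $s_j$, and such a subtree fails to contain a deleted leaf only if it is a single leaf, because every binary subtree with at least two leaves contains a cherry, from which a leaf was deleted. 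If both failed, $s_i$ and $s_j$ would be the two leaf-children of $w$, i.e.\ an intact cherry, contradicting that one leaf of each cherry is removed. Hence a separating deleted-containing clade always exists, the linear form is nonzero, and distinct gains are attainable. Verifying the gain identity at the merged edges of suppressed cherry-parents and checking positivity of all edge lengths are the routine parts that remain.
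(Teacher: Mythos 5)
Your proposal is correct, but it follows a genuinely different route from the paper's proof. The paper argues by induction on $n$ through the decomposition $T=(T_a,T_b)$: it applies the inductive hypothesis to both maximal pendant subtrees, then rescales $T_b$ by a small factor $s_b$ and adds constants $d_a,d_b$ to the pendant edges (using the invariance statements of Corollary \ref{cor_lem3lem4}) so that all FP values of the leaves of $T_b$ are squeezed strictly between two consecutive survivors of $T_a$, both in $T$ and in $\widetilde{T}$; this requires substantial explicit bookkeeping with auxiliary constants. You instead give a one-shot global construction: write $FP_{\widetilde{T}}(x)=FP_T(x)+g(x)$ with the gain $g$ supported on inner edges having deleted descendants, observe that survivor pendant edges tune the baselines $FP_T(s_i)$ independently of the gains, make the gains pairwise distinct by a hyperplane-avoidance (genericity) argument, and set the baselines in an arithmetic progression whose step $\varepsilon$ is smaller than every gain gap. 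The verifications you defer are indeed routine: parents of two distinct deleted leaves are never adjacent, so each suppression merges exactly two edges whose surviving-descendant sets coincide, which yields your edge-by-edge gain identity; and positivity of the reconstructed pendant lengths holds for $K$ large, while the deleted leaves' pendant lengths can be taken tiny and generic without affecting anything else. Your separation lemma --- any two survivors are separated by a clade containing a deleted leaf, since any pendant subtree with at least two leaves contains a cherry --- is exactly where the one-leaf-per-cherry hypothesis enters, and it is a pleasing counterpart to the necessity direction in Theorem \ref{thm_cherries1}. As for what each approach buys: yours is shorter, avoids induction and explicit constants, and isolates the combinatorial core of why deleting one leaf per cherry suffices; the paper's proof, in exchange, is fully explicit (every edge length is computed, which supports the recursive-construction remark following the theorem), and its auxiliary lemmas are reused elsewhere (e.g., in the proof of Theorem \ref{thm_boundoneleaf}), whereas your genericity step is existential as stated, though it could be made constructive by an explicit perturbation.
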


The proof of this theorem (together with additional lemmas required for the proof) is provided in the appendix and uses induction on the number of leaves. However, we remark that it is constructive in the following sense: If $T=(T_a,T_b)$ is a phylogenetic tree with $n \geq 2$ leaves such that both $T_a$ and $T_b$ induce strict and reversible rankings, then the proof of Theorem \ref{thm_cherries} establishes a technique to construct a strict and reversible ranking for $T$ by suitably modifying the edge lengths of $T_a$ and $T_b$. By recursively applying this technique, an edge length assignment yielding a strict and reversible ranking can be found for any given phylogenetic tree $T$, regardless of the number of leaves or shape of $T$. \\

Note that while, by Theorem \ref{thm_cherries1}, several leaves of $T$ need to be deleted in order to reverse the entire ranking if $T$ contains more than one cherry, the following corollary shows that for all values of $n$, the extinction of only \emph{one} species, even the one with the lowest FP index, may be sufficient to cause a strict and reversible ranking (depending on the tree shape).

\begin{cor} \label{cor_cat} Let $n \in \mathbb{N}_{\geq 2}$. Then, there exists a rooted binary phylogenetic $X$-tree $T$ with $|X|=n$, namely the caterpillar tree, and edge lengths for $T$, such that $T$ has a strict and reversible ranking with respect to $\widetilde{X}:=X\setminus \{x\}$, where $x=\argmin\limits_{x' \in X}FP_T(x')$, i.e. $x \in X$ is the species with the lowest $FP_T$ value. 
\end{cor}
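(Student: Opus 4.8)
The plan is to obtain the corollary directly from Theorem \ref{thm_cherries}, exploiting the defining feature of the caterpillar tree: it contains exactly one cherry. First I would recall that, up to relabelling leaves, the caterpillar tree on $n \geq 2$ leaves is unique and, by definition, has precisely one cherry, so $c_T = 1$.

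Next I would apply Theorem \ref{thm_cherries} to this caterpillar $T$. The theorem supplies strictly positive edge lengths for $T$ together with a strict ranking $\pi_T$ of its leaves concerning the FP index that is reversible with respect to the induced subtree $\widetilde{T}$ obtained by deleting one leaf out of each cherry of $T$. Since $T$ has only the single cherry, deleting ``one leaf per cherry'' amounts to deleting exactly one leaf; that is, $|X'| = 1$, say $X' = \{x\}$, and $\widetilde{X} = X \setminus \{x\}$.

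The decisive step is to invoke the final guarantee of Theorem \ref{thm_cherries}, namely that the species attaining the minimum FP index in $T$ is not contained in $\widetilde{X}$. Because only a single leaf is removed, this unique deleted leaf must coincide with the lowest-FP species; that is, $X' = \{x\}$ with $x = \argmin_{x' \in X} FP_T(x')$. Hence $T$ admits edge lengths inducing a strict ranking that is reversible with respect to $\widetilde{X} = X \setminus \{x\}$, which is exactly the assertion of the corollary.

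I do not anticipate a genuine obstacle, since the statement is an immediate specialisation of Theorem \ref{thm_cherries} to the case $c_T = 1$; the only point requiring a moment's care is the boundary case $n = 2$, where $\widetilde{T}$ consists of a single leaf. In that case there are no pairs of remaining taxa, so the reversibility condition holds vacuously and the (trivially strict) ranking on one element is ``reversed'', consistent with the $n \geq 2$ hypothesis under which Theorem \ref{thm_cherries} was stated.
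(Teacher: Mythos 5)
Your proposal is correct and follows essentially the same route as the paper: both apply Theorem \ref{thm_cherries} to the caterpillar tree, note that its single cherry ($c_T=1$) forces exactly one leaf to be deleted, and invoke the theorem's guarantee that the deleted leaf is the one with minimal $FP_T$ value. Your extra remark on the vacuous $n=2$ case is a fine (if unneeded) refinement of the same argument.
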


\begin{proof} Let $T$ be the caterpillar tree on $n$ leaves, i.e. $T$ has precisely one cherry. By Theorem \ref{thm_cherries} there are edge lengths for $T$ which assign the smallest $FP_T$ value to a leaf in a cherry such that, if we delete this leaf, the entire ranking induced by $FP_T$ gets reversed. 
\end{proof}

\subsection{The impact of the extinction of a single species}\label{Sec_SingleExtinction} 
While Corollary \ref{cor_cat} implies that for the caterpillar tree the extinction of a single species may completely reverse conservation priorities, for trees that contain more than one cherry the extinction of a single species cannot completely reverse the ordering (due to Theorem \ref{thm_cherries1}). In the following we show, however, that the extinction of a single species, even if it is ranked lowest, can still cause radical changes in conservation priorities. We begin by showing that given a phylogenetic tree $T$, the extinction of the species with the lowest $FP_T$ value can have the effect that the species with the second lowest $FP_T$ value has the highest $FP_{\widetilde{T}}$ value. 

\begin{thm}\label{thm_secondleast_becomes_first}
Let $T$ be a rooted binary phylogenetic $X$-tree with $|X|=n \geq 2$. 
Then, there exist strictly positive edge lengths $\lambda_1, \ldots, \lambda_{2n-2}$ for $T$ such that 
\begin{enumerate}[(i)]
\item the ranking $\pi_T$ induced by the FP index for the leaves of $T$ is strict, and 
\item deleting leaf $y \coloneqq \argmin\limits_{x \in X} FP_T(x)$ from $T$ results in a strict ranking $\pi_{\widetilde{T}}$ for tree $\widetilde{T} \coloneqq T \setminus \{y\}$ on leaf set $\widetilde{X} = X \setminus \{y\}$, for which $w \coloneqq \argmax\limits_{x \in \widetilde{X}} FP_{\widetilde{T}}(x) = \argmin\limits_{x \in \widetilde{X}} FP_T(x)$.
\end{enumerate}
In other words, there exist strictly positive edge lengths for $T$ such that if the species with the lowest $FP_T$ value goes extinct, the species with the second lowest $FP_T$ value has the highest $FP_{\widetilde{T}}$ value. 
\end{thm}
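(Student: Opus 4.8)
The plan is to give a \emph{direct construction} for the given tree shape $T$, rather than an induction, exploiting the fact that we may choose both which leaf plays the role of $y$ and all the edge lengths. Since $n\geq 2$, the tree $T$ contains at least one cherry; I would fix such a cherry $[y,w]$ and let $p$ be its parent, with $y$ destined to be $\argmin_{x\in X}FP_T(x)$ and its sibling $w$ the second-smallest, which will jump to the top after $y$ is deleted. The case $n=2$ (where $p=\rho$) is immediate: take the pendant edge at $y$ shorter than the one at $w$, so $y=\argmin FP_T$, and after deleting $y$ the tree $\widetilde{T}$ consists of the single leaf $w$, which is trivially $\argmax FP_{\widetilde{T}}$. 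So I would then assume $n\geq 3$, in which case $p\neq\rho$ and $p$ has a parent $q$.

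Next I would record how a single leaf deletion changes the index. Writing the path from $\rho$ to $q$ as edges $e_1,\dots,e_k$ of lengths $\mu_1,\dots,\mu_k$ and descendant-counts $D_1,\dots,D_k$ in $T$, deleting $y$ removes its pendant edge, suppresses $p$, and merges $(q,p)$ and $(p,w)$ into a single pendant edge at $w$ of length $\lambda_{qp}+\lambda_{pw}$, while every $D_i$ drops by $1$. A short computation then gives, for the sibling $w$,
\[
FP_{\widetilde{T}}(w)-FP_T(w)=\frac{\lambda_{qp}}{2}+\sum_{i=1}^{k}\frac{\mu_i}{D_i(D_i-1)},
\]
whereas for every other leaf $x$ the gain is $FP_{\widetilde{T}}(x)-FP_T(x)=\sum_e \lambda_e/\!\left(D_e(D_e-1)\right)$, summed over the ancestral edges common to $x$ and $y$. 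The conceptual heart of the argument is that, because $[y,w]$ is a \emph{cherry}, no leaf other than $w$ descends from $p$; hence only $w$ collects the ``merge bonus'' $\lambda_{qp}/2$ coming from the suppression of $p$, while every competitor's gain involves only edges strictly above $q$.

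With the update rule in hand I would separate the edge lengths into three scales: give the inner edge $(q,p)$ a large length $L$; give the pendant edges at $y$ and $w$ small lengths $a<b$; set every other inner edge equal to a tiny $\epsilon$; and give each remaining pendant edge a distinct length inside the window $\left(\tfrac{L}{2}+b,\,L\right)$. Since $(q,p)$ lies on no root-to-leaf path except those of $y$ and $w$, each leaf $x\notin\{y,w\}$ satisfies $FP_T(x)=\lambda_x^{\mathrm{pend}}+O(\epsilon)$ with gain $O(\epsilon)$, while $FP_T(y)=a+\tfrac{L}{2}+O(\epsilon)$ and $FP_T(w)=b+\tfrac{L}{2}+O(\epsilon)$. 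Thus $y$ and $w$ are the two smallest values (with $y$ the minimum), every other leaf sits strictly above $w$, and after deleting $y$ we obtain $FP_{\widetilde{T}}(w)=b+L+O(\epsilon)$, which exceeds $\lambda_x^{\mathrm{pend}}+O(\epsilon)<L$ for every competitor, giving $w=\argmax_{\widetilde{X}}FP_{\widetilde{T}}$ as required.

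To finish I would fix the parameters in order: first choose $L$ large enough that the window contains $n-2$ distinct pendant lengths with pairwise gaps bounded below by a fixed $\delta>0$, and then choose $\epsilon$ (and $a<b$) small enough that all $O(\epsilon)$ perturbations are dominated by $\delta$ and by the $\tfrac{L}{2}$-gaps separating $w$ from its competitors; this makes both $\pi_T$ and $\pi_{\widetilde{T}}$ strict and validates (i) and (ii). I expect the only genuinely delicate step to be the leaf-deletion update formula, where one must correctly account for the suppression of $p$ (the merged pendant edge has descendant-count $1$, so $w$ retains the \emph{full} length $\lambda_{qp}$ instead of half of it) together with the simultaneous decrease of every ancestral $D_i$. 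Once that bookkeeping is correct, the scale separation renders all inequalities transparent, and the usual difficulty in FP arguments — that a single edge length influences many leaves at once — is neutralized by loading all distinguishing weight onto pendant edges, each of which affects only its own leaf.
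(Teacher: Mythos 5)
Your proof is correct, and it takes a genuinely different route from the paper's. The paper proceeds by induction on $n$: writing $T=(T_a,T_b)$, it applies the inductive hypothesis to $T_a$ (so that the two lowest-ranked leaves $x_1,x_2$ of $T_a$ already behave as required within $T_a$), and then places \emph{all} leaves of $T_b$ strictly inside the interval $\bigl(FP_T(x_2),FP_{\widetilde{T}}(x_2)\bigr)$ — nonempty by the deletion-monotonicity lemma (Lemma~\ref{lem_deletionincrease}) — invoking the realizability result (Lemma~\ref{lem_strictphi}) that any vector of positive reals can be realized as FP indices on $T_b$. You avoid both the induction and the realizability machinery: you fix a cherry $[y,w]$, load the edge $(q,p)$ above it with a large length $L$, crush every other interior edge to $\epsilon$, and park the competing pendant edges in the window $(L/2+b,\,L)$. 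Your deletion-update formula $FP_{\widetilde{T}}(w)-FP_T(w)=\lambda_{qp}/2+\sum_i \mu_i/\bigl(D_i(D_i-1)\bigr)$ is correct, as is the bookkeeping for the other leaves, and the key structural point — only the sibling $w$ collects the merge bonus $\lambda_{qp}/2$, while every competitor's gain is $O(\epsilon)$ — is exactly right; note also that $FP_T(w)-FP_T(y)=b-a$ holds \emph{exactly}, since $y$ and $w$ share all ancestral edges, so no $\epsilon$-estimate is even needed for that comparison. One small ordering nit: the window $(L/2+b,L)$ involves $b$, so $a<b$ must be fixed \emph{before} $L$ and the pendant lengths are chosen (your final paragraph defers $a,b$ until after $L$, which is circular as stated); this is trivially repaired by fixing, say, $a=\tfrac12$ and $b=1$ at the outset, then $L$ and $\delta$, then $\epsilon$ last. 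As for what each approach buys: the paper's induction reuses the scaffolding shared with Theorems~\ref{thm_cherries} and~\ref{thm_boundoneleaf} (the scaling/shifting lemmas and the realizability lemma), keeping the appendix uniform and yielding explicit constants, whereas your construction is shorter, self-contained, and makes the mechanism of the rank jump — the privileged merge bonus of the cherry sibling under scale separation — completely explicit.
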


The proof of this theorem is provided in the appendix, but an example for its implications is depicted in Figure \ref{fig:lowestdeletion}. Here, the species with the lowest $FP_T$ value is taxon $x_1$, and the one with the second lowest $FP_T$ value is taxon $x_2$. However, when $x_1$ goes extinct, $x_2$ is the taxon with the highest FP index in the remaining tree. Note that the overall ranking is not completely reversed in this situation, in accordance with Theorem~\ref{thm_cherries1}.

\begin{figure}[htbp]
\flushleft
\begin{minipage}{0.6\textwidth}
\includegraphics[scale=0.3]{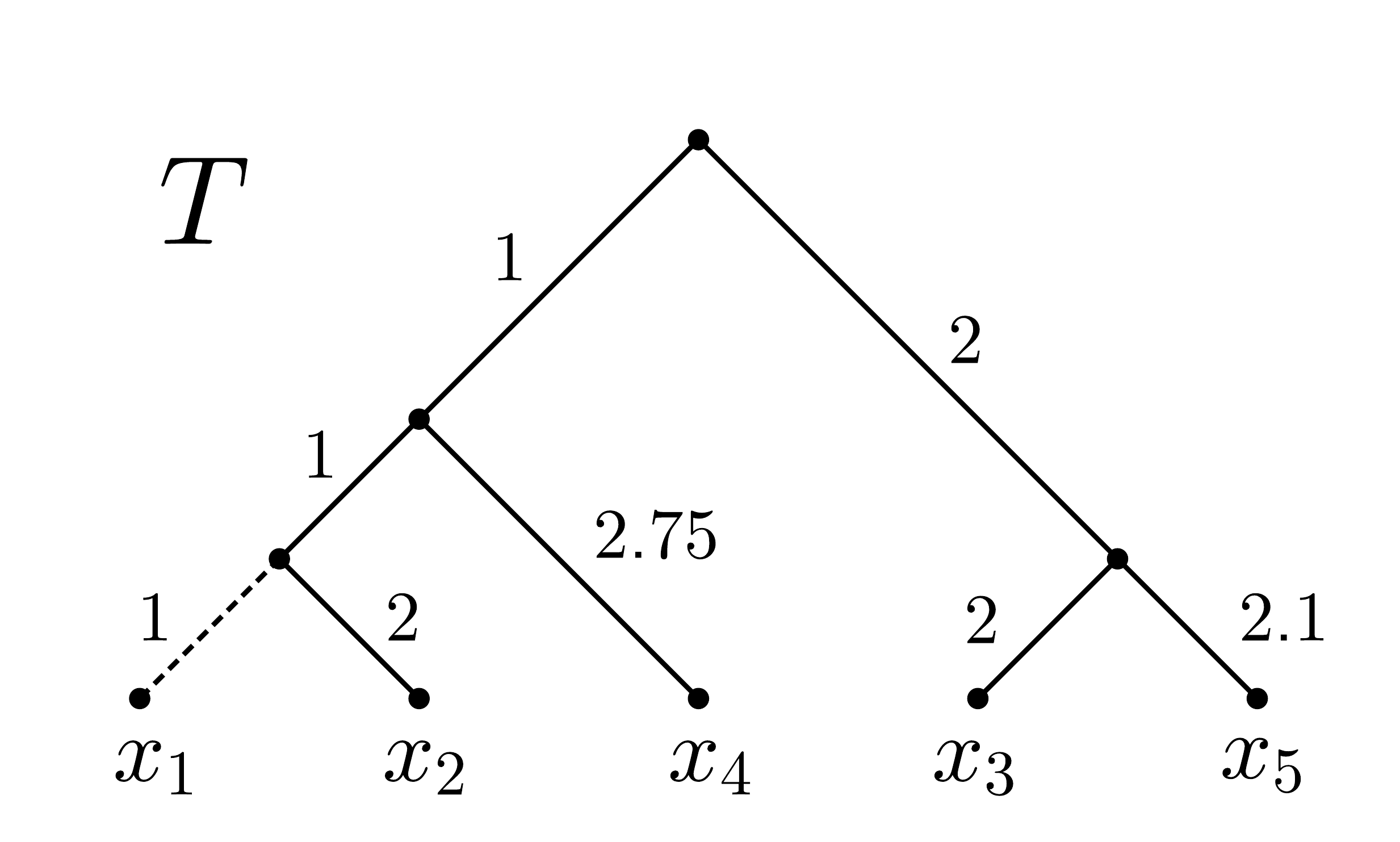}
\end{minipage}
\begin{minipage}{0.3\textwidth}
\begin{tabular}{ccc}
\toprule
$x$ & $FP_T(x)$ & $FP_{\widetilde T}(x)$ \\
\midrule
$x_1$ & $1.83$ & -- \\
$x_2$ & $2.83$ & $3.5$\\
$x_3$ & $3$ & $3$ \\
$x_4$ & $3.08$ & $3.25$\\
$x_5$ & $3.1$ & $3.1$ \\
\bottomrule
\end{tabular}
\end{minipage}
\caption{Tree $T$ is an example of a tree as described in Theorem \ref{thm_secondleast_becomes_first}. Here, $x_1$ has the lowest $FP_T$ value and $x_2$ the second lowest, but $x_2$ has the highest $FP_{\widetilde{T}}$ value in the remaining subtree $\widetilde{T}$ when $x_1$ is deleted.}
\label{fig:lowestdeletion}
\end{figure}

As the extinction of the lowest ranked species can have the effect that the formerly second least important species is ranked highest when the FP indices are re-computed, conservation efforts might need to be reallocated to focus on this species.
In the following, we analyze the extinction of any single species, not necessarily (but possibly) the lowest ranked one, in a little more depth. We first show that if only one species goes extinct and this species is distinct from the highest ranked species, say $x^\ast$, then while there can be changes in conservation priorities, we can at least bound the number of species that will receive a higher FP index and thus require more urgent conservation attention than $x^\ast$ in the remaining tree (Theorem \ref{thm_boundoneleaf}(1)). More precisely, this number is bounded by $n_a-1$, where $n_a$ is the number of leaves in the larger subtree $T_a$ of a rooted binary tree $T=(T_a,T_b)$ with $n \geq 3$ leaves. 
Note that $\lfloor \frac{n-1}{2} \rfloor \leq n_a-1 \leq n-2$. The lower bound is achieved if $T=(T_a,T_b)$ is such that $n_a$ and $n_b$ differ by at most one (i.e., $n_a = \lceil \frac{n}{2} \rceil$ and $n_b = \lfloor \frac{n}{2} \rfloor$), i.e., the number of leaves of $T$ is as evenly distributed across $T_a$ and $T_b$ as possible. The upper bound, on the other hand, is achieved if the difference in the number of leaves between $T_a$ and $T_b$ is as large as possible (i.e., $n_a=n-1$ and $n_b=1$) as for example in the case of the caterpillar tree. This means that the impact of a single species extinction on the conservation priorities of the remaining species directly depends on the shape of the underlying tree and how different its subtree sizes are. However, we then also show that the bound of $n_a-1$ species receiving a higher FP index than the formerly highest ranked species $x^\ast$ can be realized in all cases (Theorem \ref{thm_boundoneleaf}(2)). Thus, in particular if $n_a$ is large, the effect of a single species extinction might require a drastic shift in conservation attention for the remaining species. However, if $n_a$ is small, then the impact of a single species extinction might be considered less dramatic (even though it could still be the case that almost half of the species require more urgent conservation attention than $x^\ast$).

\begin{thm} \label{thm_boundoneleaf} Let $T=(T_a,T_b)$ be a rooted binary phylogenetic $X$-tree with $|X|=n\geq 3$ such that $T_a$ and $T_b$ have $n_a$ and $n_b$ leaves, respectively, where $n_a\geq n_b$. Let $f:E(T)\longrightarrow \mathbb{R}_+$ be some function that assigns all edges of $T$ positive edge lengths. Let $x^*:=\argmax\limits_{x \in X} FP_T(x) $ be the leaf of $T$ with the highest $FP_T$ value concerning the edge length assignment of $f$. Then, we have: \begin{enumerate}
    \item If a leaf $x'$ other than $x^*$ is deleted (e.g. the leaf with minimal $FP_T$ value) to derive a tree $\widetilde{T}$, we denote the number of leaves that have a higher $FP_{\widetilde{T}}$ value than $x^*$ in $\widetilde{T}$ by $\mathcal{N}$ and have that $\mathcal{N}\leq n_a-1$ (with $ \lfloor \frac{n-1}{2}\rfloor \leq n_a-1$).
\item There exists an edge length assignment $\widehat{f}$ such that this bound is achieved, i.e. $\mathcal{N}= n_a-1$, and the resulting ranking $\pi_T$ is strict.
\end{enumerate}
\end{thm}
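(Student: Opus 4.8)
The plan is to handle the two parts separately: Part~(1) rests on a monotonicity lemma describing how a single leaf deletion changes the Fair Proportion values, and Part~(2) on an explicit edge-length construction with one dominant edge in each maximal pendant subtree. For Part~(1) the first step is to determine the sign of $FP_{\widetilde T}(x)-FP_T(x)$ for a leaf $x\neq x'$. Comparing~\eqref{Def_FP} on $T$ and on $\widetilde T$, deleting $x'$ affects only those edges on $P(T;\rho,x)$ that also lie on $P(T;\rho,x')$ (their denominator $D_e$ drops by one) together with the single edge created when the parent of $x'$ is suppressed, and each such change strictly increases its summand. This yields the key fact that $FP_{\widetilde T}(x)\geq FP_T(x)$ for all $x\neq x'$, with strict inequality precisely when $x$ lies in the same maximal pendant subtree as $x'$ (equivalently, when the lowest common ancestor of $x$ and $x'$ is not $\rho$), and with equality when $x$ and $x'$ lie in different maximal pendant subtrees. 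The one exception is the ``collapse'' case $x'=b$ with $n_b=1$, where the root edge $(\rho,a)$ is deleted outright and every remaining leaf loses the same constant $\tfrac{\lambda_{(\rho,a)}}{n_a}$; the order is then unchanged and $\mathcal N=0\leq n_a-1$ trivially.

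With this lemma, Part~(1) follows by a counting argument. Since $x^*$ maximises $FP_T$, any leaf $x$ overtaking it must satisfy $FP_{\widetilde T}(x)>FP_{\widetilde T}(x^*)\geq FP_T(x^*)\geq FP_T(x)$, forcing a strict increase and hence placing $x$ in the same maximal pendant subtree $X_c$ as $x'$ (where $c\in\{a,b\}$, so $n_c\leq n_a$). Discarding the deleted leaf $x'$ itself leaves at most $n_c-1\leq n_a-1$ candidates (and at most $n_c-2$ if $x^*\in X_c$), giving $\mathcal N\leq n_a-1$. The bound $\lfloor\frac{n-1}{2}\rfloor\leq n_a-1$ is then immediate from $n_a\geq\lceil n/2\rceil$.

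For Part~(2), the idea is to force equality in the case $x^*\in X_b$, $x'\in X_a$ by making one edge dominate in each half. I would set $\lambda_{(\rho,a)}=L$ for a large $L$, $\lambda_{(\rho,b)}=\beta$, and all remaining edges equal to tiny, pairwise distinct values $\varepsilon_e$. To leading order every $X_a$ leaf then has $FP_T\approx L/n_a$ and every $X_b$ leaf has $FP_T\approx\beta/n_b$, while after deleting any $x'\in X_a$ each remaining $X_a$ leaf has $FP_{\widetilde T}\approx L/(n_a-1)$ and the $X_b$ leaves are unchanged. Choosing $\beta$ in the nonempty window $\frac{L\,n_b}{n_a}<\beta<\frac{L\,n_b}{n_a-1}$ (nonempty because $\tfrac{1}{n_a}<\tfrac{1}{n_a-1}$) makes $\beta/n_b$ exceed $L/n_a$ but fall below $L/(n_a-1)$, so that $x^*$, the largest $X_b$ leaf, is the global maximum in $T$, yet all $n_a-1$ leaves of $X_a\setminus\{x'\}$ surpass it in $\widetilde T$; hence $\mathcal N=n_a-1$. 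A generic choice of the $\varepsilon_e$ breaks all ties and makes $\pi_T$ strict.

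The main obstacle is the verification in Part~(2) that the leading-order picture survives the tiny terms: I must check that the strict window for $\beta$ persists once the $O(\varepsilon)$ contributions (including those from the edge merged at the suppressed vertex) are reinstated, and that the $\varepsilon_e$ can be chosen small enough to preserve simultaneously the strict inequalities $L/n_a<\beta/n_b<L/(n_a-1)$ and the strictness of the ranking. This is a routine but slightly delicate ``dominant term plus perturbation'' estimate; the conceptual heart of the argument is merely the existence of the window, which reflects the gap between $1/n_a$ and $1/(n_a-1)$ opened up by removing one descendant from the shared edge $(\rho,a)$.
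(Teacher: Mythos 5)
Your proposal is correct and follows essentially the same route as the paper: Part (1) is exactly the paper's argument (deleting $x'$ leaves the FP values in the other maximal pendant subtree unchanged and strictly increases those in the affected subtree, so only the at most $n_a-1$ surviving leaves of the affected subtree can overtake $x^*$), and Part (2) is the paper's construction in perturbative form — the paper likewise makes the two root edges dominant (by scaling all subtree edge lengths by small explicit factors $s_a,s_b$) and places $\lambda_b/n_b$ at the midpoint of the window $\left(\lambda_a/n_a,\ \lambda_a/(n_a-1)\right)$, which is precisely your window for $\beta$, with the $O(\varepsilon)$ control you flag as ``routine'' carried out via explicit choices of constants. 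If anything, you are slightly more careful than the paper on the degenerate case $n_b=1$, $x'=b$, where the deletion collapses the root edge and the paper's supporting lemmas do not literally apply.
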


The proof of this theorem can again be found in the appendix. For the first part we use the fact that if $T=(T_a,T_b)$ is a tree with $n \geq 3$ leaves and some (but not all) leaves from only one of $T_a$ and $T_b$, say $T_a$, are deleted, the FP index for all taxa in $T_b$ remains the same (Lemma \ref{lem_nonaffectedsubtree} in the appendix), whereas the FP index for all taxa in $T_a$ strictly increases (Lemma \ref{lem_affectedsubtree} in the appendix). The second part of the proof is similar to the proof of Theorem \ref{thm_cherries}. In particular, it provides a constructive way to find an edge length assignment with the claimed properties. 

An example to illustrate Theorem \ref{thm_boundoneleaf} is given in Figure \ref{fig:lowest_higher}. Here, the taxon with the highest FP index is taxon $x_6$. If we now delete taxon $x_1$, which has the lowest FP index in $T$, we have that $n_a-1=3$ leaves receive a higher FP index than $x_6$ in the resulting tree $\widetilde{T}$.

\begin{figure}[htbp]
\flushleft
\begin{minipage}{0.6\textwidth}
\includegraphics[scale=0.3]{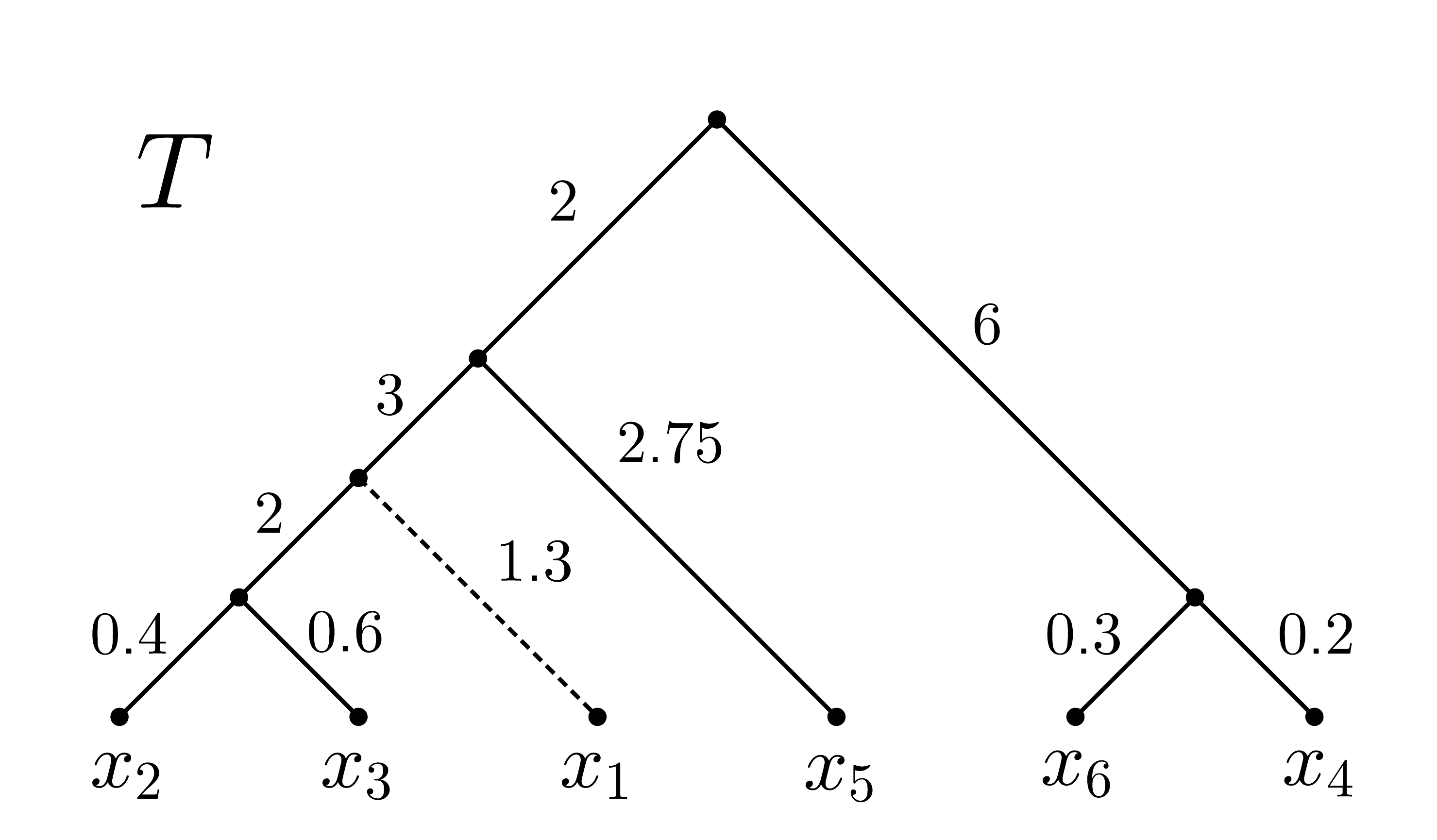}
\end{minipage}
\begin{minipage}{0.15\textwidth}
\begin{tabular}{ccc}
\toprule
$x$ & $FP_T(x)$ & $FP_{\widetilde T}(x)$ \\
\midrule
$x_1$ & $2.8$ & -- \\
$x_2$ & $2.9$ & $3.57$ \\
$x_3$ & $3.1$ & $3.77$  \\
$x_4$ & $3.2$ & $3.2$ \\
$x_5$ & $3.25$ & $3.42$\\
$x_6$ & $3.3$ & $3.3$ \\
\bottomrule
\end{tabular}
\end{minipage}
\caption{Tree $T$ is an example for a tree as described in Theorem \ref{thm_boundoneleaf}. Here, $x^\ast = x_6$ has the highest FP index, but if $x'=x_1$ (the leaf with the lowest FP index) is deleted, $n_a-1=3$ leaves (namely, $x_2$, $x_3$, and $x_5$) have a higher FP index than $x_6$ in the resulting tree $\widetilde{T}$.}
\label{fig:lowest_higher}
\end{figure}

\subsection{The impact of the edge lengths} \label{Subsec_Ultrametric}
Note that the proofs of the preceding theorems rely on a careful choice of edge lengths. It is thus a natural question to analyze how restrictions on the edge lengths influence the results. For instance, if we assume a molecular clock condition, i.e. if we restrict the analysis to ultrametric trees where all leaves have the same distance to the root, what are the worst-case scenarios in this setting? In the special case of caterpillar trees, a molecular clock assumption is beneficial in the sense that the extinction of one or more leaves does not change the ranking order of the remaining leaves. 

\begin{prop} \label{Prop_caterpillar_ultrametric}
Let $T$ be a rooted binary ultrametric caterpillar tree on $X$ with $|X|=n \geq 3$, and let $X' \subset X$ be a subset of the leaves. Let $\widetilde{T}$ be the induced subtree of $T$ restricted to the leaves in $\widetilde{X} \coloneqq X \setminus X'$. Then, $FP_T(x_i) \geq FP_T(x_j)$ implies $FP_{\widetilde{T}}(x_i) \geq FP_{\widetilde{T}}(x_j)$ for all $x_i,x_j \in \widetilde{X}$.
\end{prop}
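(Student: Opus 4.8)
The plan is to reduce everything to an explicit description of the FP ranking of an ultrametric caterpillar and then observe that this description is inherited by every induced subtree. First I would fix the spine of $T$: write the internal vertices as a directed path $\rho = u_1 \to u_2 \to \cdots \to u_{n-1}$, let $\ell_i$ be the leaf branching off $u_i$ for $1 \le i \le n-2$, and let $\ell_{n-1}, \ell_n$ be the two leaves of the unique cherry (hanging off $u_{n-1}$). Writing $d_i$ for the strictly increasing distance from $\rho$ to $u_i$ and $H$ for the common root-to-leaf distance, the spine edge from $u_j$ to $u_{j+1}$ has exactly $n-j$ descendant leaves while each pendant edge is descended by a single leaf, so \eqref{Def_FP} gives $FP_T(\ell_i) = \sum_{j=1}^{i-1} \frac{d_{j+1}-d_j}{n-j} + (H - d_i)$ for $i \le n-2$, and $FP_T(\ell_{n-1}) = FP_T(\ell_n) = \sum_{j=1}^{n-2}\frac{d_{j+1}-d_j}{n-j} + (H - d_{n-1})$.

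From this I would prove the key lemma: the FP ranking of an ultrametric caterpillar is exactly its spine order, with the two cherry leaves tied at the bottom. A telescoping computation gives $FP_T(\ell_{i+1}) - FP_T(\ell_i) = (d_{i+1}-d_i)\left(\frac{1}{n-i} - 1\right) < 0$ for all $1 \le i \le n-2$, using $d_{i+1} > d_i$ (strictly positive edge lengths) and $n-i \ge 2$. Hence $FP_T(\ell_1) > FP_T(\ell_2) > \cdots > FP_T(\ell_{n-2}) > FP_T(\ell_{n-1}) = FP_T(\ell_n)$. Defining the \emph{spine position} $p(\ell_i) := i$ for $i \le n-2$ and $p(\ell_{n-1}) := p(\ell_n) := n-1$, this reads precisely $FP_T(\ell_a) \ge FP_T(\ell_b) \iff p(\ell_a) \le p(\ell_b)$.

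Next I would record two structural facts about the induced subtree $\widetilde T$ on $\widetilde X$. Deleting leaves and suppressing degree-$2$ vertices preserves all root-to-leaf path lengths, so $\widetilde T$ is again ultrametric of the same height $H$; and since every internal vertex of a caterpillar has at most one internal child, suppression can only shorten the spine without ever creating a new branching, so $\widetilde T$ is again a caterpillar. Consequently the surviving leaves branch off the new spine in the restriction of their original order, and the key lemma applies verbatim to $\widetilde T$: with $\widetilde p$ the spine position in $\widetilde T$, we have $FP_{\widetilde T}(\ell_a) \ge FP_{\widetilde T}(\ell_b) \iff \widetilde p(\ell_a) \le \widetilde p(\ell_b)$.

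It then suffices to check that the spine-position preorder is preserved, i.e. $p(\ell_a) \le p(\ell_b) \Rightarrow \widetilde p(\ell_a) \le \widetilde p(\ell_b)$ for surviving leaves. Because deletion and suppression never swap two leaves along the spine, $\widetilde p$ is order-preserving on survivors; the only new phenomenon is that the two bottommost surviving leaves become the cherry of $\widetilde T$ and are therefore tied. This migrating cherry is exactly the step to watch, and I expect it to be the only real subtlety: a strict inequality $FP_T(\ell_a) > FP_T(\ell_b)$ may degenerate into the tie $FP_{\widetilde T}(\ell_a) = FP_{\widetilde T}(\ell_b)$, but since the conclusion only asserts ``$\ge$'' this is harmless. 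What must be excluded is a genuine reversal or the breaking of an existing tie; no reversal can occur because the spine order is preserved, and the only ties in $T$ arise from the original cherry, which — if both its leaves survive — remains the bottommost pair and hence stays the cherry of $\widetilde T$, so those ties are preserved. The remaining cases, including the trivial situations $|\widetilde X| \le 2$, then follow immediately.
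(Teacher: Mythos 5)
Your proof is correct. Its first half --- the explicit FP formulas and the telescoping computation showing that the FP ranking of an ultrametric caterpillar is exactly its spine (depth) order with the two cherry leaves tied at the bottom --- is the same key computation as in the paper's proof, just parametrized by root-to-vertex distances $d_i$ and the height $H$ rather than by the individual edge lengths $\lambda_k$. Where you genuinely diverge is the treatment of the deletion step: the paper deletes a \emph{single} leaf, recomputes all FP indices of the induced subtree explicitly (three separate formulas according to the position of a surviving leaf relative to the deleted one), verifies the order by comparing these formulas, and then handles a general $X'$ by deleting its elements one at a time and repeating the argument; you instead observe that the class of ultrametric caterpillars is closed under taking induced subtrees and that the spine preorder of the survivors is preserved, so your key lemma applied once to $\widetilde{T}$ disposes of the entire set $X'$ simultaneously, with no recomputation and no sequential induction. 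Your route is shorter and isolates the structural reason the result holds (the ranking is a function of spine position, which deletion cannot scramble, and the only tie --- the cherry --- migrates but is never broken); the paper's route is more computational but completely explicit and self-verifying. One small correction: your parenthetical claim that $\widetilde{T}$ has ``the same height $H$'' fails when $X'$ contains the leaves nearest the root --- under the paper's definition of induced subtree, the root edge is then deleted and the tree is re-rooted at a lower vertex, so the height shrinks --- but this is harmless, since your key lemma needs only ultrametricity and the caterpillar shape, both of which are indeed preserved.
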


We provide a proof of this proposition in the appendix. Intuitively, in an ultrametric caterpillar tree, the fewer edges separate a leaf from the root (i.e., the smaller the so-called \emph{depth} of a leaf), the higher its FP index (in particular, the two leaves in the cherry of a caterpillar tree have the lowest FP index, and the leaf that is adjacent to the root has the highest FP index). Now, if one or more leaves are deleted from a caterpillar tree, the resulting tree is again a caterpillar tree, for which this property still holds. An example is given in Figure \ref{fig:tcat}.

\begin{figure}[htbp]
\flushleft
\begin{minipage}{0.6\textwidth}
\includegraphics[scale=0.3]{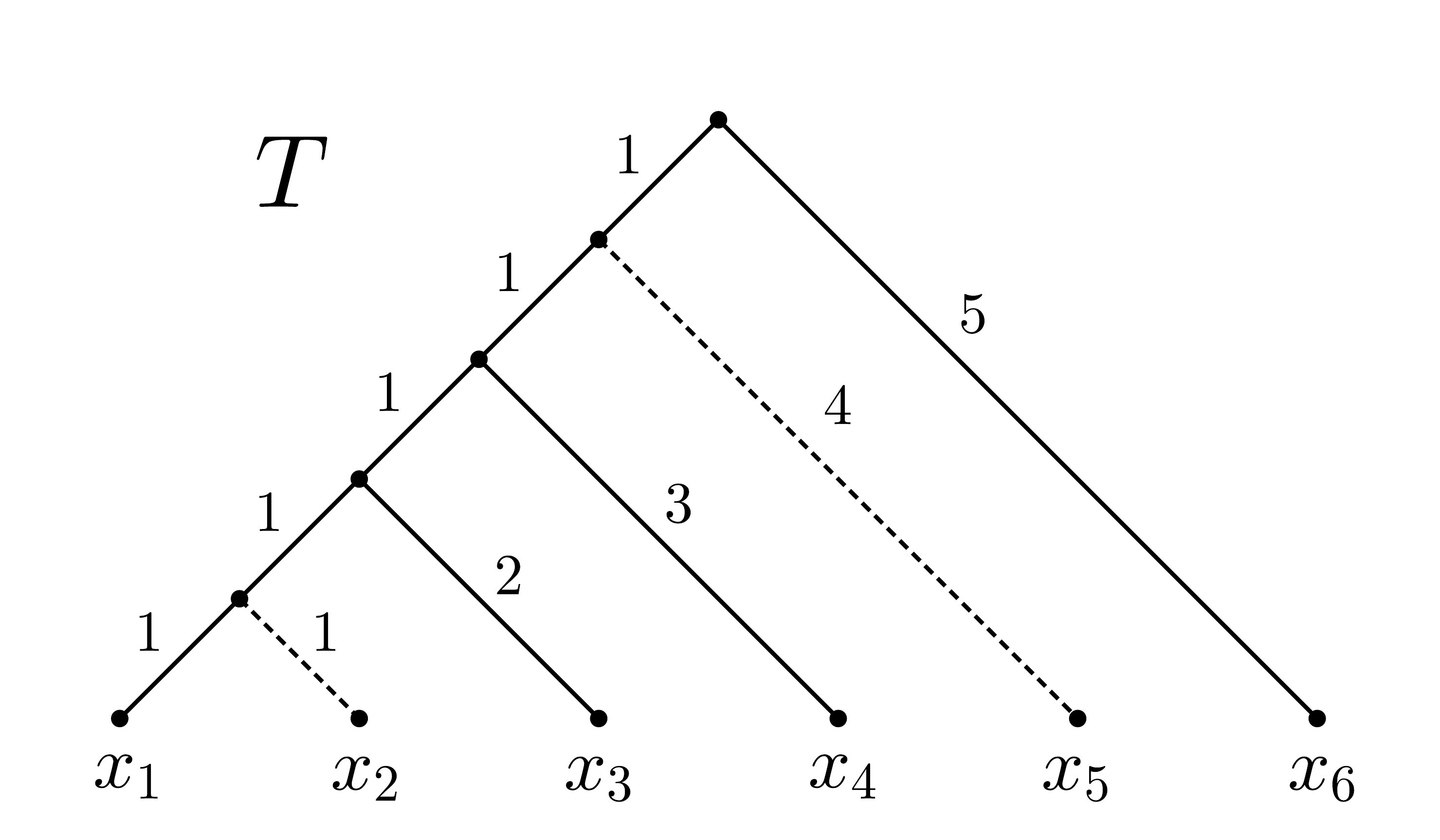}
\end{minipage}
\begin{minipage}{0.15\textwidth}
\begin{tabular}{ccc}
\toprule
$x$ & $FP_T(x)$ & $FP_{\widetilde T}(x)$ \\
\midrule
$x_1$ & $2.28$ & $3.17$\\
$x_2$ & $2.28$ & -- \\
$x_3$ & $2.78$ & $3.17$ \\
$x_4$ & $3.45$ & $3.67$ \\
$x_5$ & $4.2$ & -- \\
$x_6$ & $5$ & $5$ \\
\bottomrule
\end{tabular}
\end{minipage}
\caption{Ultrametric caterpillar tree $T$ with $\pi_T=(x_6,x_5,x_4,x_3,x_2,x_1)$. If leaves $x_2$ and $x_4$ are deleted, the resulting tree $\widetilde{T}$ is again an ultrametric caterpillar tree and we have $\pi_{\widetilde{T}} = (x_6,x_4,x_3,x_1)$. In particular, the remaining leaves appear in the same order in $\pi_{\widetilde{T}}$ as in $\pi_T$, i.e. the ranking is not changed by leaf deletions.}
\label{fig:tcat}
\end{figure}

So, in case of ultrametric caterpillar trees, the extinction of species does not influence the ranking order of the remaining leaves. However, as Figure \ref{Fig_UltrametricTree} shows, the assumption of a molecular clock does not always imply that the ranking order is unaffected by leaf deletions. This also becomes evident in our simulation study below. 
A more in-depth analysis of the effects of leaf deletions on the rankings induced by the FP index in the case of ultrametric trees is thus an interesting direction for future research.

\begin{figure}[htbp]
\centering
\begin{minipage}{0.5\textwidth}
\includegraphics[scale=0.3]{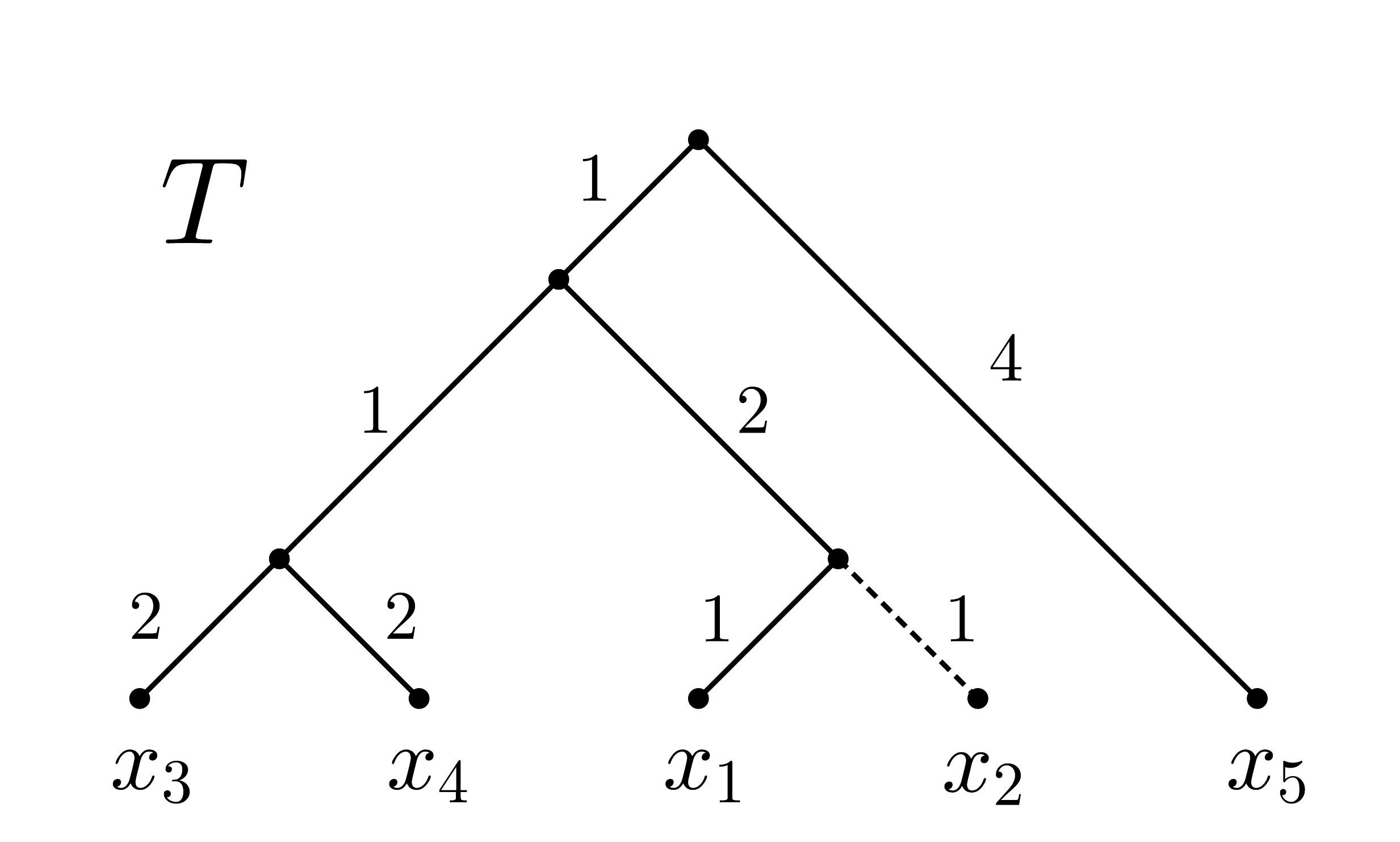}
\end{minipage}
\begin{minipage}{0.25\textwidth}
\begin{tabular}{ccc}
\toprule
$x$ & $FP_T(x)$ & $FP_{\widetilde T}(x)$ \\
\midrule
$x_1$ & $2.25$ & $3.33$\\
$x_2$ & $2.25$ &  -- \\
$x_3$ & $2.75$ & $2.83$ \\
$x_4$ & $2.75$ & $2.83$ \\
$x_5$ &  $4$ & $4$ \\
\bottomrule
\end{tabular}
\end{minipage}
\caption{Ultrametric tree $T$ with $FP_T(x_1) = FP_T(x_2)=2.25 < FP_T(x_3) = FP_T(x_4) = 2.75 < FP_T(x_5)=4$. If we now construct a tree $\widetilde{T}$ by deleting leaf $x_2$, we get $FP_{\widetilde{T}}(x_3) = FP_{\widetilde{T}}(x_4) = 2.83 < FP_{\widetilde{T}}(x_1) = 3.33 < FP_{\widetilde{T}}(x_5) = 4$. In particular, $x_3$ and $x_4$ are ranked higher than $x_1$ in $T$, while they are ranked lower than $x_1$ in $\widetilde{T}$.}
\label{Fig_UltrametricTree}
\end{figure}

\subsection{Data analysis}\label{subsec:data}

It is conceivable that the changes in FP rankings due to extinctions that we have studied here are just ``theoretical'', and that the problem is not significant with empirical tree data.  In order to test the extent of these problems with empirical data, we accessed the free TreeBase database (\citet{treebase1,treebase2}) on May 14th, 2021 and downloaded all 19,488 trees with up to 100 taxa. We then filtered these trees as follows: We omitted all trees which are unrooted, non-binary or which do not have branch lengths provided for each edge. We also omitted trees for which all branch lengths are 0. The remaining tree set contained 575 trees. For all these trees, we performed the following two analyses with the computer algebra system Mathematica (\citet{mathematica}).

\begin{enumerate}
\item We calculated the FP values for all taxa, subsequently detected the taxon with the lowest FP value, and deleted it from the list of FP values. The resulting list of FP values was saved as $originalList_1$. Then, we deleted the taxon with the lowest FP value and its pending branch also from the original tree. We then re-calculated the FP values for the resulting tree. The FP values of this tree were saved as $newList_1$. Then we calculated Kendall's $\tau$: $\tau(originalList_1,newList_1)$ and saved it in the list $kendall_1$. The results of this first study are shown in Figure \ref{fig:treebasestudy} (black boxplots).

\begin{figure}[htbp]
\centering
\includegraphics[scale=1]{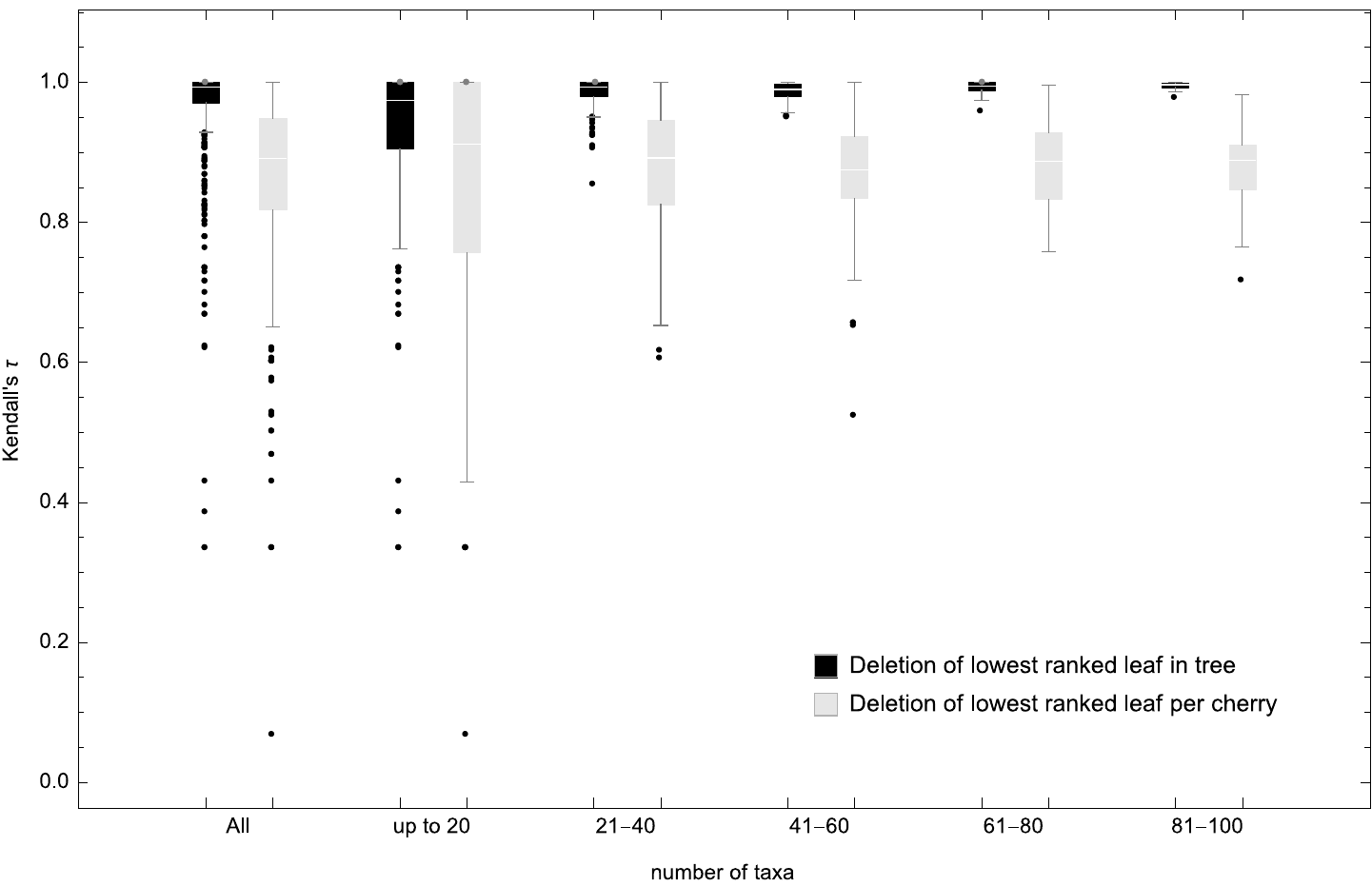}
\caption{The resulting values of Kendall's $\tau$ when the leaf with the smallest FP value gets deleted and all FP values get re-calculated (black  boxplots) and the resulting values of Kendall's $\tau$ when of each cherry the leaf with the smallest FP value gets deleted and all FP values get re-calculated (gray boxplots). The first boxplots contain all trees with up to 100 taxa from TreeBase, whereas the other boxplots are sorted by the numbers of taxa the respective trees contain.}
\label{fig:treebasestudy}
\end{figure}

\item We calculated the FP values for all taxa and subsequently detected all cherries. Then, for each cherry we deleted the taxon with the lowest FP value within the cherry from the list of FP values. The resulting list of FP values was saved as $originalList_2$.
Afterwards, we also deleted these taxa (the lowest ranked one of each cherry) and their pending branch from the original tree and re-calculated the FP values for the resulting tree. The FP values of this tree were saved as $newList_2$. Then we calculated Kendall's tau: $\tau(originalList_2,newList_2)$ and saved it in the list $kendall_2$. The results of this study are shown in Figure \ref{fig:treebasestudy} (gray boxplots).
\end{enumerate}

In both studies, it can be seen that the effects of taxon deletion tend to be less extreme if the tree has more taxa. However, this was to be expected, because, say, a single rank swap between two entries would have a larger effect on Kendall's $\tau$ of a tree with few taxa than on Kendall's $\tau$ of a tree with many taxa.

Recall that Kendall's $\tau$ is 1 precisely if the compared rankings are identical, whereas  Kendall's $\tau$ is 0 if the compared rankings are uncorrelated. Interestingly, for small trees with up to 20 taxa, some outliers are actually closer to 0 than to 1, i.e. their rankings change significantly. One such example is tree \textsf{Tr66501} from TreeBase, which is depicted in Figure~\ref{fig:Tr66501}. In this tree, if you delete taxon \textsf{Arabidopsis\_thaliana\_ANAC019\_At1g52890\_1}, which is the one with the lowest FP value and, as the tree only has one cherry, also the only taxon that gets deleted in both  studies, taxon \textsf{Arabidopsis\_thaliana\_ANAC055\_At3g15500\_1}, which was first ranked lower than taxon \textsf{Arabidopsis\_thaliana\_ANAC072\_At4g27410\_1}, is now ranked the highest. Kendall's $\tau$ in this case gives a value of 0.333, which is at the same time the minimum value observed in the first study.

\begin{figure}[htbp]
\centering
\includegraphics[scale=0.25]{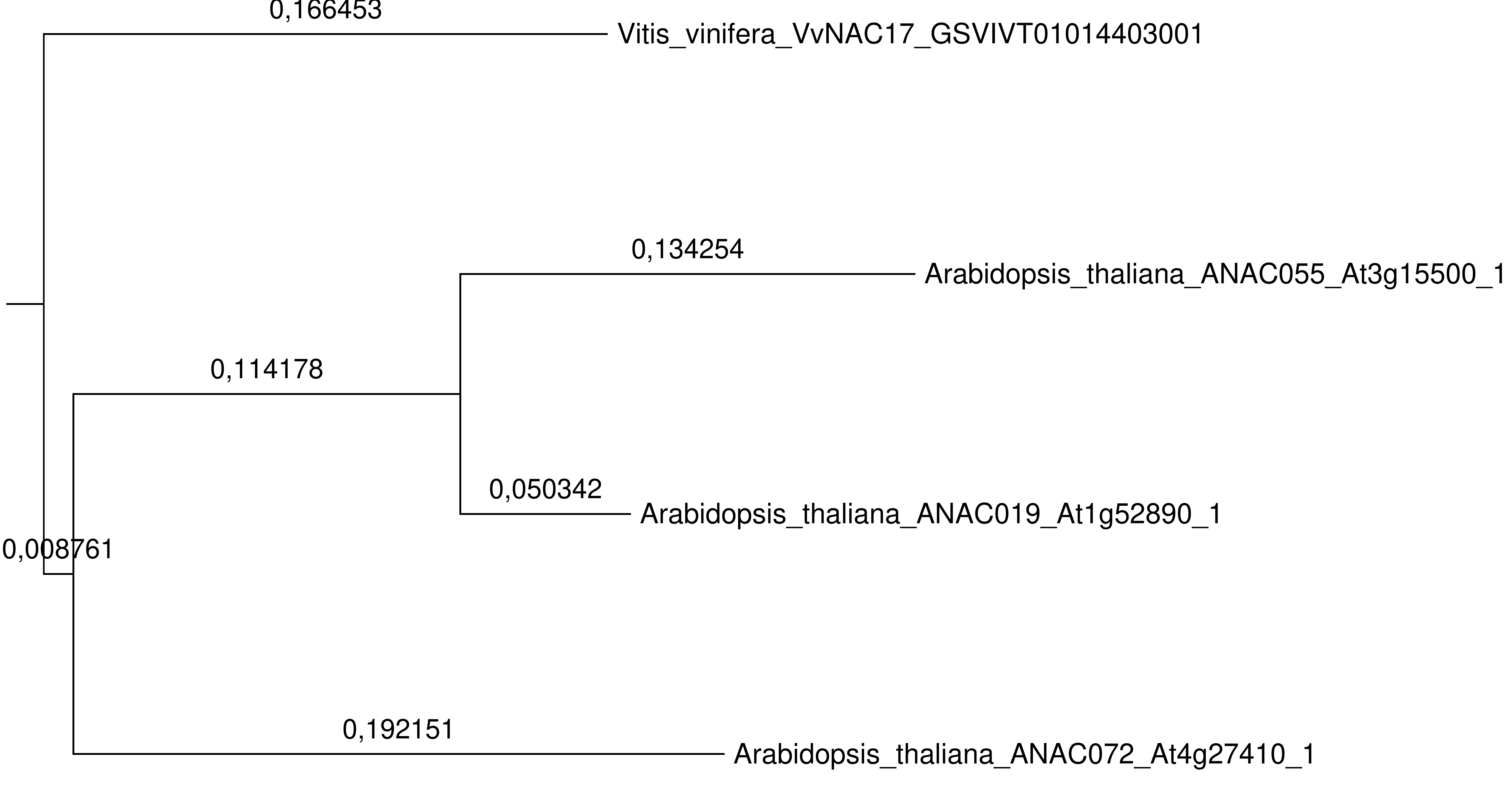}\\[2em]
\begin{tabular}{lcc}
\toprule
$x$ & $FP_{\mathsf{Tr66501}}(x)$ & $FP_{\widetilde{\mathsf{Tr}}\mathsf{66501}}(x)$ \\
\midrule
\textsf{\small Arabidopsis\_thaliana\_ANAC019\_At1g52890\_1} & 0.110351 & -- \\
\textsf{\small Vitis\_vinifera\_VvNAC17\_GSVIVT01014403001} & 0.116453 & 0.116453 \\
\textsf{\small Arabidopsis\_thaliana\_ANAC055\_At3g15500\_1} & 0.194263  &  0.252813 \\
\textsf{\small Arabidopsis\_thaliana\_ANAC072\_At4g27410\_1 } & 0.195071 & 0.196532 \\
\bottomrule
\end{tabular}
\caption{Tree \textsf{Tr66501} from TreeBase. When taxon \textsf{Arabidopsis\_thaliana\_ANAC019\_At1g52890\_1} gets deleted, this changes the ranking of the remaining taxa severely.}
\label{fig:Tr66501}
\end{figure}

Note that TreeBase contains trees on a huge variety of species, not all of which are of interest for species conservation programs. However, our studies clearly show that taxon deletion from trees, which happens when species go extinct, can have dramatic effects on the FP index as a ranking criterion. As the comparison between the boxplots in Figure  \ref{fig:treebasestudy}  shows, this effect is generally larger the more species go extinct. We included only rooted trees in our studies which were binary and for which all branch lengths were given, but we suspect that similar effects can be seen for non-binary trees and trees with partial branch lengths as well.

\subsection{Simulations}\label{subsec:simulations}
Proposition \ref{Prop_caterpillar_ultrametric} motivated us to analyze the impact of branch lengths in more depth. It is well-known that the so-called Yule or Yule-Harding model (\citet{yulehartigan}), a pure birth model, leads to ultrametric (\enquote{clocklike}) trees. Proposition \ref{Prop_caterpillar_ultrametric} shows that at least ultrametric caterpillar trees cannot suffer from FP  rank swaps when leaves go extinct, which might suggest that having a low death rate (in the Yule model, the death rate is 0) prevents this problem. However, it turns out that actually the opposite is correct, as the following simulations show.

We used the computer algebra system Mathematica (\citet{mathematica}) to perform two studies. For these studies, we first simulated three sets of trees which were subsequently used for both studies. 

For all tree sets, we used $\lambda=10$ as a birth rate. For the first tree set, we used death rate $\mu=0$ (which corresponds to the Yule model and produces ultrametric trees), for the second tree set, we used $\mu=5$ and for the third tree set, we used $\mu=\lambda=10$. 

Each of the three tree sets contains 300 simulated trees, namely 100 trees each with 10 taxa, 30 taxa and 50 taxa, respectively. 

For all three tree sets, we then performed the same two studies as for the data set presented in the previous section, i.e. we first deleted the lowest ranked leaf from each tree and compared the resulting ranking with the ranking induced by the corresponding subtree of the original tree using Kendall's $\tau$. 
We then repeated this procedure, but deleted the lowest ranked leaf from each cherry (instead of only the overall lowest ranked leaf). 
The results of the two studies are presented in Figures \ref{fig:sim1} and  \ref{fig:sim2}. Note that both studies show the same overall trends as our analysis of the TreeBase data; namely that the more taxa a tree has, the lower the impact of a few rank swaps; and more leaf deletions tend to cause more rank swaps than a single leaf deletion.

However, our simulations show more than that: They show that the higher the death rate of branches in the simulated trees, i.e. the more non-ultrametric the tree is (and thus the more diverse the branch lengths are), the smaller the damage caused by leaf deletions. We suggest a possible explanation for this observation and discuss it more in-depth in the discussion section.

\begin{figure}
    \centering
    \includegraphics[scale=0.8]{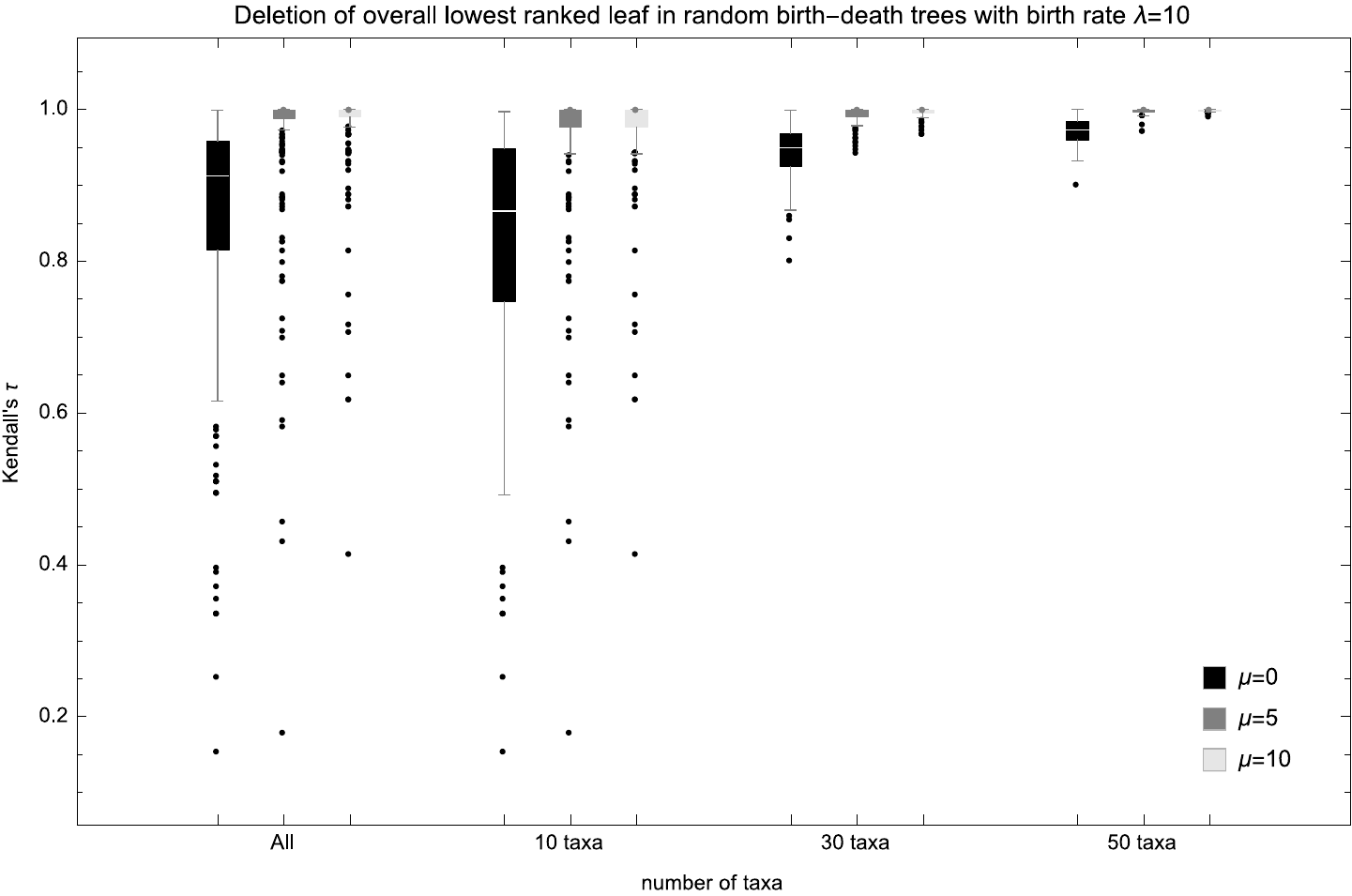}
    \caption{The resulting values of Kendall's $\tau$ when the leaf with the smallest FP value gets deleted and all FP values get re-calculated. The first boxplot contains all trees with up to 50 taxa that were simulated (with constant birth rate $\lambda=10$ and death rates $\mu=0$, $\mu=5$ and $\mu=10$, respectively), whereas the other boxplots are sorted by the numbers of taxa the respective trees contain.}
    \label{fig:sim1}
\end{figure}

\begin{figure}
    \centering
    \includegraphics[scale=0.8]{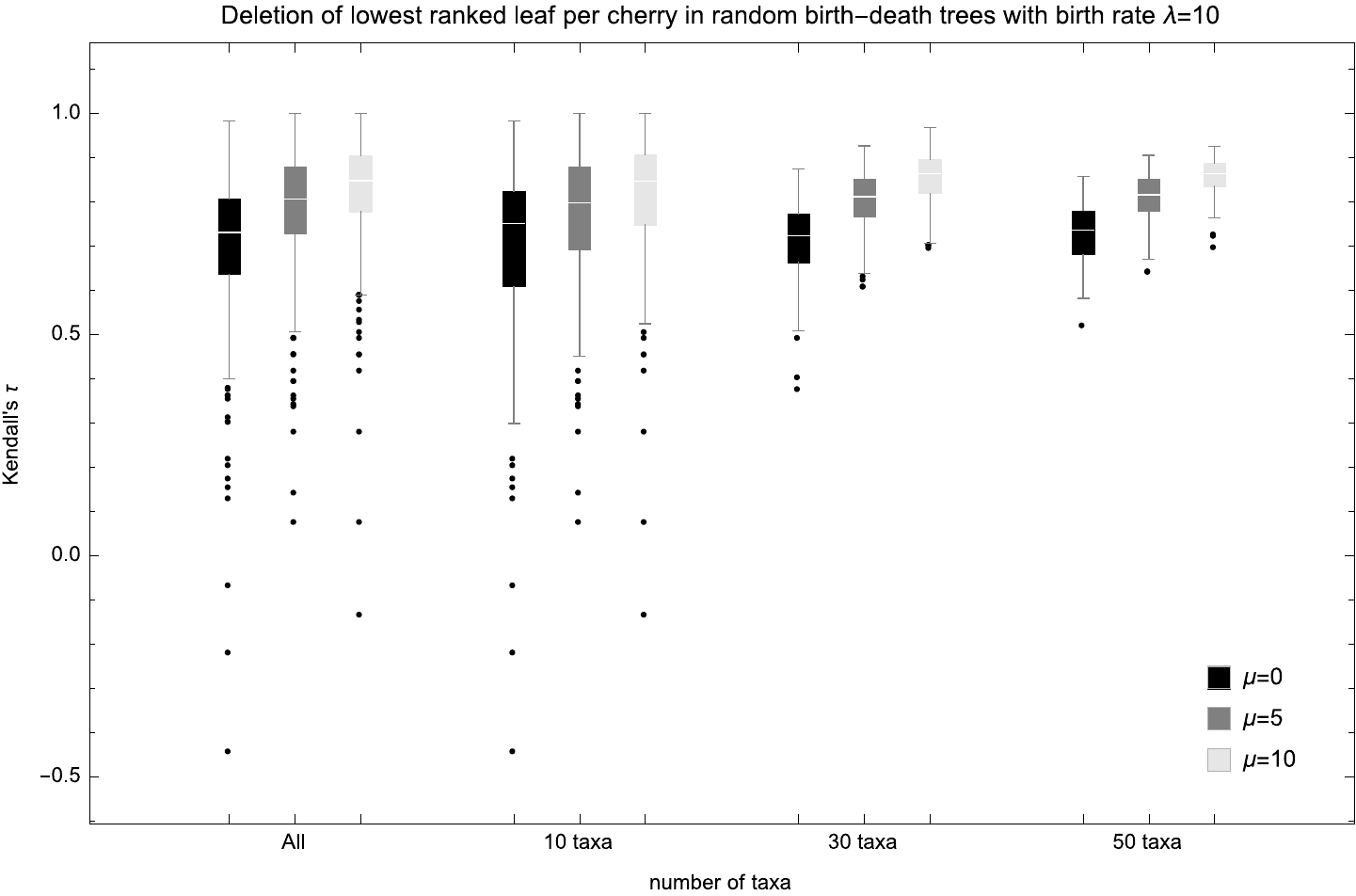}
    \caption{The resulting values of Kendall's $\tau$ when of each cherry the leaf with the smallest FP value gets deleted and all FP values get re-calculated. The first boxplot contains all trees with up to 50 taxa that were simulated (with constant birth rate $\lambda=10$ and death rates $\mu=0$, $\mu=5$ and $\mu=10$, respectively), whereas the other boxplots are sorted by the numbers of taxa the respective trees contain.}
    \label{fig:sim2}
\end{figure}

\section{Discussion}
\label{Sec_Discussion} 
The Fair Proportion index is a popular phylogenetic diversity index used to prioritize species for conservation. However, even if a species receives active 
conservation attention, there is still a risk that it goes extinct. The aim of the present manuscript was thus to analyze the effects of species extinction on the prioritization order obtained from the FP index. More specifically, we analyzed the extent to which the ranking order may change when some species go extinct and the FP index is re-computed for the remaining taxa. On the one hand, we showed that the extinction of one leaf per cherry might completely reverse the ranking. On the other hand, we proved that even the extinction of only the lowest ranked species in a tree can cause significant changes in the prioritization order. On the positive side, in the case of a single species extinction not involving the highest ranked species, we also showed that the number of species that require more urgent conservation attention than the formerly most important species in the remaining tree can at least be bounded from above. Here, we saw that the effects of a single species extinction are less dramatic if the underlying tree is ``balanced'' in the sense that its two maximal pending subtrees are of similar sizes, whereas the impact is more severe if these subtree sizes are very different. 
Note that the balance of a tree also played a major role when we showed that the extinction of one leaf per cherry can completely reverse the FP ranking as the number of cherries is in fact also often used to measure the balance of a tree (cf. \cite{McKenzie2000,Kersting2021}).
Investigating the impact of the shape of a tree, or more precisely its balance, on the FP index when species go extinct is thus an interesting direction for future research.

Moreover, the present results rely on a particular choice of edge lengths and do not immediately carry over to situations where restrictions on the edge lengths are in place. While we showed in Proposition \ref{Prop_caterpillar_ultrametric} that the FP ranking on ultrametric caterpillar trees is not affected by species extinction, we also saw that this is not the case for general ultrametric trees. On the contrary, our simulation results obtained subsequently showed that the more ultrametric a tree is, the more sensitive to leaf deletions it tends to be.
This can probably be explained by the fact that the FP indices of leaves in trees with high $\mu$-value (i.e. a high death rate) seem to have a higher variation than in ultrametric trees: In our simulations, the median variance for birth-death trees with $\lambda=10$ and $\mu=0$ was 0.00139859, while the median variance for birth-death trees with the same birth rate $\lambda$ but death rates $\mu=5$ and $\mu=10$ were higher, namely 0.00176227 and 0.00183703, respectively. This higher variation of FP indices amongst the leaves of these trees might imply that not so many of these leaves can simply swap their ranks because of extinctions. If there is less variation in the FP indices, i.e., the FP indices are more similar and closer together, their ranks can more easily be swapped when relatively few changes in the tree occur.
A second immediate direction for future research would thus be to analyze the effects of species extinction on the prioritization order obtained from the FP index when \emph{ultrametric} trees are considered more in-depth.

Another interesting direction for future research would be to analyze how other phylogenetic diversity indices, for example the so-called ``Equal-Splits index'' (\citet{Redding2003, Redding2006}), or prioritization indices based on other aspects of biodiversity such as ``feature diversity'' or ``functional diversity'' are affected by species extinctions and whether they are more ``robust'' than the FP index. 

We remark, however, that while we showed that the prioritization order obtained from the FP index might radically change when species go extinct, we do not suggest to disregard the FP index or other phylogenetic diversity indices completely. Our aim was merely to draw attention to these potential ``conservation regrets'' (i.e. cases where the initial choice of conservation priority might need substantial readjustment after events of species extinction). A way forward in this regard might be to perform a sensitivity analysis prior to conservation decisions to assess the impact of extinction events, e.g. of the lowest ranked species, on the prioritization order, and to adjust conservation attention accordingly. Developing a quantitative measure or test to assess the ``robustness'' of a phylogenetic diversity ranking for different scenarios (e.g., for the extinction of the lowest ranked species or the extinction of a highly ranked species despite conservation efforts) is thus another important direction for future research.

\section{Acknowledgements}
MF was supported by the joint research project DIG-IT!
funded by the European Social Fund (ESF), reference: ESF/14-BM-A55-
0017/19, and the Ministry of Education, Science and Culture of Mecklenburg-Vorpommerania, Germany. 
KW was supported by The Ohio State University's President's Postdoctoral Scholars Program. 
Last but not least, the authors wish to thank Sophie Kersting for checking the nexus files containing the simulated data for syntax correctness.

\section{Data and Supplementary Material}
The data underlying this article are available at: \url{http://mareikefischer.de/SupplementaryMaterial/FP_Index.zip}

\bibliographystyle{abbrvnat}
\bibliography{references.bib}

\section*{Appendix: Mathematical proofs}

\setcounter{thm}{0}
\begin{thm}
Let $T$ be a phylogenetic $X$-tree with $|X|\geq 3$. Let $ \pi_T$ be a strict and reversible ranking for $T$ with respect to $X'\subset X$ and induced subtree $\widetilde{T}$ on taxon set $\widetilde{X}=X\setminus X'$. Let $c=[x_i,x_j]$ be a cherry of $T$. Then, $X'$ contains at least one of the elements $x_i$, $x_j$, i.e. $|X'\cap \{x_i,x_j\}|\geq 1$. 
\end{thm}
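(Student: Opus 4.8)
The plan is to argue by contradiction: I assume that \emph{neither} leaf of the cherry is deleted, i.e. $x_i,x_j\in\widetilde{X}$, and show that this forces $x_i$ and $x_j$ to keep their relative order when passing from $T$ to $\widetilde{T}$, contradicting reversibility. Since reversibility in particular reverses the order of every retained pair, exhibiting a single retained pair whose order is preserved suffices.

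First I would exploit the cherry structure in $T$. Because $x_i$ and $x_j$ share a parent, say $p$, the two root-to-leaf paths $P(T;\rho,x_i)$ and $P(T;\rho,x_j)$ coincide on every edge except the two pendant edges $e_i=(p,x_i)$ and $e_j=(p,x_j)$. Each pendant edge has exactly one descending leaf, so $D_{e_i}=D_{e_j}=1$, and the defining sum~\eqref{Def_FP} telescopes over the common portion of the two paths to give $FP_T(x_i)-FP_T(x_j)=\lambda_{e_i}-\lambda_{e_j}$.

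The key step is to verify that this difference is unchanged in $\widetilde{T}$. Since both $x_i$ and $x_j$ are retained, their common parent $p$ still has out-degree $2$ in the minimal subtree connecting $\widetilde{X}$, hence $p$ is \emph{not} a degree-two vertex and is not suppressed when forming $T_{\widetilde{X}}$. Consequently the pendant edges to $x_i$ and $x_j$ are not merged with any other edge and retain their lengths $\lambda_{e_i}$ and $\lambda_{e_j}$, and each still has exactly one descending leaf in $\widetilde{T}$. Thus $[x_i,x_j]$ remains a cherry in $\widetilde{T}$, and the same telescoping argument (now over the common portion of the paths in $\widetilde{T}$, whose edges and $D$-values need not match those of $T$ but cancel all the same) yields $FP_{\widetilde{T}}(x_i)-FP_{\widetilde{T}}(x_j)=\lambda_{e_i}-\lambda_{e_j}=FP_T(x_i)-FP_T(x_j)$.

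Finally I would close the argument using strictness and reversibility. Strictness of $\pi_T$ forces $FP_T(x_i)\neq FP_T(x_j)$, hence $\lambda_{e_i}\neq\lambda_{e_j}$; assuming without loss of generality $FP_T(x_i)>FP_T(x_j)$ gives $\lambda_{e_i}-\lambda_{e_j}>0$, whence $FP_{\widetilde{T}}(x_i)>FP_{\widetilde{T}}(x_j)$ as well. This directly contradicts reversibility, which would require $FP_{\widetilde{T}}(x_i)<FP_{\widetilde{T}}(x_j)$. Therefore at least one of $x_i,x_j$ must lie in $X'$, i.e. $|X'\cap\{x_i,x_j\}|\geq 1$. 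The only subtle point — and the sole thing that really needs checking — is the invariance of the FP-difference under the induced-subtree operation, namely that retaining both leaves preserves the cherry and leaves the two pendant edge lengths and their $D$-values untouched; once that invariance is in hand the contradiction is immediate.
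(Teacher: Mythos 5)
Your proposal is correct and follows essentially the same route as the paper's own proof: contradiction via the observation that $FP_T(x_i)-FP_T(x_j)=\lambda_{e_i}-\lambda_{e_j}=FP_{\widetilde{T}}(x_i)-FP_{\widetilde{T}}(x_j)$ when both cherry leaves are retained, which is incompatible with a strict ranking being reversed. Your justification that the parent retains out-degree $2$ in the induced subtree (so the pendant edges are neither merged nor rescaled) is exactly the invariance the paper invokes, stated with slightly more care.
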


\begin{proof} We prove the statement by contradiction. Therefore, let $T$ be a phylogenetic $X$-tree, and let $ \pi_T$ be a strict and reversible ranking for $T$ with respect to $X'\subset X$ and induced subtree $\widetilde{T}$ on taxon set $\widetilde{X}$. Moreover, let $c=[x_i,x_j]$ be a cherry of $T$ and assume that $\{x_i,x_j\} \cap X'=\emptyset$, i.e. $x_i, x_j \not\in X'$. Note that by definition, we have $$ FP_T(x_i)=\sum\limits_{e \in P(T; \rho,x_i)}\frac{\lambda_e}{D_e}$$ and, analogously, 
$$ FP_T(x_j)=\sum\limits_{e \in P(T;\rho,x_j)}\frac{\lambda_e}{D_e}.$$ 

Now note that as $x_i$ and $x_j$ form a cherry with some parent $w$, the paths $P(T;\rho,x_i)$ and $P(T;\rho,x_j)$ are identical except for the last edges, i.e. except for the pendant edges, say $e_i=(w,x_i)$ and $e_j=(w,x_j)$. This immediately implies $$ FP_T(x_i)=FP_T(x_j)-\frac{\lambda_{e_j}}{D_{e_j}}+\frac{\lambda_{e_i}}{D_{e_i}}=FP_T(x_j)-\lambda_{e_j}+\lambda_{e_i},$$  where the last equality is due to the fact that only $x_i$ descends from $e_i$ and only $x_j$ descends from $e_j$, and thus $D_{e_i}=D_{e_j}=1$. Analogously, we have $$ FP_{\widetilde{T}}(x_i)=FP_{\widetilde{T}}(x_j)-\widetilde{\lambda_{e_j}}+\widetilde{\lambda_{e_i}},$$ where $\widetilde{\lambda_{e_\cdot}}$  corresponds to the lengths of the respective pendant edges in $\widetilde{T}$. However, as the entire cherry $[x_i,x_j]$ is preserved in $\widetilde{T}$, we know that $\lambda_{e_i}=\widetilde{\lambda_{e_i}}$ and $\lambda_{e_j}=\widetilde{\lambda_{e_j}}$. This is due to the fact that as $e_i$ does not get deleted from $T$ to $\widetilde{T}$, $e_j$ does not get merged with the edge leading from the parent of $w$ to $w$, so both edges remain unchanged. Thus, in total we have 

\begin{equation}\label{eq_TtildeTdifference}FP_T(x_i)-FP_T(x_j)=\lambda_{e_i}-\lambda_{e_j}=FP_{\widetilde{T}}(x_i)-FP_{\widetilde{T}}(x_j).\end{equation}

Without loss of generality, we assume that $FP_T(x_i) > FP_T(x_j)$ in $\pi_T$ (as $\pi_T$ is strict), which implies that $FP_{\widetilde{T}}(x_j)>FP_{\widetilde{T}}(x_i)$ in $\pi_{\widetilde{T}}$ (as $\pi_T$  is reversible). However, this implies both $FP_T(x_i) - FP_T(x_j)>0$ and $FP_{\widetilde{T}}(x_i)-FP_{\widetilde{T}}(x_j)<0$, which implies   $FP_T(x_i) - FP_T(x_j) \neq FP_{\widetilde{T}}(x_i)-FP_{\widetilde{T}}(x_j)$ and thus contradicts Equation \eqref{eq_TtildeTdifference}. This completes the proof. 
\end{proof}


\begin{thm}
Let $n\geq 2$ and let $T=(T_a,T_b)$ be a rooted binary phylogenetic $X$-tree with $|X|=n$. 
Let $\widetilde{T}$ be the induced subtree on leaf set $\widetilde{X}\subset X$ that results from $T$ when we delete one leaf out of each cherry of $T$ and suppress the resulting vertices of in-degree 1 and out-degree 1. Then, we have:

There exist strictly positive edge lengths $\lambda_1,\ldots,\lambda_{2n-2}$ for $T$ such that there is a strict ranking $\pi_T$ for the leaves of $T$ concerning the FP index which is reversible with respect to $\widetilde{T}$ and such that $\widetilde{T}$ contains the species which has the highest FP index in $T$ and such that the species with the lowest FP index in $T$ is \emph{not} contained in $\widetilde{T}$. In particular, if $c_T$ denotes the number of cherries of $T$ and if $x_1 \in X$ is such that $x_1=\argmax\limits_{x \in X}FP_T(x)$, then we have $FP_T(x_1)>FP_T(x_2)>\ldots > FP_T(x_n)$ and $FP_{\widetilde{T}}(x_1)<FP_{\widetilde{T}}(\widetilde{x}_2)< \ldots < FP_{\widetilde{T}}(\widetilde{x}_{n-c_T})$, where $\widetilde{x}_i \in \widetilde{X}$ for all $i=2,\ldots,n-c_T$ and where $FP_T(\widetilde{x}_i)>FP_T(\widetilde{x}_j)$ if and only if $FP_{\widetilde{T}}(\widetilde{x}_i)<FP_{\widetilde{T}}(\widetilde{x}_j)$. Moreover, if $x' \in X$ is such that $x'=\argmin\limits_{x \in X}FP_T(x)$, then $x' \not\in \widetilde{X}$.
\end{thm}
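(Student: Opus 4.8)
The plan is to induct on the number of leaves $n$, using the root decomposition $T=(T_a,T_b)$. The base case $n=2$ is immediate: $T$ is a single cherry $[x_1,x_2]$, and choosing $\lambda_{(\rho,x_1)}>\lambda_{(\rho,x_2)}$ makes $x_1$ the surviving maximum and $x_2$ the deleted minimum, while the one-leaf tree $\widetilde{T}$ is vacuously reversed. For the inductive step I would first record how the FP index behaves under the root split: for every leaf $x\in X_a$ one has $FP_T(x)=\frac{\lambda_a}{n_a}+FP_{T_a}(x)$, where $\lambda_a$ is the length of the root edge $(\rho,a)$ and $FP_{T_a}$ is computed inside $T_a$; symmetrically for $X_b$. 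Passing to $\widetilde{T}$, a surviving leaf $x\in\widetilde{X}_a$ satisfies $FP_{\widetilde{T}}(x)=\frac{\lambda_a}{\widetilde{n}_a}+FP_{\widetilde{T_a}}(x)$ with $\widetilde{n}_a:=n_a-c_{T_a}$, provided $T_a$ does not collapse to a single leaf under the deletion (if it does, the root edge merges with the surviving pendant edge, a case I treat separately). These identities, which I would isolate as a preliminary lemma, say that within each subtree the $T$-ranking and the reversed $\widetilde{T}$-ranking are inherited from the inductive hypothesis up to a common additive offset, so the whole difficulty is to control how the two subtree ``blocks'' interleave.

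The combination strategy is to make the two blocks separate cleanly in $T$ and then flip their order in $\widetilde{T}$. Concretely, I would apply the inductive hypothesis to $T_a$ and, if $n_b\ge 2$, to $T_b$ (the case $n_b=1$ being trivial), and then choose the root-edge lengths $\lambda_a,\lambda_b$ together with uniform rescalings $s_a,s_b>0$ of the already-fixed edge lengths inside $T_a,T_b$ (which preserve both orders and the reversal, since $FP$ is linear in the edge lengths) so that in $T$ one entire block lies above the other, while in $\widetilde{T}$ the block order is reversed. Since the within-block orders each reverse by induction, flipping the block order yields a complete reversal of the joint ranking. The choice of which block to place below is dictated by how much a root offset inflates under deletion: the offset of a block is multiplied by $r_a=\frac{n_a}{\widetilde{n}_a}$ (respectively $r_b=\frac{n_b}{\widetilde{n}_b}$) in passing from $T$ to $\widetilde{T}$. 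When $r_a\neq r_b$, I would place the block with the larger inflation ratio below in $T$: taking its root edge long enough, its offset starts below the other block's but, after multiplication by the larger ratio, jumps above it in $\widetilde{T}$, producing the flip while the remaining separation inequalities stay satisfiable for sufficiently large offsets. Equivalently, the governing dichotomy is the comparison of the cherry densities $c_{T_a}/n_a$ and $c_{T_b}/n_b$, since $r_a>r_b\iff c_{T_a}/n_a>c_{T_b}/n_b$.

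It then remains to match the extremal-leaf requirements and to dispatch the boundary case. The block placed below in $T$ contains the global minimum of $FP_T$; being the higher-inflation block it has ratio $>1$, hence at least one cherry, so by the inductive hypothesis its minimal leaf is deleted, giving $\argmin FP_T\notin\widetilde{X}$. The block placed above contains $x_1=\argmax FP_T$, which survives (trivially if that block is a single leaf, otherwise by induction) and, after the flip, becomes the minimum of $\pi_{\widetilde{T}}$, exactly as claimed. The main obstacle is the borderline case $r_a=r_b$ (equal cherry densities), where the offsets cannot be made to flip by any choice of $\lambda_a,\lambda_b$ alone, because equal offsets in $T$ force equal offsets in $\widetilde{T}$. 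Here I would exploit the two sources of freedom the generic argument leaves unused: the independent rescalings $s_a,s_b$, which shift the two blocks' value ranges relative to each other, and---when a subtree is a cherry and hence collapses to a single leaf upon deletion---the nonlinear merging of its root edge with the surviving pendant edge, which breaks the rigid offset relation (as a direct four-leaf double-cherry computation confirms). Showing that, after tuning $s_a,s_b$, at least one of the two block configurations stays feasible in this borderline case is the technical heart of the argument, and is where I expect the supporting lemmas to be needed.
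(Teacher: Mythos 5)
Your induction skeleton and the ``offset lemma'' ($FP_T(x)=\lambda_a/n_a+FP_{T_a}(x)$, inflating to $\lambda_a/\widetilde{n}_a+FP_{\widetilde{T_a}}(x)$ after deletion) match the paper's setup, and your block-separation-and-flip step is indeed sound when the inflation ratios differ ($r_a\neq r_b$); in fact the paper's case $n_b=1$ is exactly such a flip. The genuine gap is the borderline case $r_a=r_b$ (equal cherry densities --- which is not exotic: it covers the balanced four-leaf tree, two equal-size caterpillars at the root, etc.), and neither of your two proposed escapes works. First, the ``nonlinear merging'' claim is false: if $T_a$ is a cherry and collapses to the single leaf $x$, the merged edge has length $\lambda_a+\lambda_{e_x}$ with exactly one descendant, so $FP_{\widetilde{T}}(x)=\lambda_a/\widetilde{n}_a+FP_{\widetilde{T_a}}(x)$ with $\widetilde{n}_a=1$; the offset relation holds exactly and nothing nonlinear happens (the paper's four-leaf example succeeds because of a favorable choice of the within-cherry lengths, not because of merging). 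Second, the rescalings $s_a,s_b$ provably cannot always repair the borderline case, because your two requirements (blocks separated in $T$, flipped in $\widetilde{T}$) reduce, for equal ratio $r$, to an inequality of the form $s_b C_b<s_a C_a$ whose coefficients are fixed by whatever edge lengths the inductive hypothesis hands you, and whose signs are invariant under scaling. Concretely, take the balanced four-leaf tree where the hypothesis supplies each cherry with pendant lengths $3>1$: placing block $B$ below block $A$ requires $\lambda_b/2+3s_b<\lambda_a/2+s_a$ (separation in $T$) and $\lambda_a+3s_a<\lambda_b+3s_b$ (flip of the survivors in $\widetilde{T}$), which together force $s_a<-3s_b$, a contradiction; by symmetry the opposite placement fails too. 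Since the inductive hypothesis only asserts that \emph{some} valid edge lengths exist, your induction can be handed exactly these lengths and then stalls --- this is precisely where you said supporting lemmas were needed, and they are not supplied.

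For comparison, the paper's proof never separates the two blocks, which is how it avoids any ratio dichotomy. In its main case it shrinks $T_b$ (factor $s_b\to 0$) and \emph{nests} all of $T_b$'s FP values inside the gap between two ranking-adjacent surviving leaves $x_1,x_2$ of $T_a$: inside $\left(FP_T(x_2),FP_T(x_1)\right)$ in $T$ and inside $\left(FP_{\widetilde{T}}(x_1),FP_{\widetilde{T}}(x_2)\right)$ in $\widetilde{T}$. That gap flips of its own accord by the inductive hypothesis on $T_a$, so $T_b$ only has to stay put rather than jump over anything. The tool that makes the centering of $T_b$ in both gaps possible is an additive lemma your proposal lacks: adding a constant $d$ to all pendant edges of a tree adds $d$ to every FP value, in both the tree and its deletion-induced subtree, and preserves strict reversibility (the paper's Lemma 4 and Corollary 2). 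That same lemma is also the cheapest repair of your approach: adding a large $d_a$ to the pendant edges of the upper block changes the offending coefficient to $C_a+(r-1)d_a$, which is eventually positive since $r>1$, after which taking $s_b$ small makes your flip feasible. Either way, some additive (not just multiplicative) freedom is an essential missing ingredient, and without it the borderline case --- the technical heart, as you rightly call it --- remains unproven.
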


In order to prove the above theorem, we need a few lemmas that state some properties of rankings and the FP index in general. We start by showing that the FP index cannot decrease for any species if leaves are deleted from a tree $T=(T_a,T_b)$ with $n \geq 2$ leaves (as long as neither all leaves of $T_a$ nor all leaves of $T_b$ are deleted).

\begin{lem} \label{lem_deletionincrease} Let $T=(T_a,T_b)$ be a rooted binary phylogenetic $X$-tree with maximal pendant subtrees $T_a$ and $T_b$ with taxon sets $X_a$ and $X_b$, respectively. Let $X'\subset X$ be such that $X_a,X_b \not\subseteq X'$, i.e. neither $X_a$ nor $X_b$ are completely contained in $X'$. Moreover, let $\widetilde{T}$ be the induced subtree on $\widetilde{X}=X\setminus X'$ resulting from $T$ when the taxa of $X'$ are deleted. Then, we have:  $FP_T(x) \leq FP_{\widetilde{T}}(x)$ for all $x \in \widetilde{X}$. 
\end{lem}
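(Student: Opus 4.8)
The plan is to show that deleting leaves can only increase (never decrease) the FP index of any surviving leaf, by tracking how each term $\frac{\lambda_e}{D_e}$ in the sum defining $FP_T(x)$ changes under the passage from $T$ to $\widetilde{T}$. Fix a surviving leaf $x \in \widetilde{X}$ and recall that $FP_T(x) = \sum_{e \in P(T;\rho,x)} \frac{\lambda_e}{D_e}$. The key idea is to set up a correspondence between the edges on the root-to-$x$ path in $T$ and those in $\widetilde{T}$, and to argue term-by-term (or rather, group-by-group) that the denominators $D_e$ can only shrink while the numerators $\lambda_e$ are at worst preserved and at best added together.

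First I would analyze precisely what happens to the path $P(T;\rho,x)$ under the induced-subtree operation. Deleting the taxa in $X'$ and suppressing degree-2 vertices does two things to this path: some internal vertices on it may become degree-2 and get suppressed (causing consecutive edges to be merged, with their lengths added), and the descendant-count $D_e$ of each edge can only decrease, since $\widetilde{X} \subseteq X$ means every leaf descended from $e$ in $\widetilde{T}$ was already descended from the corresponding edge in $T$. So I would partition the edges of $P(T;\rho,x)$ into maximal blocks, where each block is a maximal run of edges that get merged into a single edge $\widetilde{e}$ of $\widetilde{T}$. For such a block consisting of edges $e_1,\ldots,e_k$ (ordered from the root downward) merging into $\widetilde{e}$, the contribution in $\widetilde{T}$ is $\frac{\lambda_{e_1}+\cdots+\lambda_{e_k}}{D_{\widetilde{e}}}$, and I want to compare this with $\sum_{i=1}^{k}\frac{\lambda_{e_i}}{D_{e_i}}$, the contribution of the same edges in $T$.

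The crucial monotonicity observation is that within a merged block, the topmost edge $e_1$ has the largest descendant count, i.e. $D_{e_1}\geq D_{e_2}\geq \cdots \geq D_{e_k}$ in $T$ (descendant counts are non-increasing as we move away from the root), and moreover in $\widetilde{T}$ the merged edge $\widetilde{e}$ satisfies $D_{\widetilde{e}}\leq D_{e_k}$ (its descendant count in $\widetilde{T}$ is at most the smallest descendant count among the merged edges, since deletions only remove leaves). Consequently $\frac{\lambda_{e_i}}{D_{\widetilde{e}}}\geq \frac{\lambda_{e_i}}{D_{e_i}}$ for each $i$, because $D_{\widetilde{e}}\leq D_{e_k}\leq D_{e_i}$ and all quantities are positive. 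Summing over $i$ gives $\frac{\lambda_{e_1}+\cdots+\lambda_{e_k}}{D_{\widetilde{e}}}\geq \sum_{i=1}^{k}\frac{\lambda_{e_i}}{D_{e_i}}$, so each block's contribution is at least as large in $\widetilde{T}$ as in $T$. Summing over all blocks of the path yields $FP_{\widetilde{T}}(x)\geq FP_T(x)$, as desired.

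The main obstacle I anticipate is the careful bookkeeping of the root edge: by the hypothesis that neither $X_a$ nor $X_b$ is entirely contained in $X'$, at least one survivor remains on each side of the root, so the root $\rho$ retains out-degree $2$ in the minimal connecting subtree and is never suppressed, which guarantees that the block decomposition of the path genuinely corresponds to edges of a legitimately-rooted $\widetilde{T}$ (otherwise the special ``delete $\rho$ and its incident edge'' convention from the preliminaries could interfere, and edge lengths near the root could be dropped rather than merged). I would therefore state explicitly, perhaps as a preliminary remark, that under the stated hypothesis the root and every ancestor of $x$ that retains out-degree $2$ survives, so that the only modifications along $P(T;\rho,x)$ are merges of consecutive edges (with length-addition) and reductions of descendant counts — precisely the two effects handled above. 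Once this is pinned down, the inequality follows from the elementary fact that decreasing a positive denominator increases a positive fraction.
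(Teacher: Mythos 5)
Your proof is correct, and it rests on the same underlying idea as the paper's proof: deleting leaves can only decrease the number of leaves descended from any edge on the root-to-$x$ path, so each contribution of the form $\lambda_e/D_e$ can only grow. The difference is in execution. The paper argues by a short case split (either $P(T;\rho,x)$ shares an edge with a path to some deleted taxon or it does not) plus the informal claim that for each shared edge ``the proportion increases because fewer leaves are descended from $e$''; it never explicitly confronts the fact that edges of $T$ are \emph{merged} into longer edges of $\widetilde{T}$, so the quantity it implicitly compares ($\lambda_e$ divided by the number of surviving leaves below $e$) is not literally the term appearing in $FP_{\widetilde{T}}(x)$. Your block decomposition repairs exactly this imprecision: grouping $P(T;\rho,x)$ into maximal runs $e_1,\ldots,e_k$ that merge into a single edge $\widetilde{e}$, and using the chain $D_{\widetilde{e}} \leq D_{e_k} \leq D_{e_i}$, you obtain $\frac{\lambda_{e_1}+\cdots+\lambda_{e_k}}{D_{\widetilde{e}}} \geq \sum_{i=1}^{k}\frac{\lambda_{e_i}}{D_{e_i}}$ block by block, which is airtight — merging can only reinforce the inequality, since the merged denominator is governed by the bottommost (smallest-count) edge of the block. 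You also make explicit, rather than parenthetical as in the paper, where the hypothesis $X_a, X_b \not\subseteq X'$ enters: it prevents the root from dropping to out-degree $1$, in which case the paper's induced-subtree convention would delete the root edge and an edge length could be lost rather than merged. What the paper's version buys is brevity and a strict inequality in the ``affected'' case (which your argument would also yield with one more sentence, though the lemma only claims the weak inequality); what yours buys is a fully rigorous bookkeeping of the merge operation that the paper glosses over.
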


\begin{proof} Let $x \in \widetilde{X}$. Then, either the unique path from the root $\rho$ of $T$ to $x$ contains at least one edge $e$ that also occurs on one of the unique paths from $\rho$ to taxa in $X'$ or not. If not, then $x$ is not affected by the deletion of $X'$ at all, i.e. $FP_T(x)=FP_{\widetilde{T}}(x)$ (note that this would not be true if we allowed $X_a$ or $X_b$ to be completely contained in $X'$, because then the deletion of the entire corresponding subtree would enforce a deletion of the edge leading to the other subtree and thus have an impact on the FP indices of the remaining taxa). However, if such an edge $e$ exists, then a proportion of the edge length of $e$ is assigned to $x$ by the FP index. This proportion, however, increases when the taxa of $X'$ are deleted, because then fewer leaves are descended from $e$ and the FP index will account for this. As this holds for all such edges, in this case we have $FP_T(x) <FP_{\widetilde{T}}(x)$. Thus, altogether we have $FP_T(x) \leq FP_{\widetilde{T}}(x)$. This completes the proof.
\end{proof}

The previous lemma showed that leaf deletion can keep the FP indices of the remaining leaves unchanged or increase them, but never decrease them. The following lemma adds to this for the case that the given edge lengths and the leaf deletions induce a strict and reversible ranking: In this case, there is at most one taxon whose FP index remains unchanged, namely the one whose such value is maximal.

\begin{lem} \label{lem_onlyoneunchanged} Let $T$ be a rooted binary phylogenetic $X$-tree. Let $X'\subset X$ and let $\widetilde{T}$ be the induced subtree on $\widetilde{X}=X\setminus X'$ resulting from $T$ when the taxa of $X'$ are deleted. Assume that the edge lengths of $T$ are such that the deletion of $X'$ induces a strict and reversible ranking $\pi_T$. Let $x'=\argmax\limits_{x \in X} FP_T(x)$. Then, we have:  $FP_T(x) <  FP_{\widetilde{T}}(x)$ for all $x \in \widetilde{X}\setminus \{x'\}$ and, if $x' \in \widetilde{X}$,  $FP_T(x') \leq  FP_{\widetilde{T}}(x')$.
\end{lem}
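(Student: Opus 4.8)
The plan is to bootstrap the weak monotonicity of Lemma~\ref{lem_deletionincrease} into a strict inequality by exploiting the reversibility hypothesis. Lemma~\ref{lem_deletionincrease} already gives $FP_T(x)\le FP_{\widetilde T}(x)$ for every surviving taxon $x\in\widetilde X$, provided (as we may assume, and as its hypotheses require) that neither maximal pendant subtree of $T$ is entirely deleted. In particular the second assertion, namely $FP_T(x')\le FP_{\widetilde T}(x')$ whenever $x'\in\widetilde X$, is immediate; the whole content of the lemma therefore lies in upgrading ``$\le$'' to ``$<$'' for every survivor other than the highest-ranked one.

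The key step I would carry out is a short sandwich argument. Let $m:=\argmax_{x\in\widetilde X}FP_T(x)$ be the survivor with the largest original FP value. Since $\pi_T$ is strict, $FP_T(m)>FP_T(x)$ for every other survivor $x$, and reversibility flips this order, so $FP_{\widetilde T}(m)<FP_{\widetilde T}(x)$; applying Lemma~\ref{lem_deletionincrease} to $m$ gives $FP_T(m)\le FP_{\widetilde T}(m)$. Chaining these yields
\[
FP_T(x)<FP_T(m)\le FP_{\widetilde T}(m)<FP_{\widetilde T}(x),
\]
so $FP_T(x)<FP_{\widetilde T}(x)$ for every survivor $x\neq m$. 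If the globally top-ranked leaf $x'$ is retained, then $m=x'$ and both assertions follow at once: strict increase on all of $\widetilde X\setminus\{x'\}$, and the weak bound for $x'$ itself (equality at $x'$ being exactly what the statement permits). This is precisely the configuration relevant to Theorem~\ref{thm_cherries}, where $x'$ is deliberately kept in $\widetilde T$.

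The step I expect to be the main obstacle is the complementary case in which $x'$ is \emph{deleted}, since then the first assertion demands a strict increase for \emph{all} of $\widetilde X$, including $m$, about which the sandwich argument is silent. To handle $m$ I would revisit the mechanism behind Lemma~\ref{lem_deletionincrease}: equality $FP_T(m)=FP_{\widetilde T}(m)$ occurs exactly when the root-to-$m$ path shares no edge with any root-to-deleted-leaf path, which happens if and only if every deleted taxon lies in the maximal pendant subtree \emph{opposite} to the one containing $m$. Assuming this for contradiction, every survivor in $m$'s subtree is likewise unaffected and hence has unchanged FP value, whereas the sandwich step forces each survivor distinct from $m$ to increase strictly; this rigidly constrains $m$'s subtree and the location of the deletions, and one then tries to derive a contradiction using strictness of the reversed ranking together with the fact that a reversal must delete at least one leaf of every cherry (Theorem~\ref{thm_cherries1}). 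The delicate point is that this rigidity does not close for every tree shape: if $m$'s maximal pendant subtree is a single leaf and all deletions fall on the other side, $m$ can genuinely remain unaffected, so the strict increase for the surviving maximum can fail without the additional guarantee that $x'$ is retained (or that both maximal subtrees lose leaves). Pinning down exactly which hypothesis excludes this borderline configuration is where the real care is needed.
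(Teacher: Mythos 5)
Your sandwich chain for the case $x' \in \widetilde{X}$ is exactly the paper's argument: the paper assumes, for contradiction, that some survivor $y \neq x'$ satisfies $FP_T(y) = FP_{\widetilde{T}}(y)$ and derives $FP_T(y) < FP_T(x') \leq FP_{\widetilde{T}}(x') < FP_{\widetilde{T}}(y) = FP_T(y)$, which is your chain with $m = x'$; your observation that the second assertion is immediate from Lemma \ref{lem_deletionincrease} also matches the paper.

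The obstacle you flag in the complementary case $x' \in X'$ is not a gap in your proof but a genuine flaw in the lemma as stated, and the paper's proof does not handle it either: its contradiction chain uses the quantity $FP_{\widetilde{T}}(x')$, which is undefined when $x'$ has been deleted, so the paper tacitly assumes that the global maximizer survives. The borderline configuration you describe really occurs. Take $T=(T_a,T_b)$ on $X=\{u,v,z\}$ with $T_a$ the cherry $[u,v]$, $T_b=\{z\}$, and edge lengths $\lambda_{(\rho,a)}=2$, $\lambda_{(a,u)}=2$, $\lambda_{(a,v)}=1$, $\lambda_{(\rho,z)}=2.5$. Then $FP_T(u)=3 > FP_T(z)=2.5 > FP_T(v)=2$, so $\pi_T$ is strict and $x'=u$. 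Deleting $X'=\{u\}$ (one leaf of the unique cherry, as Theorem \ref{thm_cherries1} demands) merges two edges into a pendant edge of length $3$ for $v$, giving $FP_{\widetilde{T}}(v)=3 > FP_{\widetilde{T}}(z)=2.5$: the survivors are ordered $z$ above $v$ in $T$ but $v$ above $z$ in $\widetilde{T}$, so $\pi_T$ is strict and reversible with respect to $X'$, yet $FP_T(z)=FP_{\widetilde{T}}(z)$ with $z \in \widetilde{X}\setminus\{x'\}$, contradicting the first assertion of Lemma \ref{lem_onlyoneunchanged}. The lemma therefore needs the additional hypothesis $x' \in \widetilde{X}$, under which your argument (and the paper's) is complete. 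Nothing downstream is damaged by adding it: the only place the paper invokes Lemma \ref{lem_onlyoneunchanged} is in defining the auxiliary quantity $f$ in the proof of Theorem \ref{thm_cherries}, where it is applied to $T_a$ whose strict and reversible ranking retains its maximizer $x_1$ by the inductive hypothesis, so the added hypothesis holds there.
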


\begin{proof} We know that $FP_T(x) \leq FP_{\widetilde{T}}(x) $ for all $x \in \widetilde{X}$ by Lemma \ref{lem_deletionincrease}. It only remains to show that for all $x \neq x'$ the inequality is strict. Assume that it is not, i.e. assume that there is a $y \in \widetilde{X} \setminus \{x'\}$ such that $FP_T(y) = FP_{\widetilde{T}}(y)$. As $y \neq x'$ and as $x'=\argmax\limits_{x \in X} FP_T(x)$ and as $\pi_T$ is strict and reversible, we know that $FP_T(y)<FP_T(x')$ and $FP_{\widetilde{T}}(y)>FP_{\widetilde{T}}(x')$. By Lemma \ref{lem_deletionincrease} we have $FP_T(x') \leq FP_{\widetilde{T}}(x')$. So in summary, this gives  $FP_T(y)<FP_T(x') \leq FP_{\widetilde{T}}(x') < FP_{\widetilde{T}}(y)=FP_T(y)$. The latter equality is due to our assumption. So in summary, we have $FP_T(y)<FP_T(y)$. Clearly, this is a contradiction and therefore the assumption was wrong. This completes the proof.
\end{proof}

The next two simple lemmas provide important properties of the FP index: They basically show us how we can modify the edge lengths of a given tree $T$ such that the rankings of the taxa obtained from the FP indices are not changed. These two lemmas are the crucial tools which we later need to prove Theorem \ref{thm_cherries}, because they allow us to scale branch lengths and enlarge pending branches of a tree and still keep a ranking strict and reversible. This will be summarized by Corollary \ref{cor_lem3lem4}. We start with the following lemma that allows us to upscale or downscale an entire tree by multiplying its edge lengths with a constant.

\begin{lem} \label{lem_FPscale} Let $T$ be a rooted binary phylogenetic $X$-tree with $|X|=n$ and edge lengths $\lambda_1,\lambda_2\ldots, \lambda_{2n-2}$. Let $T'$ be like $T$, but with the edge lengths multiplied by $k$ for some $k \in \mathbb{R}_+$, i.e. $T'$ has edge lengths  $\lambda_1\cdot k,\lambda_2\cdot k\ldots, \lambda_{2n-2} \cdot k$. Then, we have: $FP_{T'}(x)=FP_T(x)\cdot k$ for all $x \in X$. In particular, we have $FP_T(x)>FP_{T}(y) \Leftrightarrow FP_{T'}(x)>FP_{T'}(y)$ for $x,y \in X$. 
\end{lem}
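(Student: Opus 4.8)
The plan is to read the claim off directly from the definition of the FP index in Equation~\eqref{Def_FP}, the only real content being the observation that rescaling all edge lengths by a common positive constant leaves the tree topology -- and hence every combinatorial quantity appearing in the definition -- untouched.

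First I would fix a leaf $x \in X$ and record that $T$ and $T'$ have identical tree shapes, differing only in their edge-length assignments. It follows that the root-to-leaf path $P(T';\rho,x)$ consists of exactly the same edges as $P(T;\rho,x)$. The key point to emphasize is that for each such edge $e$, the number $D_e$ of leaves descended from $e$ is a purely combinatorial quantity determined by the topology alone; it therefore does not change when we pass from $T$ to $T'$. The only modification is that each edge length $\lambda_e$ becomes $\lambda_e \cdot k$.

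Next I would substitute these facts into Equation~\eqref{Def_FP}. Since the constant $k$ multiplies every numerator and can be factored out of the sum, the resulting expression is
\begin{equation*}
FP_{T'}(x) = \sum_{e \in P(T;\rho,x)} \frac{\lambda_e \cdot k}{D_e} = k \cdot FP_T(x),
\end{equation*}
which gives the first assertion for every $x \in X$.

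Finally, for the ``in particular'' statement I would appeal to the strict positivity of $k$: for any $x,y \in X$ we have $FP_{T'}(x) - FP_{T'}(y) = k\,\bigl(FP_T(x) - FP_T(y)\bigr)$, and multiplying by a positive constant preserves the sign of the difference, so $FP_T(x) > FP_T(y)$ holds if and only if $FP_{T'}(x) > FP_{T'}(y)$. I do not expect any genuine obstacle in this argument; the single subtlety worth flagging explicitly is that $D_e$ depends only on the tree shape and not on the edge lengths, which is precisely what licenses pulling $k$ outside the defining sum.
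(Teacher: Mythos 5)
Your proposal is correct and follows essentially the same argument as the paper's proof: substitute the scaled edge lengths into the defining sum, factor out $k$ (using that the path edges and the counts $D_e$ are unchanged), and then use positivity of $k$ to preserve the order of FP values. The only difference is that you make explicit the observation that $D_e$ is topology-dependent only, which the paper leaves implicit.
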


\begin{proof} Let $x\in X$. Then, by definition of $FP$, we have $FP_T(x)=\sum\limits_{e \in P(T; \rho,x)}\frac{\lambda_e}{D_e}$ and  $FP_{T'}(x)=\sum\limits_{e \in P(T';\rho,x)}\frac{k\cdot \lambda_e}{D_e}=k\cdot \sum\limits_{e \in P(T;\rho,x)}\frac{\lambda_e}{D_e} = k \cdot FP_T(x)$. For $k \in \mathbb{R}_+$, this immediately implies $FP_T(x)>FP_{T}(y) \Leftrightarrow FP_{T'}(x)=k\cdot FP_T(x)>k\cdot FP_{T}(y)=FP_{T'}(y)$. This completes the proof.
\end{proof}

The following lemma shows that when we modify the edge lengths of all pendant edges of a tree $T$ by adding a positive constant to them, this cannot change the ordering of the $FP_T$ values.

\begin{lem} \label{lem_FPconst} Let $T$ be a rooted binary phylogenetic $X$-tree with $|X|=n$ and edge lengths $\lambda_T(e)$ for $e \in E(T)$. Let $T'$ be like $T$, but such that the edge lengths of all pendant edges are longer by a constant $d \in \mathbb{R}_+$. More precisely, for all edge lengths $\lambda_{T'}(e)$ of edges $e$ of $T'$, let $$ \lambda_{T'}(e)=\begin{cases}  \lambda_{T}(e)+d & \mbox{if $e$ is a pendant edge of $T$ and $T'$, }\\  \lambda_{T}(e) & \mbox{else.}\end{cases}$$ Then, we have $FP_{T'}(x)=FP_T(x)+d$ and $FP_T(x)>FP_T(y) \Leftrightarrow FP_{T'}(x) > FP_{T'}(y)$ for all $x,y \in X$. 
\end{lem}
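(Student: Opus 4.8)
The plan is to argue directly from the definition of the FP index in Equation~\eqref{Def_FP}, using the observation that each root-to-leaf path contains exactly one pendant edge. Fix a leaf $x \in X$. The path $P(T;\rho,x)$ consists of a sequence of inner edges followed by the single pendant edge $e_x$ incident to $x$; indeed, $x$ is the only leaf lying on this path, so $e_x$ is the unique pendant edge it contains. Since only $x$ descends from $e_x$, we have $D_{e_x}=1$, and because $T$ and $T'$ share the same tree shape, all descendant counts $D_e$ are identical in the two trees.

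I would then split the FP sum into the inner-edge contributions and the pendant-edge contribution. Every inner edge on $P(T;\rho,x)$ keeps its length in passing from $T$ to $T'$, so these terms are unchanged. The only modification is to $e_x$, whose length increases from $\lambda_T(e_x)$ to $\lambda_T(e_x)+d$; as $D_{e_x}=1$, this edge now contributes $\lambda_T(e_x)+d$ instead of $\lambda_T(e_x)$. Collecting terms gives
$$FP_{T'}(x)=\sum_{e \in P(T;\rho,x)}\frac{\lambda_T(e)}{D_e}+d=FP_T(x)+d,$$
which is the first assertion, valid for every $x \in X$.

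The ordering statement is then immediate: since $FP_{T'}(x)=FP_T(x)+d$ and $FP_{T'}(y)=FP_T(y)+d$ differ from the original values by the same additive constant $d$, we have $FP_{T'}(x)>FP_{T'}(y)$ if and only if $FP_T(x)>FP_T(y)$. There is no genuine obstacle in this argument; the whole proof rests on the single structural fact that the pendant edge incident to $x$ is the unique pendant edge on its root-path and carries descendant count $D_{e_x}=1$, so that uniformly lengthening all pendant edges by $d$ shifts every FP value by exactly $d$ and hence preserves their relative order.
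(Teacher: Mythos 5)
Your proof is correct and follows essentially the same route as the paper's: both isolate the unique pendant edge $e_x$ on the root-to-leaf path, use $D_{e_x}=1$ to see that the FP value shifts by exactly $d$, and then note that a uniform additive shift preserves the ordering. No gaps to report.
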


\begin{proof} Let $x\in X$ and $d \in \mathbb{R}_+$. Let $p_x$ denote the parent of $x$ and let $e_x=(p_x,x)$ be the pendant edge incident to $x$. Then, by definition of $FP$, we have $FP_T(x)=\sum\limits_{e \in P(T; \rho,x)}\frac{\lambda_T(e)}{D_e}$ and  $FP_{T'}(x)=\sum\limits_{e \in P(T';\rho,x)}\frac{ \lambda_{T'}(e)}{D_e}= \sum\limits_{e \in P(T';\rho,p_x)}\frac{ \lambda_{T'}(e)}{D_e}+\frac{ \lambda_{T'}(e_x)}{1}=\sum\limits_{e \in P(T;\rho,p_x)}\frac{ \lambda_{T}(e)}{D_e}+ \lambda_{T}(e_x)+d=FP_T(x)+d$. This completes the first part of the proof. Now let $x,y \in X$ and $d \in \mathbb{R}_+$. Then we have: $FP_T(x)>FP_T(y) \Leftrightarrow FP_{T'}(x)=FP_T(x)+d > FP_T(y)+d=FP_{T'}(y)$. This completes the proof.
\end{proof}

We are now in a position to state the following simple but crucial corollary, which is the main ingredient in the inductive steps of the proofs of Theorems \ref{thm_cherries} and \ref{thm_boundoneleaf}. 

\begin{cor} \label{cor_lem3lem4}
Let $T$ be a rooted binary phylogenetic $X$-tree with $|X|=n$. Let $\lambda_T(e)$ for $e \in E(T)$ be an edge length assignment for $T$ that 
induces a strict and reversible ranking $\pi_T$ concerning the deletion of some taxa $X' \subset X$. Let $d, k \in \mathbb{R}_+$. Then, if we multiply all edge lengths by $k$ and subsequently add $d$ to pendant edge lengths, the resulting branch lengths induce a strict and reversible ranking conerning the deletion of $X'$, too.
\end{cor}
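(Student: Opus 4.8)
The plan is to derive the corollary by invoking Lemmas \ref{lem_FPscale} and \ref{lem_FPconst} twice --- once on $T$ itself and once on the induced subtree $\widetilde{T}$ --- after first establishing that the two edge operations commute with passing to the induced subtree. Throughout, write $T'$ for the tree $T$ equipped with the modified edge lengths (all lengths scaled by $k$, then $d$ added to each pendant edge), and write $\widetilde{T}$ and $\widetilde{T}'$ for the induced subtrees on $\widetilde{X} = X \setminus X'$ coming from $T$ and $T'$ respectively.

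First I would treat the ranking on $T$ itself. By Lemma \ref{lem_FPscale} scaling every edge length by $k \in \mathbb{R}_+$ preserves each strict inequality $FP_T(x) > FP_T(y)$, and by Lemma \ref{lem_FPconst} subsequently adding $d$ to all pendant edges preserves these inequalities as well. Since $\pi_T$ is strict, no two $FP_T$-values coincide, so $\pi_{T'}$ lists the taxa in exactly the same strict order as $\pi_T$; in particular $\pi_{T'}$ is strict.

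The crux of the argument, and the step I expect to be the main obstacle, is controlling the behaviour on the induced subtree. I claim that $\widetilde{T}'$ carries precisely the edge lengths obtained by applying the same two operations (scale by $k$, then add $d$ to pendant edges) directly to $\widetilde{T}$; that is, the edge operations commute with the induced-subtree construction. To see this, recall that each edge of $\widetilde{T}$ results from summing the lengths of a path of edges of $T$ that gets contracted when the degree-2 vertices created by deleting $X'$ are suppressed. Scaling commutes with this summation, since the sum of the $k$-scaled lengths equals $k$ times the original sum. The delicate point is the additive constant: I would argue that every pendant edge of $\widetilde{T}$ arises from a contracted path in $T$ that contains exactly one pendant edge of $T$ --- the edge incident to the surviving leaf --- together with (possibly several) inner edges of $T$, whereas every inner edge of $\widetilde{T}$ arises from contracting inner edges of $T$ only, since no leaf endpoint is involved. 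Hence the operation on $T$ contributes the constant $d$ to exactly those contracted paths that become pendant edges of $\widetilde{T}$, and to each such path exactly once, which is precisely the effect of adding $d$ to the pendant edges of $\widetilde{T}$ directly. This verifies the claimed commutativity.

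Given this, the proof finishes quickly. Applying Lemmas \ref{lem_FPscale} and \ref{lem_FPconst} to $\widetilde{T}$ shows that $\pi_{\widetilde{T}'}$ ranks the taxa of $\widetilde{X}$ in exactly the same strict order as $\pi_{\widetilde{T}}$. Since $\pi_T$ is reversible with respect to $X'$, the order of $\pi_{\widetilde{T}}$ is the reverse of the order of $\pi_T$ on $\widetilde{X}$; combining the three identifications $\pi_{T'} = \pi_T$, $\pi_{\widetilde{T}'} = \pi_{\widetilde{T}}$, and this reversal yields that $\pi_{\widetilde{T}'}$ is the reverse of $\pi_{T'}$ on $\widetilde{X}$, with both rankings strict. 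Thus the modified edge lengths induce a strict and reversible ranking with respect to the deletion of $X'$, as claimed. The only genuinely non-routine step is the edge-length bookkeeping in the commutativity claim, specifically confirming that $d$ is added to each induced pendant edge exactly once and to no inner edge.
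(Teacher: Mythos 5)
Your proof is correct and follows essentially the same route as the paper's: apply Lemmas \ref{lem_FPscale} and \ref{lem_FPconst} to both $T$ and $\widetilde{T}$, with the key step being that scaling by $k$ and adding $d$ to pendant edges commute with the induced-subtree construction (merged edge lengths scale consistently, and each pendant edge of $\widetilde{T}$ absorbs the constant $d$ exactly once). Your more detailed bookkeeping of why each contracted path contains exactly one pendant edge of $T$ is a welcome elaboration of what the paper states more briefly, but it is the same argument.
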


\begin{proof}
By Lemmas \ref{lem_FPscale} and \ref{lem_FPconst}, the ranking $\pi_T$ of the taxa in $X$ induced by $FP_T$ is not affected by scaling all branch lengths by $k$ or adding $d$ to the pendant edges. So we only need to show that the same is true for the ranking $\pi_{\widetilde{T}}$ of the taxa in $\widetilde{X}:=X\setminus X'$. However, note that in $\widetilde{T}$ (i.e. in the subtree remaining when the taxa in $X'$ get deleted), due to the suppression of degree-2 vertices, some edges get merged, but as both have been scaled by the same factor $k$, this also holds for the new long edge. For instance, if edges $e_1$ and $e_2$ get merged, the scaling leads to two edges $k\cdot \lambda_{e_1}$ and $k\cdot \lambda_{e_2}$ in $T$, but to a single edge of length $k\cdot \lambda_{e_1}+k\cdot \lambda_{e_2}=k\cdot (\lambda_{e_1}+\lambda_{e_2})$ in $ \widetilde{T}$. So the edges in $\widetilde{T}$ get scaled by the same factor as the edges in $T$. In particular, by Lemma \ref{lem_FPscale}, this does not affect their ranking. Moreover, if we subsequently add $d$ to the pendant edges of $T$, this also adds $d$ to the pendant edges of $\widetilde{T}$ (regardless of whether these edges are merged with some inner edges or not). Thus, by Lemma \ref{lem_FPconst}, this again does not affect the ranking induced by $FP_{\widetilde{T}}$, which completes the proof.
\end{proof}

The final lemma that we need to prove Theorem \ref{thm_cherries} is somewhat technical. It shows that if you delete precisely one leaf per cherry in a rooted binary phylogenetic tree $T$ with more than two leaves, the remaining tree has at least two leaves.

\begin{lem}\label{lem_atleast2leaves} Let $T$ be a rooted binary phylogenetic $X$-tree with $|X|=n>2$. Let $\widetilde{T}$ be the tree resulting from $T$ when precisely one leaf from each cherry of $T$ is deleted (and the resulting vertex with in-degree 1 and out-degree 1 is suppressed, respectively). Then, $\widetilde{T}$ has at least two leaves. \end{lem}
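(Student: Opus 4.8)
The plan is to reduce the statement to a simple counting bound on the number of cherries of $T$. First I would observe that deleting exactly one leaf from each cherry removes precisely $c_T$ leaves from $T$, where $c_T$ is the number of cherries of $T$, and that the subsequent suppression of the resulting in-degree-1, out-degree-1 vertices neither creates nor destroys leaves (suppressing such a vertex only merges two edges into one). Since the cherries of $T$ are fixed in advance and their leaf sets are pairwise disjoint, the $c_T$ deletions do not interfere with one another, so the leaf set $\widetilde{X}$ of $\widetilde{T}$ has exactly $n-c_T$ elements. Hence the desired conclusion ``$\widetilde{T}$ has at least two leaves'' is equivalent to the inequality $c_T \leq n-2$.

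Next I would bound the number of cherries from above. A leaf lies in a cherry precisely when its unique sibling is also a leaf, and this sibling is uniquely determined; consequently two distinct cherries cannot share a leaf, and the leaf sets of the cherries form pairwise disjoint two-element subsets of $X$. Counting leaves therefore gives $2 c_T \leq n$, i.e. $c_T \leq \lfloor n/2 \rfloor$.

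Finally I would combine this with the elementary inequality $\lfloor n/2 \rfloor \leq n-2$, valid for every integer $n \geq 3$ (it is an equality exactly when $n=3$ or $n=4$, and strict for $n \geq 5$). Chaining the two bounds yields $c_T \leq \lfloor n/2 \rfloor \leq n-2$, so that $n-c_T \geq 2$, which is exactly the claim.

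The argument is essentially routine and I do not expect a serious obstacle. The only points that require a moment of care are verifying that the leaf count of $\widetilde{T}$ is \emph{exactly} $n-c_T$ (so that suppressing the newly created degree-2 vertices does not inadvertently delete an additional leaf, and that any \emph{new} cherries formed during suppression are irrelevant, since the set of deleted leaves is determined by the original cherries of $T$), together with checking the boundary cases $n=3$ and $n=4$, where the bound $c_T \leq n-2$ is tight and $\widetilde{T}$ has exactly two leaves.
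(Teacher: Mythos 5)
Your proof is correct and follows essentially the same route as the paper: both bound the number of cherries by $c_T \leq \lfloor n/2 \rfloor$ via disjointness of cherries, note that $\widetilde{T}$ has exactly $n - c_T$ leaves, and conclude by elementary arithmetic (the paper via $n - \lfloor n/2 \rfloor \geq n/2 > 1$, you via $\lfloor n/2 \rfloor \leq n-2$, which are interchangeable). Your extra care about suppression preserving the leaf count is a harmless refinement of the same argument.
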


\begin{proof} Let $c_T$ denote the number of cherries in $T$. Clearly, $c_T \leq \lfloor\frac{n}{2} \rfloor$, because if $n$ is even, at most \emph{all} leaves can be contained in a cherry, in which case there are $\frac{n}{2}$ cherries, and if $n$ is odd, at most $n-1$ leaves can be contained in cherries, in which case there are $\frac{n-1}{2}$ cherries. So if we now delete one leaf per cherry, this implies we delete $c_T\leq \lfloor\frac{n}{2} \rfloor$ leaves. Thus, for the number $n_{\widetilde{T}}$ of leaves in $\widetilde{T}$ we have: $n_{\widetilde{T}} = n-c_T \geq n- \lfloor\frac{n}{2} \rfloor \geq n- \frac{n}{2} =  \frac{n}{2}>1.$ The latter inequality is due to the fact that $n>2$ by assumption. So we have that $\widetilde{T}$ has more than one leaf; so it must have at least two leaves. This completes the proof. 
\end{proof}

We are now finally in the position to prove the main theorem of this section, namely Theorem  \ref{thm_cherries}.

\begin{proof}[Proof of Theorem \ref{thm_cherries}]
We prove the statement by induction on $n$. If $n=2$, there is only one rooted binary tree shape, namely the one that consists only of a cherry, say $[x_1,x_2]$ and pendant edges $e_1$ and $e_2$ with edge lengths $\lambda_1$ and $\lambda_2$, respectively. Then, $FP_T(x_1)=\lambda_1$ and $FP_T(x_2)=\lambda_2$. We choose $\lambda_1>\lambda_2>0$. So if we now delete leaf $x_2$, which has minimal $FP_T$ value, we derive a tree $\widetilde{T}$ that only has one leaf $x_1$. So $x_1$, the leaf that formerly had maximum $FP_T$ value, is still present, and -- because it is the only leaf -- it has minimum $FP_{\widetilde{T}}$ value. This completes the base case of the induction. 

Now assume the statement holds for all trees with at most $n-1$ leaves and let $T=(T_a,T_b)$ be a tree with $n$ leaves. We now distinguish between two cases:

\begin{enumerate}[(i)]
\item $T_b$ consists of only one leaf, i.e. $n_b=1$. Without loss of generality, we may assume that leaf $x_n$ is the unique vertex in $T_b$. In this case, $T$ looks as depicted in Figure \ref{fig_T_in_part1ofcherryproof}. Note that in this case, all cherries of $T$ are actually contained in $T_a$. Then by the inductive hypothesis, as $T_a$ has strictly fewer leaves than $T$, there exist edge lengths for $T_a$ that induce a strict and reversible ranking $\pi_{T_a}$ which fulfills all requirements stated by the theorem. We fix these edge lengths accordingly and calculate $FP_{T_a}(x_1),\ldots, FP_{T_a}(x_{n-1})$. Without loss of generality, we have \begin{equation}\label{FPTa} FP_{T_a}(x_1)> FP_{T_a}(x_2)> \ldots > FP_{T_a}(x_{n-1}).\end{equation} In particular, we may assume $FP_{T_a}(x_1)=\max\limits_{x \in X_a}\{FP_{T_a}(x)\}$ (otherwise, we could relabel the taxa of $T$ accordingly). As $\pi_{T_a}$ is reversible by assumption and as $\widetilde{T_a}$ contains taxon $x_1=\argmax\limits_{x \in X_a}FP_{T_a}(x)$, this immediately leads to \begin{equation} \label{FPTaTatilde} FP_{\widetilde{T_a}}(x_1)< FP_{\widetilde{T_a}}(\widetilde{x}_2)< \ldots < FP_{\widetilde{T_a}}(\widetilde{x}_{n-1-c_T}),\end{equation}

where $c_T$ denotes the number of cherries in $T$ (and $T_a$) and thus the number of deleted leaves. Moreover, we know by assumption that$x_1 \in \widetilde{X_a}$, and we call the other elements of   $\widetilde{X_a}$ $\widetilde{x}_2,\ldots, \widetilde{x}_{n-1-c_T}$.

\begin{figure}
	\centering
	\includegraphics[scale=0.3]{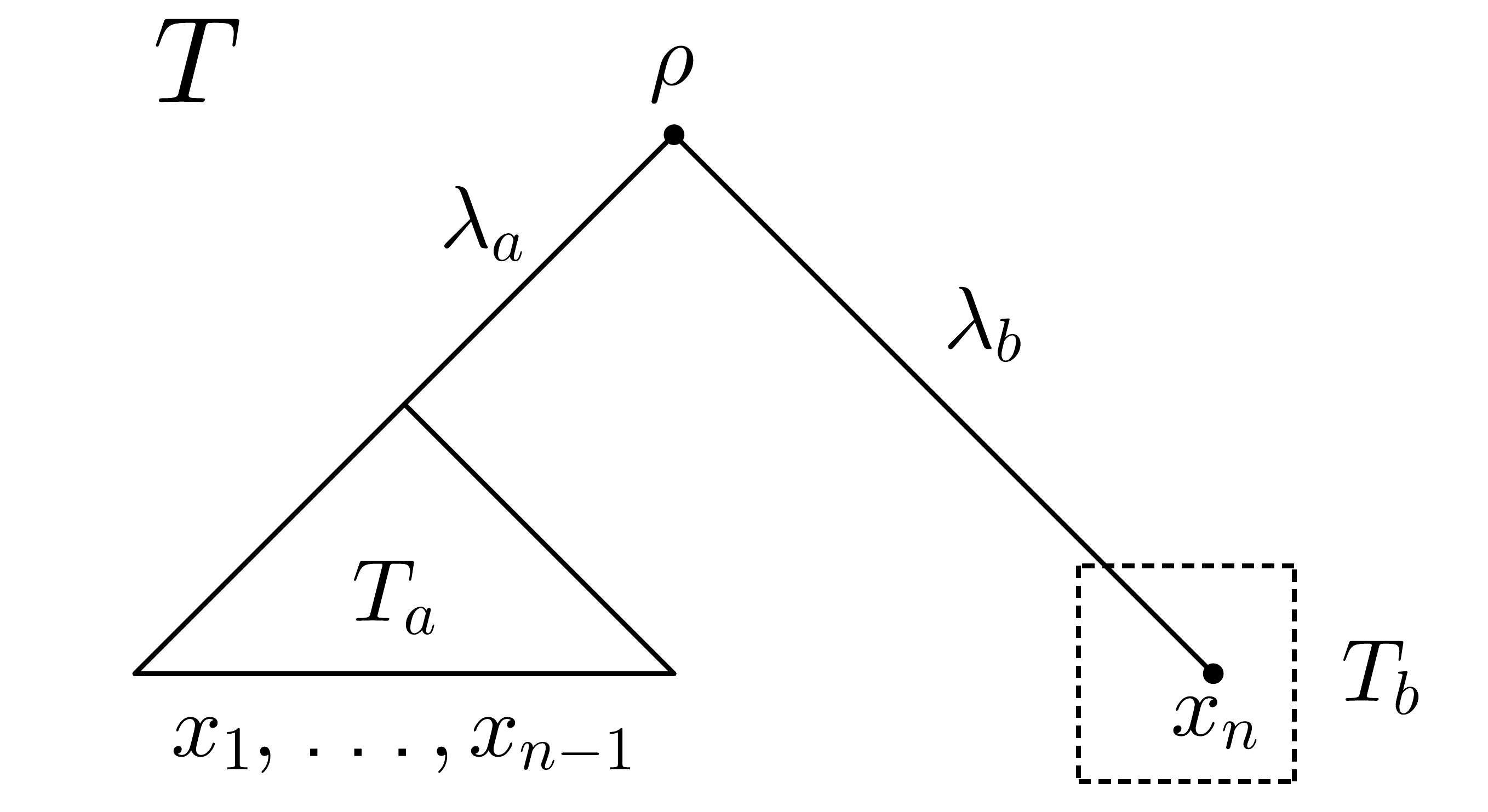}
	\caption{Tree $T$ in part 1 of the proof of Theorem \ref{thm_cherries}. In this case, $T_b$ consists of only one vertex, and all cherries of $T$ are contained in $T_a$.}
	\label{fig_T_in_part1ofcherryproof}
\end{figure}

 Note that we know (as ranking $\pi_{T_a}$ is reversible) that $\widetilde{x}_i < \widetilde{x}_j$ if and only if $i <j$. This means that the subset that remains when one leaf per cherry is deleted from $T_a$ equals $\widetilde{X_a}=\{x_1,\widetilde{x}_2,\ldots,\widetilde{x}_{n-1-c_T}\}\subset X_a$ and that the elements of $ \widetilde{X_a}$ are strictly reversed by $FP_{\widetilde{T_a}}$  compared to $FP_{T_a}$. 

We now use ranking $\pi_{T_a}$ to construct the desired ranking $\pi_T$. Therefore, note that $FP_T(x)=FP_{T_a}(x)+\frac{\lambda_a}{n-1}$ for all $x=x_1,\ldots,x_{n-1}$. Using \eqref{FPTa}, this immediately leads to \begin{equation}\label{FPT} FP_{T}(x_1)> FP_{T}(x_2)> \ldots > FP_{T}(x_{n-1}).\end{equation}

Note that we also know that $\widetilde{X_a}$ does \emph{not} contain taxon $x_{n-1}$ as $\pi_{T_a}$ fulfills all requirements of the theorem by the inductive hypothesis, so $x_{n-1}=\argmin\limits_{x\in X_a} FP_{T_a}(x)$ has to be affected by the cherry leaf deletion.

Next, by Lemma \ref{lem_deletionincrease}, the deletion of one leaf per cherry can only further increase the FP index of a taxon in $T_a$ or not change it at all, but it cannot decrease it, so we have $FP_{\widetilde{T_a}}(x)\geq FP_{T_a}(x)$ for all $x \in \widetilde{X_a}$. Moreover, none of these deletions can affect $T_b$, as this subtree consists only of one leaf, which on its path to the root of $T$ does not share an edge with any of the leaves of $T_a$. So we have \begin{equation}\label{FPafterdeletion} FP_{\widetilde{T_a}}(x)+\frac{\lambda_a}{n-1-c_T}=FP_{\widetilde{T}}(x)> FP_T(x)=FP_{T_a}(x)+\frac{\lambda_a}{n-1} \mbox{ for all } x\in \widetilde{X_a}.\end{equation} Here, the strict inequality is due to the fact that $FP_{\widetilde{T_a}}(x)\geq FP_{T_a}(x)$ and $\frac{\lambda_a}{n-1-c_T}>\frac{\lambda_a}{n-1}$ for all $x\in \widetilde{X_a}$, as $c_T\geq 1$. Moreover, we have $FP_{\widetilde{T}}(x_n)=FP_T(x_n)=\lambda_b$.

Combining Equations \eqref{FPTaTatilde} and \eqref{FPafterdeletion}, we immediately get \begin{equation}\label{comb} FP_{\widetilde{T}}(x_1)< FP_{\widetilde{T}}(\widetilde{x}_2)< \ldots < FP_{\widetilde{T}}(\widetilde{x}_{n-1-c_T}).\end{equation}

We now choose $\lambda_a>0$ arbitrarily, e.g.  $\lambda_a=1$, and then want to choose $\lambda_b$ such that $FP_T(x_n)=\lambda_b>FP_T(x_1)$ and $FP_{\widetilde{T}}(x_n)=\lambda_b<FP_{\widetilde{T}}(x_1)$. By Equation  \eqref{FPafterdeletion}, we have $FP_T(x_1)<FP_{\widetilde{T}}(x_1)$, so we can simply choose $\lambda_b$ between these two values. For instance, we can set $\lambda_b:=FP_{T_a}(x_1)+\frac{1}{2}\left(\frac{\lambda_a}{n-1}+\frac{\lambda_a}{n-1-c_T} \right)$. This choice of $\lambda_b$ is in the middle between $FP_T(x_1)$ and $FP_{\widetilde{T}}(x_1)$. Therefore, using Equation \eqref{FPT} we now have:

\begin{equation} \lambda_b=FP_T(x_n)>FP_{T}(x_1)> FP_{T}(x_2)> \ldots > FP_{T}(x_{n-1}). \end{equation}

Moreover, using Equation \eqref{comb}, we immediately get: \begin{equation} \lambda_b=FP_{\widetilde{T}}(x_n)<FP_{\widetilde{T}}(x_1)< FP_{\widetilde{T}}(\widetilde{x}_2)< \ldots < FP_{\widetilde{T}}(\widetilde{x}_{n-1-c_T}). \end{equation}

Therefore, we have found a strict and reversible ranking for $T$ with respect to $\widetilde{T}$. Moreover, we know that taxon $x_n$ has maximal FP index in $T$, i.e. $x_n=\argmax\limits_{x \in X} FP_T(x)$, and $x_n$ is contained in $\widetilde{X}$. Together with the fact that taxon $x_{n-1}$, which was deleted from $X_a$ to get $\widetilde{X_a}$ and which has the lowest $FP_{T_a}$ value by \eqref{FPTa} and thus also the lowest $FP_T$ value by \eqref{FPT}, is \emph{not} contained in $\widetilde{X}$, this completes the first part of the proof.

\item We now consider the case where both $T_a$ and $T_b$ contain at least two leaves each, i.e. $|X_a|=n_a\geq 2$ and $|X_b|=n_b\geq 2$ and $n_a\geq n_b$. In particular, this implies that both $T_a$ and $T_b$ have at least one cherry and are thus affected by the described leaf deletion (one leaf per cherry). In the following, let $c_a$ and $c_b$ denote the numbers of cherries of $T_a$ and $T_b$, respectively. As $n_a$ and $n_b$ are both strictly smaller than $n$, for $T_a$ and $T_b$ we know by induction that there are edge lengths that allow for strict and reversible rankings $\pi_{T_a}$ and $\pi_{T_b}$ that fulfill all requirements stated by the theorem. We use these rankings to construct a strict and reversible ranking $\pi_T$ for $T$.

We first consider the case where $n_a=2$. As $2=n_a\geq n_b\geq 2$, this immediately implies $n_b=2$. Then, $T$ look as depicted in Figure \ref{fig_4bal}, and we choose the edge lengths as given in that Figure. The caption of this figure explains why the depicted edge lengths lead to a strict and reversible ranking which fulfills all requirements of the theorem. So for $n_a=2$, there remains nothing to show. 

Thus, we may assume from now on that $n_a>2$. By Lemma \ref{lem_atleast2leaves}, we may conclude that when we delete one leaf from each cherry of $T_a$ to obtain $\widetilde{T_a}$, $\widetilde{T_a}$ still has at least two leaves, say $x_1$ and $x_2$. 

\begin{figure}
	\centering
	\includegraphics[scale=0.3]{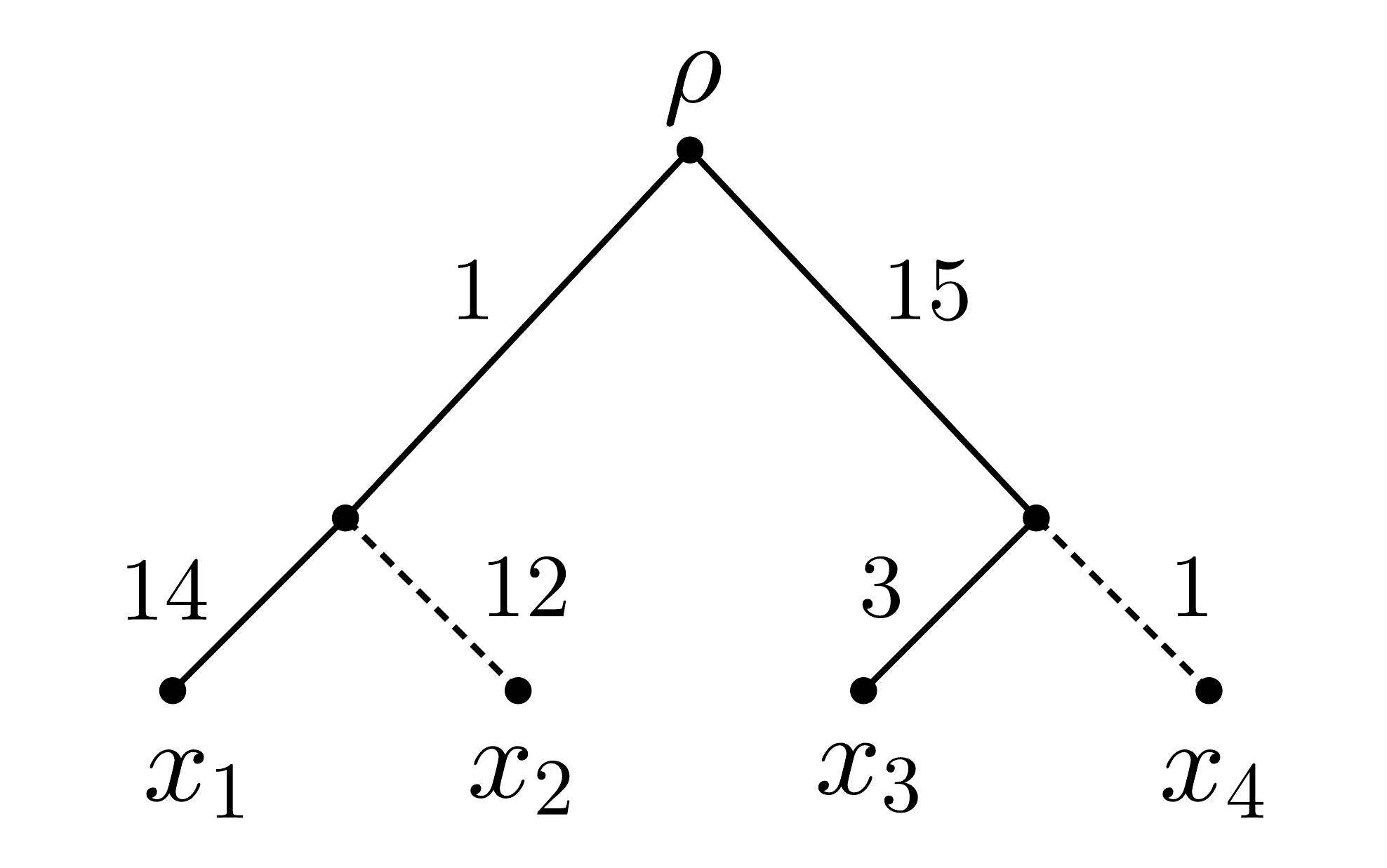}
	\caption{Tree $T$ with the depicted edge lengths leads to $FP_T(x_1)= 14.5   > FP_T(x_2) =12.5  > FP_T(x_3)=  10.5 > FP_T(x_4)=8.5 $. We then construct $\widetilde{T}$ by deleting leaves $x_2$ and $x_4$. This leads to $FP_{\widetilde{T}}(x_1)=15 < 18 = FP_{\widetilde{T}}(x_3)$. This shows that the ranking $\pi_T$ induced by the depicted edge lengths is strict and reversible. Note that leaf $x_1$ with the highest $FP_T$ value is still present after the deletion, while leaf $x_4$ with the lowest $FP_T$ value is amongst the deleted ones.}
	\label{fig_4bal}
\end{figure}

We now consider the rankings $\pi_{T_a}$ and $\pi_{T_b}$ and slightly modify the edge lengths for $T_a$ and $T_b$, but such that the resulting induced rankings remain in the exact same order as suggested by $\pi_{T_a}$ and $\pi_{T_b}$:

\begin{itemize}
\item We multiply all edges of $T_b$ by a factor $s_b>0$. The resulting edge lengths still induce a strict and reversible ranking by Corollary \ref{cor_lem3lem4}, and this ranking has the same properties as $\pi_{T_b}$ in the sense that it still fulfills all requirements of the theorem.
\item We then add a constant $d_b>0$ to all pendant edges of $T_b$. The resulting edge lengths still induce a strict and reversible ranking by Corollary \ref{cor_lem3lem4}. Again, this ranking has the same properties as $\pi_{T_b}$ in the sense that it still fulfills all requirements of the theorem.
\item We add a constant $d_a>0$ to all pendant edges of $T_a$. Again, the resulting edge lengths still induce a strict and reversible ranking by Corollary \ref{cor_lem3lem4} and still fulfill all requirements of the theorem. 
\end{itemize}

Let $\lambda_a$ denote the length of the edge leading from the root of $T$ to $T_a$, and let $\lambda_b$ denote the length of the edge leading from the root of $T$ to $T_b$. Then, $T$ now looks as depicted in Figure \ref{fig_modifiedTforproof}, and the FP indices of the leaves of $T$ are as follows:

\begin{equation} \label{eq_Ta} FP_T(x)= FP_{T_a}(x)+d_a+\frac{\lambda_a}{n_a} \mbox{ for all } x \in X_a , \mbox{ and }\end{equation}

\begin{equation} \label{eq_limit} FP_T(x)= FP_{T_b}(x)\cdot s_b+d_b+\frac{\lambda_b}{n_b}  \mbox{ for all } x \in X_b.\end{equation}

\begin{figure}
	\centering
	\includegraphics[scale=0.3]{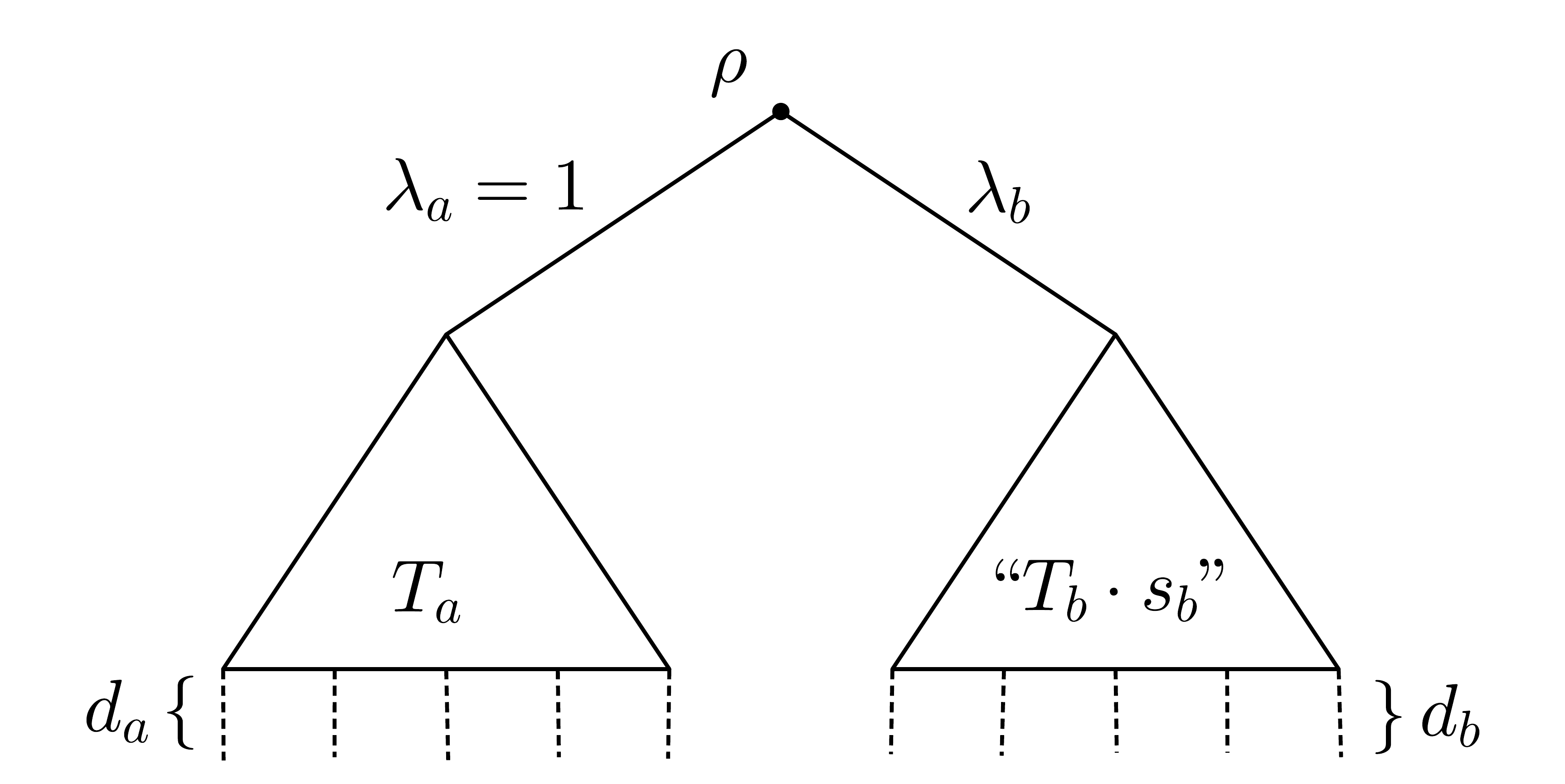}
	\caption{Tree $T$ with edge lengths based on the edge lengths of $T_a$ and $T_b$ that allow for a strict and reversible ranking, but with the modifications described in the second part of the inductive step in the proof of Theorem \ref{thm_cherries}:  Subtree $T_b$ is scaled by a factor $s_b>0$ (i.e. all edge lengths in $T_b$ are multiplied by $s_b$). Then, all pendant edges in $T_b$ are increased by a constant $d_b>0$. Last, all pendant edges in $T_a$ are increased by a constant $d_a>0$. 
	}
	\label{fig_modifiedTforproof}
\end{figure}

Now we delete all leaves from cherries of $T$ that had to be deleted from $T_a$ and $T_b$ to get to the strict and reversible rankings $\pi_{T_a}$ and $\pi_{T_b}$ we started with. Assume this implies that we delete $c_a$ cherry leaves from $T_a$ and $c_b$ cherry leaves from $T_b$. This results in tree  $\widetilde{T}=(\widetilde{T_a},\widetilde{T_b})$ with leaf set $\widetilde{X}$, where the leaf sets of $\widetilde{T_a}$ and $ \widetilde{T_b}$ are denoted by $\widetilde{X_a}$ and $\widetilde{X_b}$, respectively. 

Similarly as above, we can now express $FP_{\widetilde{T}}(x)$ for all taxa $x \in \widetilde{X}$: 

\begin{equation} \label{eq_limTatilde} FP_{\widetilde{T}}(x)= FP_{\widetilde{T_a}}(x)+d_a+\frac{\lambda_a}{n_a-c_a} \mbox{ for all } x \in \widetilde{X_a} , \mbox{ and }\end{equation}

\begin{equation} \label{eq_limit2} FP_{\widetilde{T}}(x)= FP_{\widetilde{T_b}}(x)\cdot s_b+d_b+\frac{\lambda_b}{n_b-c_b} \mbox{ for all } x \in \widetilde{X_b}.\end{equation}

Now, by Equations \eqref{eq_limit} and \eqref{eq_limit2} it is clear that as $s_b \rightarrow 0$, we have $FP_T(x) \rightarrow d_b+\frac{\lambda_b}{n_b}$ and $FP_{\widetilde{T}}(x)\rightarrow d_b+\frac{\lambda_b}{n_b-c_b}$ for all $x \in X_b$. 

Now, as stated above, we may assume that $\widetilde{T_a}$ has at least two leaves. So now let $x_1$, $x_2$ be two leaves from $T_a$ which are also both present in $\widetilde{T_a}$ and for which we have $FP_{\widetilde{T_a}}(x_1)<FP_{\widetilde{T_a}}(x_2)$ such that  there is \emph{no} taxon $y$ in $\widetilde{T_a}$ such that $FP_{\widetilde{T_a}}(x_1)<FP_{\widetilde{T_a}}(y)< FP_{\widetilde{T_a}}(x_2)$ (i.e. $x_1$ and $x_2$ are direct neighbors in the ranking induced by $FP_{\widetilde{T}}$). Moreover, as $\pi_{T_a}$ is strict and reversible by the inductive hypothesis, we have $FP_{T_a}(x_1)>FP_{T_a}(x_2)$. 

The high-level idea now is to perform the following two steps:

\begin{itemize}
\item First, we show that we can choose suitable values for $d_a$, $d_b$, $\lambda_a$ and $\lambda_b$ such that the limits of $FP_T(x)$ and $FP_{\widetilde{T}}(x)$ for $s_b \rightarrow 0$ lie in the open intervals $\left( FP_{T}(x_2), FP_{T}(x_1)\right)$ and $\left( FP_{\widetilde{T}}(x_1),FP_{\widetilde{T}}(x_2) \right)$, respectively. In particular, we will show that all variables can be chosen such that $d_b+\frac{\lambda_b}{n_b}$ is precisely in the middle of the first interval and $d_b+\frac{\lambda_b}{n_b-c_b}$ is precisely in the middle of the latter.
\item We then show that, exploiting the first step, we can choose a value of $s_b>0$ small enough to guarantee that \emph{all} $FP_T$ values of $x \in X_b$ are strictly contained in the first interval, and \emph{all} $FP_{\widetilde{T}}$ values of $x \in X_b$ are strictly contained in the latter.
\end{itemize}

Now we proceed as follows:

\begin{itemize}
\item We set $\lambda_a:=1$. 
\item We introduce an auxiliary variable $f$ and set \\$f:=\underbrace{FP_{\widetilde{T_a}}(x_1)-FP_{T_a}(x_1)}_{\geq 0\mbox{ \tiny by Lemma \ref{lem_deletionincrease}}}+\underbrace{FP_{\widetilde{T_a}}(x_2)-FP_{T_a}(x_2)}_{>0 \mbox{ \tiny by Lemma \ref{lem_onlyoneunchanged}}}>0$. 
\item Set $\lambda_b:=\left(\frac{1}{2}f +\frac{c_a}{n_a(n_a-c_a)}\right) \frac{n_b(n_b-c_b)}{c_b}$. Note that as $f,n_a,n_b,c_a,c_b>0$ and $n_a>c_a$ and $n_b>c_b$, we have $\lambda_b>0$. 
\item We introduce another auxiliary variable $t$ and set \\$t:= \frac{1}{2}f\frac{n_b-c_b}{c_b}+\frac{c_a}{n_a(n_a-c_a)}\cdot\frac{n_b-c_b}{c_b}-\frac{1}{n_a} - \frac{1}{2}FP_{T_a}(x_1)-\frac{1}{2}FP_{T_a}(x_2)$. 
\item Set $d_a:=\begin{cases} 1 & \mbox{ if } t <0 \\ t+1& \mbox{ else. } \end{cases}$ \\Note that by this choice of $d_a$, we can guarantee both that $d_a>0$ and $d_a>t$. 
\item Set $d_b:=\frac{1}{2}FP_{T_a}(x_1)+\frac{1}{2}FP_{T_a}(x_2)+d_a+\frac{1}{n_a}-\frac{\lambda_b}{n_b}$. We need to show that $d_b>0$. In order to do so, we first substitute the chosen value for  $\lambda_b$ into the definition of $d_b$ and get: \begin{equation}\label{eq_dblambdab}d_b=\frac{1}{2}FP_{T_a}(x_1)+\frac{1}{2}FP_{T_a}(x_2)+d_a+\frac{1}{n_a}- \frac{1}{n_b} \underbrace{\left( \left(\frac{1}{2}f +\frac{c_a}{n_a(n_a-c_a)}\right) \frac{n_b(n_b-c_b)}{c_b} \right)}_{=\lambda_b}. \end{equation} This term is larger than 0 if and only if $ d_a >   \left(\frac{1}{2}f +\frac{c_a}{n_a(n_a-c_a)}\right) \frac{n_b-c_b}{c_b} -\frac{1}{n_a}-\frac{1}{2}FP_{T_a}(x_1)-\frac{1}{2}FP_{T_a}(x_2).$ However, as the right-hand side of the latter inequality equals $t$ and as $d_a>t$ as shown above, this proves that indeed $d_b>0$. 
\end{itemize}

So with these values of $\lambda_a$, $\lambda_b$, $d_a$ and $d_b$, we have that $$ d_b+\frac{\lambda_b}{n_b}\overset{\eqref{eq_dblambdab}}{=}\frac{1}{2}FP_{T_a}(x_1)+\frac{1}{2}FP_{T_a}(x_2)+d_a+\frac{1}{n_a}$$ $$ = FP_{T_a}(x_2)+d_a+\frac{1}{n_a} + \frac{1}{2}\left( \left(FP_{T_a}(x_1)+d_a+\frac{1}{n_a}\right) -\left(FP_{T_a}(x_2)+d_a+\frac{1}{n_a} \right)\right)$$
$$\overset{\eqref{eq_Ta}}{=} FP_T(x_2)+\frac{1}{2}\left(FP_T(x_1)-FP_T(x_2)\right),$$
where the last equation uses $\lambda_a=1$.

Thus,  $d_b+\frac{\lambda_b}{n_b}$ is precisely the middle of the interval $\left( FP_{T}(x_2), FP_{T}(x_1)\right)$.

Analogously, again using $\lambda_a=1$, we have
\begin{align*}
d_b+\frac{\lambda_b}{n_b-c_b}&\overset{\eqref{eq_Ta}}{=}\frac{1}{2}FP_{T_a}(x_1)+\frac{1}{2}FP_{T_a}(x_2)+d_a+\frac{1}{n_a}-\frac{\lambda_b}{n_b} +\frac{\lambda_b}{n_b-c_b} \\
&= \frac{1}{2}FP_{T_a}(x_1)+\frac{1}{2}FP_{T_a}(x_2)+d_a+\frac{1}{n_a}+\lambda_b\frac{c_b}{n_b(n_b-c_b)} \\
&= \frac{1}{2}FP_{T_a}(x_1)+\frac{1}{2}FP_{T_a}(x_2)+d_a+\frac{1}{n_a} \\
&\qquad +\underbrace{\left( \left(\frac{1}{2}f +\frac{c_a}{n_a(n_a-c_a)}\right) \frac{n_b(n_b-c_b)}{c_b} \right)}_{=\lambda_b}\frac{c_b}{n_b(n_b-c_b)}\\
&=  \frac{1}{2}FP_{T_a}(x_1)+\frac{1}{2}FP_{T_a}(x_2)+d_a+\frac{1}{n_a}+\frac{c_a}{n_a(n_a-c_a)}+\frac{1}{2}f \\
&= \frac{1}{2}FP_{T_a}(x_1)+\frac{1}{2}FP_{T_a}(x_2)+d_a+\frac{1}{n_a}+\frac{c_a}{n_a(n_a-c_a)} \\
&\qquad +\frac{1}{2}    \underbrace{\left(   FP_{\widetilde{T_a}}(x_1)-FP_{T_a}(x_1)+FP_{\widetilde{T_a}}(x_2)-FP_{T_a}(x_2)  \right)}_{=f} \\
&= \frac{1}{2}FP_{\widetilde{T_a}}(x_1)+ \frac{1}{2}FP_{\widetilde{T_a}}(x_2)+d_a+\frac{1}{n_a}+\underbrace{\left(\frac{1}{n_a-c_a}-\frac{1}{n_a}\right)}_{=\frac{c_a}{n_a(n_a-c_a)}} \\
&= \frac{1}{2}FP_{\widetilde{T_a}}(x_1)+ \frac{1}{2}FP_{\widetilde{T_a}}(x_2)+d_a +\frac{1}{n_a-c_a} \\
&= FP_{\widetilde{T_a}}(x_1)+d_a+\frac{1}{n_a-c_a} \\
&\qquad+\frac{1}{2} \left( \left(FP_{\widetilde{T_a}}(x_2)+d_a+\frac{1}{n_a-c_a}  \right) -\left( FP_{\widetilde{T_a}}(x_1)+d_a+\frac{1}{n_a-c_a} \right)\right)\\
&\overset{\eqref{eq_limTatilde}}{=} FP_{\widetilde{T}}(x_1) + \frac{1}{2}\left( FP_{\widetilde{T}}(x_2)-FP_{\widetilde{T}}(x_1)\right).
\end{align*}

Thus,  $d_b+\frac{\lambda_b}{n_b-c_b}$ is precisely the middle of the interval $\left( FP_{\widetilde{T}}(x_1), FP_{\widetilde{T}}(x_2)\right)$.

\par \vspace{0.5cm}
In total, this implies by Equations \eqref{eq_limit} and \eqref{eq_limit2} that for a small enough value of $s_b>0$, \emph{all} taxa $x \in X_b$ will have $FP_T(x)$ in the interval $(FP_T(x_2), FP_T(x_1))$ as well as $FP_{\widetilde{T}}(x)$ in the interval $(FP_{\widetilde{T}}(x_1), FP_{\widetilde{T}}(x_2))$. 

\par \vspace{0.5cm}
However, in order for our proof to be entirely constructive, we now show how to find such values of $s_b$. Therefore, first note that by Equations \eqref{eq_limit} and \eqref{eq_limit2} we know that $FP_T(x)$ and $FP_{\widetilde{T}}(x)$ for all $x\in X_b$ are strictly larger than the middle of their respective intervals $(FP_T(x_2),FP_T(x_1))$ and $(FP_{\widetilde{T}}(x_1),FP_{\widetilde{T}}(x_2))$. So we only have to make sure that these values do not get larger than the upper bound of these intervals. 

Now, let $x_3, x_4 \in X_b$ be as follows: $x_3=\argmax\limits_{x \in X_b}FP_{T_b}(x)$ and $x_4=\argmax\limits_{x \in \widetilde{X_b}\subseteq X_b}FP_{\widetilde{T_b}}(x)$. This means that $x_3$ has the highest possible $FP_{T_b}$ value  and $x_4$ has the highest $FP_{\widetilde{T_b}}$ value. So if we ensure that these two maxima are still smaller than the upper bounds of the intervals $(FP_T(x_2),FP_T(x_1))$ and $(FP_{\widetilde{T}}(x_1),FP_{\widetilde{T}}(x_2))$, respectively, this will complete the proof.

So we need to choose $s_b>0$ such that $FP_{T}(x_3)<FP_T(x_1)$ and $FP_{\widetilde{T}}(x_4)< FP_{\widetilde{T}}(x_2)$. 

The first inequality holds if and only if $FP_{T_b}(x_3)\cdot s_b + d_b + \frac{\lambda_b}{n_b} < FP_{T_a}(x_1) + d_a + \frac{ \lambda_a}{n_a}$ (recall, however, that we set $\lambda_a=1$). Using our above choice of $d_b$, it is easy to see that this holds if and only if \begin{equation}\label{eq_sb1} s_b< \frac{FP_{T_a}(x_1)-FP_{T_a}(x_2)}{2FP_{T_b}(x_3)}. \end{equation} Note that the right-hand side of this inequality is strictly positive as we have $FP_{T_a}(x_1)>FP_{T_a}(x_2)$ by assumption.

However, as explained above, we also need $FP_{\widetilde{T}}(x_4)< FP_{\widetilde{T}}(x_2)$. This holds precisely if $FP_{\widetilde{T_b}}(x_4)\cdot s_b + d_b + \frac{\lambda_b}{n_b-c_b} < FP_{\widetilde{T_a}}(x_2) + d_a + \frac{\lambda_a}{n_a-c_a}$. Without further simplifying this term using our choices of variables from above, we now simply note that the latter inequality holds if and only if \begin{equation}\label{eq_sb2}s_b < \frac{FP_{\widetilde{T_a}}(x_2)+d_a+\frac{\lambda_a}{n_a-c_a}-d_b-\frac{\lambda_b}{n_b-c_b}}{FP_{\widetilde{T_b}}(x_4)}. \end{equation}

We now need to show that the right-hand side of Equation \eqref{eq_sb2} is strictly positive. This is true if and only if $FP_{\widetilde{T_a}}(x_2)+d_a+\frac{\lambda_a}{n_a-c_a}>d_b+\frac{\lambda_b}{n_b-c_b}$. However, recall that $d_b+\frac{\lambda_b}{n_b-c_b}$ is the middle of the interval $\left(FP_{\widetilde{T}}(x_1),FP_{\widetilde{T}}(x_2) \right)$, so it is strictly smaller than the upper bound of this interval, which is why we have $FP_{\widetilde{T}}(x_2) > d_b+\frac{\lambda_b}{n_b-c_b}$. In particular, this implies the desired inequality $FP_{\widetilde{T_a}}(x_2)+d_a+\frac{\lambda_a}{n_a-c_a}>d_b+\frac{\lambda_b}{n_b-c_b}$ as both $d_a$ and $\frac{\lambda_a}{n_a-c_a}$ are larger than 0.

\par\vspace{0.5cm}
Now we introduce an auxiliary variable $m$ and let \\$m:= \min \left\{  \frac{FP_{T_a}(x_1)-FP_{T_a}(x_2)}{2FP_{T_b}(x_3)}, \frac{FP_{\widetilde{T_a}}(x_2)+d_a+\frac{\lambda_a}{n_a-c_a}-d_b-\frac{\lambda_b}{n_b-c_b}}{FP_{\widetilde{T_b}}(x_4)}  \right\},$ where $\lambda_a, \lambda_b, d_a$ and $d_b$ are chosen as above. Then, $m>0$, and we can take any value from the open interval $(0,m)$ as possible values for $s_b$. For instance, we can simply choose $s_b:= \frac{1}{2}m$. By the definition of $m$, this will fulfill both \eqref{eq_sb1} and \eqref{eq_sb2}, and it will still be positive, which means this value of $s_b$ is a valid scaling factor for the edge lengths in $T_b$. 

Now that we have chosen all edge lengths of $T$ by downscaling $T_b$ and extending the pendant edge lengths of $T_a$ by constant $d_a$ and those of the scaled version of $T_b$ by constant $d_b$, we have achieved that the $FP_T$ values of all $x \in X_b$ are in between $FP_T(x_2)$ and $FP_T(x_1)$ in the exact order induced by $\pi_{T_b}$. Analogously, the $FP_{\widetilde{T}}$ values of all $x \in X_b$ are in between $FP_{\widetilde{T}}(x_1)$ and $FP_{\widetilde{T}}(x_2)$ in the exact order induced by $\pi_{T_b}$. Moreover, the  $FP_T$ and $FP_{\widetilde{T}}$ values of all $x \in X_a$ behave exactly as induced by $\pi_{T_a}$. Therefore, these edge lengths are in total such that the derived ranking is strict and reversible. 

Next, we argue why this strict and reversible ranking is such that the taxon with the highest $FP_T$ value is still contained in $\widetilde{T}$, while the taxon with the smallest $FP_T$ value is not. Note that $\pi_{T_a}$ is strict and reversible and fulfills all requirements of the theorem by the inductive hypothesis. So the leaf deletion used to reverse the ranking of $FP_{T_a}$ does not affect leaf $x$ with $x=\argmax\limits_{y \in X_a} FP_{T_a}(y)$, while the deletion of leaf $y$ with $y=\argmin\limits_{z \in X_a} FP_{T_a}(z)$ is ensured. So leaf $x$ is also still present in $\widetilde{T}$ while $y$ is not. Moreover, $x$ must have maximal $FP_T$ value and $y$ must have minimal $FP_T$ value, because the $FP_T$ value of all leaves of $T_b$ are now by construction between $FP_T(x_1)$ and $FP_T(x_2)$ and can therefore neither be maximal nor minimal for $T$. So the maximal leaf of $T_a$ must be the maximal leaf of $T$ and the minimal leaf of $T_a$ must also be the minimal leaf of $T$, as $FP_T(u)>FP_T(v) \Leftrightarrow FP_{T_a}(u)>FP_{T_a}(v)$ for all $u,v \in X_a$ by Equation \eqref{eq_Ta}. This completes the proof.
\end{enumerate}
\end{proof}


\setcounter{thm}{2}
\begin{thm}
Let $T$ be a rooted binary phylogenetic $X$-tree with $|X|=n \geq 2$. 
Then, there exist strictly positive edge lengths $\lambda_1, \ldots, \lambda_{2n-2}$ for $T$ such that 
\begin{enumerate}[(i)]
\item the ranking $\pi_T$ induced by the FP index for the leaves of $T$ is strict, and 
\item deleting leaf $y \coloneqq \argmin\limits_{x \in X} FP_T(x)$ from $T$ results in a strict ranking $\pi_{\widetilde{T}}$ for tree $\widetilde{T} \coloneqq T \setminus \{y\}$ on leaf set $\widetilde{X} = X \setminus \{y\}$, for which $w \coloneqq \argmax\limits_{x \in \widetilde{X}} FP_{\widetilde{T}}(x) = \argmin\limits_{x \in \widetilde{X}} FP_T(x)$.
\end{enumerate}
In words, there exist strictly positive edge lengths for $T$ such that if the species with the lowest $FP_T$ value goes extinct, the species with the second lowest $FP_T$ value has the highest $FP_{\widetilde{T}}$ value. 
\end{thm}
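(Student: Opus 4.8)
The plan is to give a single explicit edge-length construction organized around one cherry, rather than an induction. Since $n\geq 2$, the tree $T$ contains at least one cherry; I fix one, call its leaves $y$ and $w$, its parent vertex $p$, and (when $n\geq 3$) the edge directly above it $e_p=(q,p)$, which satisfies $D_{e_p}=2$ because exactly $y$ and $w$ descend from it. Everything rests on one observation: in $T$ the edge $e_p$ contributes $\lambda_{e_p}/2$ to each of $FP_T(y)$ and $FP_T(w)$, whereas in $\widetilde{T}=T\setminus\{y\}$ the vertex $p$ is suppressed and $e_p$ is merged into the pendant edge of $w$, so the full length $\lambda_{e_p}$ is now charged to $w$. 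Hence deleting $y$ raises $w$ by at least $\lambda_{e_p}/2>0$, and this jump dominates that of every other leaf, since any leaf $z\neq w$ shares with $y$ at most the edges strictly above $p$, all of which have at least three descendants and so yield only tiny increases.

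First I would dispose of the base case $n=2$: here $T$ is the single cherry $[y,w]$, and choosing $\lambda_y<\lambda_w$ makes $y=\argmin_{x\in X}FP_T(x)$, while $\widetilde{T}$ consists of the single leaf $w$, which is vacuously both $\argmax_{x\in\widetilde{X}}FP_{\widetilde{T}}(x)$ and $\argmin_{x\in\widetilde{X}}FP_T(x)$, so (i) and (ii) hold trivially.

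For $n\geq 3$ I would fix a large pendant scale $L$ and a value $\lambda_{e_p}$ with $L<\lambda_{e_p}<2L$, set every internal edge other than $e_p$ to a common tiny length $\eta>0$, and give $y$ and $w$ tiny pendant lengths $\lambda_y<\lambda_w$. Then $FP_T(y),FP_T(w)\approx\lambda_{e_p}/2$ while $FP_{\widetilde{T}}(w)\approx\lambda_{e_p}$, so the two numbers $FP_T(w)$ and $FP_{\widetilde{T}}(w)$ bound an open window of width $\approx\lambda_{e_p}/2>0$. Because the pendant length of a remaining leaf $z$ enters $FP_T(z)$ with coefficient one and affects no other leaf, I can tune each such pendant independently so that the values $FP_T(z)$ are distinct, lie strictly inside this window above $FP_T(w)$, and keep a uniform margin below $FP_{\widetilde{T}}(w)$. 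Taking $\eta$ small enough forces every internal contribution, and in particular every deletion-induced increase of the $z$'s (which by Lemma~\ref{lem_deletionincrease} is nonnegative and here arises only from edges strictly above $p$), to be smaller than that margin. The resulting rankings $\pi_T$ and $\pi_{\widetilde{T}}$ are then strict, $y$ is the global minimum and $w$ the second minimum in $T$, and $w$—pushed past the top of the window by its jump $\lambda_{e_p}/2$—is the maximum in $\widetilde{T}$, giving $w=\argmax_{x\in\widetilde{X}}FP_{\widetilde{T}}(x)=\argmin_{x\in\widetilde{X}}FP_T(x)$.

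The main obstacle is the tension between the two goals: pushing $w$ to the top after deletion demands that $\lambda_{e_p}$ be comparable to the pendant scale $L$, yet the very same edge already inflates $FP_T(w)$ before deletion and thereby threatens the requirement that $w$ be only the \emph{second} lowest. This is resolved precisely by the two-sided window $L<\lambda_{e_p}<2L$ together with the separation of scales $\eta\ll L$: the former guarantees that the target interval $\bigl(FP_T(w),FP_{\widetilde{T}}(w)\bigr)$ is nonempty, and the latter guarantees that the increases of all other affected leaves are negligible, so that $w$'s comparatively large jump carries it strictly above every competitor. Checking that these choices can be made simultaneously and that no ties occur is then routine once the scales are fixed.
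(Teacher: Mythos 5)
Your construction is correct, but it takes a genuinely different route from the paper's. The paper proves this statement by induction on $n$ via the root decomposition $T=(T_a,T_b)$: the inductive hypothesis is applied to the larger subtree $T_a$ (so that the lowest and second-lowest leaves $x_1,x_2$ lie there), and then \emph{all} leaves of $T_b$ are packed into the window $\left(FP_T(x_2),\,FP_{\widetilde{T}}(x_2)\right)$ by invoking the realizability result of Lemma \ref{lem_strictphi} (any prescribed vector of positive values can be realized as FP indices), with Lemmas \ref{lem_deletionincrease} and \ref{lem_nonaffectedsubtree} controlling how the deletion propagates between the two subtrees. You avoid both the induction and the realizability lemma: you anchor the construction on an explicitly chosen cherry $[y,w]$ and use its stalk edge $e_p$ (with $D_{e_p}=2$) as the engine --- deleting $y$ hands the full length $\lambda_{e_p}$ to $w$, a jump of at least $\lambda_{e_p}/2$ --- while the separation of scales $\eta\ll L$ makes every other leaf's deletion-induced increase negligible, and direct tuning of the remaining pendant edges (a hands-on substitute for Lemma \ref{lem_strictphi}, legitimate here because with tiny internal edges a leaf's FP value is essentially its pendant length) places all other leaves inside the same kind of window $\left(FP_T(w),\,FP_{\widetilde{T}}(w)\right)$. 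The two mechanisms agree at heart: unwinding the paper's recursion, its $x_1,x_2$ also end up as the two leaves of a cherry, and both arguments promote $w$ by enlarging its share of the edges above that cherry; your version makes this explicit, self-contained and non-recursive, at the price of re-deriving locally what the paper obtains from machinery it reuses elsewhere (e.g.\ Lemma \ref{lem_strictphi} reappears in the proof of Theorem \ref{thm_boundoneleaf}). Two of your ``routine'' steps deserve an explicit sentence in a final write-up: first, fix $\eta$ (relative to $L$) \emph{before} choosing the target values, so that the window, the margin, and the condition on $\eta$ are not chosen circularly (this works because the window width is at least $\lambda_{e_p}/2>L/2$ uniformly in $\eta$); second, the post-deletion increments of the leaves $z\neq w$ are constants that do not depend on their pendant lengths, so the targets can be chosen generically to exclude ties in $\pi_{\widetilde{T}}$ as well as in $\pi_T$.
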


In order to prove Theorem \ref{thm_secondleast_becomes_first}, we require two more lemmas.  The first one shows that the FP indices of leaves in a maximal pendant subtree that is not affected by a leaf deletion will not change.

\begin{lem}\label{lem_nonaffectedsubtree} Let $T=(T_a,T_b)$ be a binary phylogenetic $X$-tree with $|X|=n\geq2$. Let $\widetilde{T}$ be a tree that results from $T$ by deleting some, but \emph{not all}, leaves from $T_a$ (or $T_b$) only. Then, we have $FP_{\widetilde{T}}(x)=FP_{T}(x)$ for all $x \in T_b$ (or $T_a$, respectively).
\end{lem}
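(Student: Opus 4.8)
The plan is to prove the statement directly from the definition of the FP index by tracking, for each leaf $x \in X_b$, every edge on its root path together with the number of leaves descending from that edge, and then showing that deleting leaves from $T_a$ leaves all of these quantities untouched. By symmetry it suffices to treat the case where the deleted leaves all lie in $T_a$; the case of deletions from $T_b$ is identical after swapping the roles of $a$ and $b$.

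First I would fix $x \in X_b$ and examine the path $P(T;\rho,x)$ from the root to $x$. Because $x$ is a leaf of the maximal pendant subtree $T_b$ rooted at the child $b$ of $\rho$, this path consists of the single root edge $(\rho,b)$ followed by a path that lies entirely within $T_b$. The key structural observation is that, since we delete some but \emph{not all} leaves of $T_a$, the subtree $T_a$ still contains at least one leaf; hence $\rho$ retains out-degree $2$ in the induced subtree $\widetilde{T}$ and is not suppressed, and the decomposition $\widetilde{T}=(\widetilde{T_a},T_b)$ is preserved with $T_b$ entirely intact. In particular no leaf of $T_b$ is removed and no interior vertex of $T_b$ becomes degree two, so every edge of $T_b$, as well as the root edge $(\rho,b)$, survives in $\widetilde{T}$ with its original length.

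Next I would check that the descendant counts $D_e$ are unchanged for each edge $e$ on $P(T;\rho,x)$. For an edge $e$ lying inside $T_b$, all leaves descending from $e$ belong to $X_b$, and since no leaf of $T_b$ is deleted, $D_e$ is the same in $T$ and in $\widetilde{T}$. For the root edge $e=(\rho,b)$, the leaves descending from $e$ are exactly the leaves of $T_b$, i.e. $D_e=n_b$ both before and after the deletion. Thus every edge on the path contributes the identical summand $\lambda_e/D_e$ to $FP_T(x)$ and to $FP_{\widetilde{T}}(x)$, and summing over the (unchanged) path using \eqref{Def_FP} yields $FP_{\widetilde{T}}(x)=FP_T(x)$.

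I expect the only real subtlety to be the bookkeeping around vertex suppression: one must confirm that the deletions on the $T_a$ side, which may suppress the vertex $a$ and merge the edge $(\rho,a)$ with an edge below it, never reach across the root to alter the edge $(\rho,b)$ or any edge of $T_b$. This is exactly what the hypothesis that not all leaves of $T_a$ are deleted guarantees, since it keeps $\rho$ of out-degree two and hence preserves the partition of the leaf set at the root; everything else is a direct term-by-term comparison of the defining sum.
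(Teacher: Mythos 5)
Your proof is correct and follows essentially the same route as the paper's: both arguments hinge on the observation that deleting some but not all leaves of $T_a$ leaves the root unsuppressed and $T_b$ (including the edge $(\rho,b)$ and all descendant counts $D_e$) entirely intact, so every summand $\lambda_e/D_e$ in the defining sum for $FP$ is unchanged. The paper merely packages this term-by-term comparison into the identity $FP_T(x)=FP_{T_b}(x)+\frac{\lambda_b}{n_b}$ together with $\widetilde{T_b}=T_b$, which is the same computation in compact form.
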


\begin{proof} Let $T=(T_a,T_b)$ and $\widetilde{T}$ be as described in the theorem. Assume without loss of generality that the deleted leaf set $X'$ is a strict subset of the taxon set $X_a$ of $T_a$. Let $\widetilde{T_a}$ and $\widetilde{T_b}$ be the trees that result from $T_a$ and $T_b$, respectively when the leaves of $X'\subset X_a$ are deleted, and denote by $\lambda_b$ the length of the edge leading to $T_b$ in $T$ and by $n_b$ the number of leaves in $T_b$. Then, $FP_T(x)=FP_{T_b}(x) + \frac{\lambda_b}{n_b}$ and $FP_{\widetilde{T}}(x)=FP_{\widetilde{T_b}}(x) + \frac{\lambda_b}{n_b}$ for all $x \in X_b$. However, as no leaf from $T_b$ got deleted and not all leaves from $T_a$ have been deleted so that the edge leading to $T_b$ does not get suppressed, we have $\widetilde{T_b}=T_b$,  which shows that $FP_T(x)=FP_{\widetilde{T}}(x)$ for all $x \in X_b$.  This completes the proof. 
\end{proof}

The next lemma shows that any vector of strictly positive real numbers can be realized as a vector of FP indices on a phylogenetic tree $T$ with strictly positive edge lengths.

\begin{lem}\label{lem_strictphi} Let $T$ be a rooted binary phylogenetic tree on leaf set $X=\{x_1,\ldots,x_n\}$, and let $\theta=(\theta_1,\ldots,\theta_n) \in \mathbb{R}^n_+$ be a vector of strictly positive values. Then, there exist strictly positive edge lengths for $T$ such that the induced FP indices equal $\theta$, i.e. $FP_T({x_i}) =\theta_i$ for all $i=1,\ldots,n$. 
\end{lem}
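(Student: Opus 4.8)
The plan is to exploit the fact that, in the FP index, the pendant edge of a leaf feeds into exactly that leaf's value and into no other. Concretely, for each leaf $x_i$ let $e_i$ denote its incident pendant edge; since $x_i$ is the only leaf descended from $e_i$ we have $D_{e_i}=1$, so $e_i$ contributes its full length $\lambda_{e_i}$ to $FP_T(x_i)$ and appears in no other leaf's FP value. Splitting the defining sum in \eqref{Def_FP} into the pendant edge and the inner edges on the path from $\rho$ to $x_i$, I would write
\[
FP_T(x_i) = \lambda_{e_i} + c_i, \qquad c_i := \sum_{\substack{e \in P(T;\rho,x_i)\\ e \text{ inner}}} \frac{\lambda_e}{D_e},
\]
where $c_i$ depends only on the inner edge lengths and the topology. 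This \emph{decouples} the system: each pendant length $\lambda_{e_i}$ is a free parameter controlling exactly one target value.

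The construction then proceeds in two steps. First I fix the inner edges: assign every inner edge the common length $\delta>0$, so that $c_i = \delta \sum_{e}\tfrac{1}{D_e}$ (the sum over inner edges on the path to $x_i$) equals $\delta$ times a constant determined by the topology. Writing $M$ for the maximum of these constants over all leaves, I choose $\delta$ small enough that $\delta M < \min_i \theta_i$; this guarantees $c_i < \theta_i$ for every $i$. Second, I set the pendant lengths to $\lambda_{e_i} := \theta_i - c_i$, which are strictly positive by the choice of $\delta$. Every edge of $T$ now carries a strictly positive length, and by construction $FP_T(x_i) = c_i + (\theta_i - c_i) = \theta_i$ for all $i$, as required. (For $n=2$ there are no inner edges, $c_i=0$, and one simply sets $\lambda_{e_i}=\theta_i$.)

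There is essentially no hard obstacle here: the only thing to verify is that the pendant lengths remain strictly positive, and this is handled by scaling the inner contributions down below the smallest target $\theta_i$, which is possible because all $\theta_i$ are strictly positive and the $c_i$ are linear homogeneous functions of the inner lengths that vanish as $\delta \to 0$. The conceptual content is simply that, with the inner edges held fixed, the restriction of the FP map to the pendant edge lengths is an affine bijection onto $\mathbb{R}^n$, so the positive inner contributions can always be absorbed by the pendant edges.

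As an alternative I would mention an inductive proof on $n$ using the decomposition $T=(T_a,T_b)$ together with the identities $FP_T(x)=\frac{\lambda_a}{n_a}+FP_{T_a}(x)$ for $x\in X_a$ and $FP_T(x)=\frac{\lambda_b}{n_b}+FP_{T_b}(x)$ for $x\in X_b$: choosing $\lambda_a,\lambda_b>0$ small enough that the reduced targets $\theta_i-\frac{\lambda_a}{n_a}$ and $\theta_i-\frac{\lambda_b}{n_b}$ stay positive, one applies the inductive hypothesis to $T_a$ and $T_b$ (treating a single-leaf subtree directly by setting its pendant edge to the target value). This mirrors the style of the other appendix proofs but is strictly more involved than the direct construction above, so I would present the direct argument as the main proof.
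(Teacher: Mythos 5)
Your proof is correct, but it takes a different route from the paper: the paper does not prove Lemma \ref{lem_strictphi} at all, instead deferring to the proof of Theorem 2 in \citet{Wicke2020}, which establishes the statement in a slightly different context. Your argument is a short, self-contained replacement. The key observation — that each pendant edge has $D_{e_i}=1$ and contributes only to its own leaf, so that after fixing all inner edges to a common small length $\delta$ the system decouples into $FP_T(x_i)=\lambda_{e_i}+\delta m_i$ with topology-determined constants $m_i\geq 0$ — is sound, and your choice $\delta \max_i m_i < \min_i \theta_i$ followed by $\lambda_{e_i}:=\theta_i-\delta m_i$ does yield strictly positive lengths on every edge realizing $\theta$ exactly (the degenerate cases, $n=2$ with no inner edges and leaves whose root path contains no inner edge, are handled since the corresponding $m_i$ simply vanish). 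What each approach buys: the paper's citation keeps the appendix short by reusing published work, whereas your construction makes the manuscript self-contained, is arguably more elementary than an appeal to a theorem phrased for a different purpose, and is explicitly constructive in the same spirit as the paper's other proofs (which build edge lengths by hand); your alternative inductive sketch via $T=(T_a,T_b)$ would also work but, as you say, the direct argument is preferable.
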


The statement of Lemma \ref{lem_strictphi} is established in \citet{Wicke2020} (proof of Theorem 2 therein). Note that Theorem 2 in \citet{Wicke2020} itself is phrased in a slightly different context, but in the proof it is shown that given a rooted phylogenetic $X$-tree with $X=\{x_1, \ldots, x_n\}$ and a vector  $\theta=(\theta_1,\ldots,\theta_n) \in \mathbb{R}^n_+$ of strictly positive real numbers, there exists an edge length assignment $\lambda: E(T) \rightarrow \mathbb{R}_{+}$ that assigns strictly positive lengths to all edges of $T$ such that $\theta_i = FP_{(T, \lambda)}(x_i)$ for all $x_i \in X$.

We are now in the position the prove Theorem \ref{thm_secondleast_becomes_first}.

\begin{proof}[Proof of Theorem \ref{thm_secondleast_becomes_first}]
We prove this statement by induction on $n$. If $n=2$, there is only one rooted binary phylogenetic $X$-tree, namely the one that consists of a cherry, say $[x_1,x_2]$, and the statement trivially holds.

We now assume that the statement holds for all trees with at most $n-1$ leaves and let $T=(T_a,T_b)$ be a tree with $n \geq 3$ leaves (else consider the base case). Without loss of generality we may assume that $n_a \geq n_b$. In particular, $n_a \geq 2$ (because $n \geq 3$). Moreover, we may assume that $X_a = \{x_1, \ldots, x_{n_a}\}$ and $X_b = \{x_{n_a+1}, \ldots, x_{n_a+n_b}\}$ (else, re-label the leaves of $T$ accordingly).

As $T_a$ is a rooted binary phylogenetic tree with $n_a < n$ leaves, by the inductive hypothesis, there exist strictly positive edge lengths for $T_a$ that satisfy all requirements stated by the theorem. We fix these edge lengths and additionally assign length $\lambda_a \in \mathbb{R}_{+}$ to edge $(\rho,a)$ connecting $T_a$ with the root of $T$. Then, $FP_{T}(x) = FP_{T_a}(x)+\frac{\lambda_a}{n_a}$ for all $x \in X_a$.

Without loss of generality, we may assume that $FP_{T_a}(x_1) = \argmin\limits_{x \in X_a} FP_{T_a}(x)$ and $FP_{T_a}(x_2) = \argmin\limits_{x \in X_a \setminus \{x_1\}} FP_{T_a}(x)$ (else, re-label the leaves in $T_a$ accordingly), i.e. leaves $x_1$ and $x_2$ have the lowest and second lowest $FP_{T_a}$ value among the taxa in $X_a$, respectively. 

Let $\widetilde{T}$ be the tree obtained from $T$ by deleting leaf $x_1$ and let $\widetilde{X} \coloneqq X \setminus \{x_1\}$ denote the leaf set of $\widetilde{T}$. Note that by the inductive hypothesis, $x_2 = \argmax\limits_{x \in \widetilde{X_a}} FP_{\widetilde{T_a}}(x)$.

We now choose the edge lengths of $T_b$ as well as the length $\lambda_b$ of edge $(\rho,b)$ connecting  $T_b$ with the root of $T$ such that $FP_T(x) \in (FP_{T_a}(x_2)+\frac{\lambda_a}{n_a}, FP_{\widetilde{T_a}}(x_2) + \frac{\lambda_a}{n_a-1}) = (FP_T(x_2), FP_{\widetilde{T}}(x_2))$ for all $x \in X_b$ and such that $FP_T(x_i) \neq FP_T(x_j)$ as well as $FP_{\widetilde{T}}(x_i) \neq FP_{\widetilde{T}}(x_j)$ $\forall x_i \neq x_j\, \in \{x_1, \ldots, x_n\}$.

To see that this is possible, first note that $FP_T(x_2) \neq FP_{\widetilde{T}}(x_2)$:
\begin{itemize}
    \item If $n_a=2$, i.e. $x_1$ and $x_2$ are the only leaves in $T_a$, we have
    \begin{align*}
        FP_T(x_2) &= \lambda_{e_2} + \frac{\lambda_a}{2} \quad \text{and} \\
        FP_{\widetilde{T}} (x_2) &= \lambda_{e_2} + \lambda_a > FP_T(x_2),
    \end{align*}
    where $\lambda_{e_2}$ denotes the length of the pendant edge incident with $x_2$.
    \item If $n_a>2$, then $FP_{\widetilde{T_a}}(x_2) \geq FP_{T_a}(x_2)$ by Lemma \ref{lem_deletionincrease} and $\frac{\lambda_a}{n_a-1} > \frac{\lambda_a}{n_a}$. In particular, $FP_T(x_2) < FP_{\widetilde{T}}(x_2)$.
    \end{itemize}
Thus, in both cases the interval $(FP_T(x_2), FP_{\widetilde{T}}(x_2))$ is non-empty and contains infinitely many strictly positive real numbers. 

We now distinguish two cases:
\begin{enumerate}[(a)]
\item If $T_b$ consists of precisely one leaf, namely leaf $x_n$, we choose $\lambda_b$, which in this case equals  $FP_T(x_n) = FP_{\widetilde{T}}(x_n)$, in the interval  $ (FP_T(x_2), FP_{\widetilde{T}}(x_2))$,  such that $ \lambda_b \neq FP_T(x)$ for all $x \in X_a$ and $\lambda_b \neq FP_{\widetilde{T}}(x)$ for all $x \in X_a$.

\item If $T_b$ contains $n_b > 1$ leaves, we choose $\varepsilon$ such that $0 < \frac{\varepsilon}{n_b} < FP_T(x_2)$ and set $\lambda_b = \varepsilon$.
We then choose $n_b$ distinct strictly positive real values $\theta_1, \ldots, \theta_{n_b} \in (FP_T(x_2)-\frac{\varepsilon}{n_b}, FP_{\widetilde{T}}(x_2) - \frac{\varepsilon}{n_b})$ such that $\theta_i \neq FP_T(x)$ for all $i$ and all $x \in X_a$ and $\theta_i \neq FP_{\widetilde{T}}(x)$ for all $i$ and all $x \in X_a$. Then, by Lemma \ref{lem_strictphi} we can choose strictly positive edge lengths for $T_b$ such that the induced $FP_{T_b}$ values for the leaves in $X_b$ (i.e. for leaves $x_{n_a+1}, \ldots, x_{n_a+n_b}$) equal $\theta$, i.e. $FP_{T_b}(x_{n_a+i}) = \theta_i$ for $i=1, \ldots, n_b$. Now, by choice of $\theta_i$ and $\lambda_b$, this implies that $FP_T(x_{n_a+i}) = FP_{T_b}(x_{n_a+i}) + \frac{\varepsilon}{n_b} \in (FP_T(x_2), FP_{\widetilde{T}}(x_2))$ for all $i \in \{1, \ldots,n_b\}$.
\end{enumerate}

Thus, in both cases, we can choose the edge lengths of $T_b$ and the length of edge $(\rho, b)$ such that $FP_T(x) \in (FP_T(x_2), FP_{\widetilde{T}}(x_2))$ for all $x \in X_b$ and such that the rankings $\pi_T$ and $\pi_{\widetilde{T}}$ are strict. \\

Now, as $FP_T(x) > FP_T(x_2)$ for all $x \in X_b$ by construction, and $FP_T(x_2) > FP_T(x_1)$ (because by the inductive hypothesis for $T_a$ and the assumption that $x_1 = \argmin\limits_{x \in X_a} FP_{T_a}(x)$, we have $FP_{T_a}(x_1) < FP_{T_a}(x_2)$ and thus also $FP_T(x_1) < FP_T(x_2))$, we can conclude that $x_1 = \argmin\limits_{x \in X} FP_T(x)$. Similarly, as by assumption $x_2 = \argmin\limits_{x \in X_a \setminus \{x_1\}} FP_{T_a}(x)$, we can conclude that $x_2 = \argmin\limits_{x \in \widetilde{X}} FP_{T}(x)$. Thus, leaves $x_1$ and $x_2$ have the lowest, respectively second lowest $FP_T$ value. \\

Moreover, as $FP_T(x) = FP_{\widetilde{T}}(x) < FP_{\widetilde{T}}(x_2)$ for all $x \in X_b$ by construction (where the first equality follows from the fact that the FP indices of leaves in $X_b$ are not affected by the deletion of leaf $x_1 \in X_a$; cf. Lemma \ref{lem_nonaffectedsubtree}) and $FP_{\widetilde{T}}(x) < FP_{\widetilde{T}}(x_2)$ for all $x \in \widetilde{X_a}$ by the inductive hypothesis (as $x_2 = \argmax\limits_{x \in \widetilde{X_a}} FP_{\widetilde{T_a}}(x)$ and thus in particular, $FP_{\widetilde{T}}(x_2) > FP_{\widetilde{T}}(x)$ for all $x \in X_a \setminus \{x_2\}$ because the ranking is strict), we can conclude that $x_2 = \argmax\limits_{x \in \widetilde{X}} FP_{\widetilde{T}}(x)$, i.e. leaf $x_2$ has the largest $FP_{\widetilde{T}}$ value.\\

Thus, we have shown that there are edge lengths for $T$ such that if the leaf with the lowest $FP_T$ value (in our case leaf $x_1$) is deleted, the leaf with the second lowest $FP_T$ value (in our case leaf $x_2$) has the highest $FP_{\widetilde{T}}$ value. Moreover, the corresponding rankings are strict. This completes the proof.
\end{proof}

\setcounter{thm}{3}
\begin{thm} 
Let $T=(T_a,T_b)$ be a rooted binary phylogenetic $X$-tree with $|X|=n\geq 3$ such that $T_a$ and $T_b$ have $n_a$ and $n_b$ leaves, respectively, where $n_a\geq n_b$. Let $f:E(T)\longrightarrow \mathbb{R}_+$ be some function that assigns all edges of $T$ positive edge lengths. Let $x^*:=\argmax\limits_{x \in X} FP_T(x) $ be the leaf of $T$ with the highest $FP_T$ value concerning the edge length assignment of $f$. Then, we have: \begin{enumerate}
    \item If a leaf $x'$ other than $x^*$ is deleted (e.g. the leaf with minimal $FP_T$ value) to derive a tree $\widetilde{T}$, we denote the number of leaves that have a higher $FP_{\widetilde{T}}$ value than $x^*$ in $\widetilde{T}$ by $\mathcal{N}$ and have that $\mathcal{N}\leq n_a-1$ (with $ \lfloor \frac{n-1}{2}\rfloor \leq n_a-1$).
\item There exists an edge length assignment $\widehat{f}$ such that this bound is achieved, i.e. $\mathcal{N}= n_a-1$, and the resulting ranking $\pi_T$ is strict.
\end{enumerate}
\end{thm}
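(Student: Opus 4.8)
The plan is to prove the two parts separately, in both cases exploiting how a single leaf deletion acts on the two maximal pendant subtrees, as recorded by Lemma~\ref{lem_deletionincrease} (no surviving leaf's FP value decreases, provided the deleted leaf is not the only leaf of its subtree) and Lemma~\ref{lem_nonaffectedsubtree} (leaves in the subtree \emph{not} containing the deleted leaf keep their FP value). Throughout, write $S\in\{T_a,T_b\}$ for the subtree that contains the deleted leaf $x'$ and $S'$ for the other one. For part (1) the key observation I would make is that every leaf overtaking $x^\ast$ must lie in $S$. Indeed, first suppose $x'$ is not the unique leaf of $T_b$, so that $S$ retains at least one leaf and Lemma~\ref{lem_deletionincrease} applies: it gives $FP_{\widetilde T}(x^\ast)\ge FP_T(x^\ast)$, while Lemma~\ref{lem_nonaffectedsubtree} gives $FP_{\widetilde T}(y)=FP_T(y)$ for every $y\in S'$. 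Hence for $y\in S'$ we have $FP_{\widetilde T}(y)=FP_T(y)\le FP_T(x^\ast)\le FP_{\widetilde T}(x^\ast)$, so no leaf of $S'$ can strictly exceed $x^\ast$ in $\widetilde T$. Thus the leaves counted by $\mathcal N$ all lie in $S\setminus\{x^\ast\}$: if $S=T_a$ there are at most $n_a-1$ of them, and if $S=T_b$ at most $n_b-1\le n_a-1$, so $\mathcal N\le n_a-1$ in either subcase.

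The one configuration not covered above is when $x'$ is the single leaf of $T_b$ (possible only if $n_b=1$), where Lemma~\ref{lem_deletionincrease} fails because all of $X_b$ is deleted. Here I would argue directly: $\widetilde T$ is simply $T_a$ with its root edge removed, so $FP_{\widetilde T}(x)=FP_T(x)-\frac{\lambda_a}{n_a}$ for every $x\in X_a$, i.e.\ all surviving FP values drop by the same constant. The ranking is therefore unchanged and $x^\ast$ (which lies in $X_a$) stays maximal, so $\mathcal N=0\le n_a-1$. Combining both subcases proves $\mathcal N\le n_a-1$ always, and the auxiliary bound $\lfloor\frac{n-1}{2}\rfloor\le n_a-1$ follows from $n_a\ge\lceil n/2\rceil$.

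For part (2) I would realise the extremal value $\mathcal N=n_a-1$ by placing $x^\ast$ in the subtree that is \emph{not} touched by the deletion, deleting a leaf of $T_a$, and forcing all $n_a-1$ surviving $T_a$-leaves above $x^\ast\in X_b$. Start from any strictly positive edge lengths on $T_a$ and on $T_b$ making $FP_{T_a},FP_{T_b}$ strict (such exist, e.g.\ by Lemma~\ref{lem_strictphi}); set $M_a:=\max_{x\in X_a}FP_{T_a}(x)$, $x':=\argmin_{x\in X_a}FP_{T_a}(x)$, and $x^\ast:=\argmax_{x\in X_b}FP_{T_b}(x)$. With root edges $\lambda_a,\lambda_b$ we have $FP_T(x)=FP_{T_a}(x)+\frac{\lambda_a}{n_a}$ on $X_a$ and $FP_T(x)=FP_{T_b}(x)+\frac{\lambda_b}{n_b}$ on $X_b$, while after deleting $x'$ the surviving $T_a$-leaves satisfy $FP_{\widetilde T}(x)=FP_{\widetilde{T_a}}(x)+\frac{\lambda_a}{n_a-1}>\frac{\lambda_a}{n_a-1}$ and the $T_b$-leaves are unchanged (Lemma~\ref{lem_nonaffectedsubtree}). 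Writing $V:=FP_T(x^\ast)=FP_{T_b}(x^\ast)+\frac{\lambda_b}{n_b}$, I need $x^\ast$ maximal in $T$, which holds once $\lambda_a<n_a(V-M_a)$, and every surviving $T_a$-leaf above $x^\ast$ in $\widetilde T$, for which $\frac{\lambda_a}{n_a-1}\ge V$, i.e.\ $\lambda_a\ge (n_a-1)V$, suffices (since $FP_{\widetilde{T_a}}(x)>0$). These constraints are compatible exactly when $(n_a-1)V<n_a(V-M_a)$, i.e.\ $V>n_aM_a$, which I can force by choosing $\lambda_b$ large; I then take any $\lambda_a$ in the resulting nonempty window. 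Finally I would perturb $\lambda_a$ within this open window (or re-apply Lemma~\ref{lem_strictphi}) to eliminate the finitely many coincidental equalities between an $X_a$-value and an $X_b$-value, making $\pi_T$ strict. By construction all $n_a-1$ surviving $T_a$-leaves exceed $x^\ast$ and no $T_b$-leaf does, so $\mathcal N=n_a-1$. (One may additionally scale $T_b$ down via Lemma~\ref{lem_FPscale} to make $x'$ the global minimum, matching Figure~\ref{fig:lowest_higher}, though this is not needed for the statement.)

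I expect the only genuine obstacle to be in part (2): reconciling the two opposing demands on $\lambda_a$ --- small enough that $x^\ast$ still wins \emph{before} the extinction, yet large enough that the deletion ``jump'' $\frac{\lambda_a}{n_a-1}-\frac{\lambda_a}{n_a}$ lifts \emph{all} of $T_a$ above $x^\ast$ afterwards. The computation above collapses this into the single clean inequality $V>n_aM_a$, which is free to arrange by inflating $\lambda_b$, after which strictness of $\pi_T$ is a routine genericity argument. In part (1) the only delicate point is that the monotonicity lemmas do not cover deleting an entire pendant subtree, which is precisely why the $n_b=1$ case must be settled by the direct uniform-shift computation rather than by Lemma~\ref{lem_deletionincrease}.
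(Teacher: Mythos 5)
Your proposal is correct, and it is worth separating the two parts when comparing it to the paper's own proof. For part (1) you follow essentially the same argument as the paper (only the subtree containing the deleted leaf can produce leaves that overtake $x^\ast$, via Lemmas \ref{lem_deletionincrease} and \ref{lem_nonaffectedsubtree}), but you are in fact more careful than the paper on one point: when $n_b=1$ and the deleted leaf $x'$ is the unique leaf of $T_b$, the hypotheses of Lemmas \ref{lem_nonaffectedsubtree} and \ref{lem_affectedsubtree} are violated (an entire maximal pendant subtree is removed, so the root edge of $T_a$ disappears and the surviving FP values actually \emph{decrease} by the common constant $\frac{\lambda_a}{n_a}$, rather than remaining unchanged as the paper asserts). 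Your direct uniform-shift computation settles this edge case cleanly and the bound $\mathcal{N}=0\leq n_a-1$ still holds, so your treatment closes a small gap that the paper glosses over.

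For part (2) your construction shares the paper's skeleton --- place $x^\ast$ in the untouched subtree $T_b$, delete a leaf of $T_a$, and use the denominator jump from $\frac{\lambda_a}{n_a}$ to $\frac{\lambda_a}{n_a-1}$ to lift all $n_a-1$ survivors above $x^\ast$ --- but the parameter tuning is genuinely different. The paper keeps $\lambda_a=1$ fixed and shrinks both subtrees by explicit scaling factors $s_a,s_b$ (computed via somewhat elaborate closed-form constants), so that all FP values concentrate near $\frac{\lambda_a}{n_a}$ and $\frac{\lambda_b}{n_b}$ and strictness comes for free from Corollary \ref{cor_lem3lem4}; as a by-product the deleted leaf is the \emph{global} minimum of $FP_T$, matching the ``e.g.\ the leaf with minimal $FP_T$ value'' phrasing. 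You instead keep the subtree edge lengths fixed and tune only the two root edges, inflating $\lambda_b$ until the single clean compatibility condition $V>n_aM_a$ holds and then choosing $\lambda_a$ in the window $\bigl((n_a-1)V,\,n_a(V-M_a)\bigr)$, with strictness recovered by a finite-avoidance perturbation of $\lambda_a$. Your route is shorter and conceptually more transparent (one inequality instead of four verified constant choices), at the cost of being less explicit and of not automatically making the deleted leaf the lowest-ranked one --- which, as you correctly note, the statement of part (2) does not require.
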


In order to prove Theorem \ref{thm_boundoneleaf},  we require one more lemma which complements Lemma \ref{lem_nonaffectedsubtree} as it shows that when only leaves from one maximal pendant subtree are deleted, while for \emph{none} of the leaves of the other maximal pendant subtree the FP indices change (Lemma \ref{lem_nonaffectedsubtree}), they increase for \emph{all} leaves of the original subtree.

\begin{lem}\label{lem_affectedsubtree} Let $T=(T_a,T_b)$ be a binary phylogenetic $X$-tree with $|X|=n\geq2$. Let $\widetilde{T}$ be a tree that results from $T$ by deleting some (but not all) leaves from $T_a$ (or $T_b$) only. Then, we have $FP_{\widetilde{T}}(x)>FP_{T}(x)$ for all $x \in T_a$ (or $T_b$, respectively).
\end{lem}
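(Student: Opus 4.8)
The plan is to reduce the claim to the strict case that is already established inside the proof of Lemma \ref{lem_deletionincrease}. By symmetry I treat only the case in which the deleted set $X'$ is a nonempty proper subset of $X_a$, and I fix a leaf $x \in \widetilde{X_a} := X_a \setminus X'$ that survives the deletion. First I would verify that the hypotheses of Lemma \ref{lem_deletionincrease} hold: since $\emptyset \neq X' \subsetneq X_a$ we have $X_a \not\subseteq X'$, and since $X' \cap X_b = \emptyset$ with $n_b \geq 1$ we have $X_b \not\subseteq X'$. Hence the weak inequality $FP_{\widetilde T}(x) \geq FP_T(x)$ is immediate, and the entire content of the lemma is to upgrade this to a \emph{strict} inequality for \emph{every} surviving leaf of $T_a$ (whereas Lemma \ref{lem_deletionincrease} only guarantees strictness for leaves whose path happens to meet a deleted path).

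The key observation is that the edge $e_a = (\rho,a)$ leading from the root to the root of $T_a$ lies on the path $P(T;\rho,x)$ for \emph{every} leaf $x \in X_a$, and likewise lies on the path $P(T;\rho,x'')$ for \emph{every} deleted taxon $x'' \in X' \subseteq X_a$. Consequently, for each $x \in \widetilde{X_a}$ the path from $\rho$ to $x$ shares at least the edge $e_a$ with the path to each deleted taxon. This is precisely the configuration in which the argument in the proof of Lemma \ref{lem_deletionincrease} delivers a strict increase: the number of leaves descending from $e_a$ drops from $D_{e_a} = n_a$ in $T$ to $\widetilde{n_a} = |\widetilde{X_a}| < n_a$ in $\widetilde T$ (strict because $X' \neq \emptyset$), so the proportion $\tfrac{\lambda_{e_a}}{D_{e_a}}$ of the length of $e_a$ allotted to $x$ strictly increases, while no other edge contributes less. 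Summing the contributions along $P(T;\rho,x)$ then yields $FP_{\widetilde T}(x) > FP_T(x)$ for all $x \in \widetilde{X_a}$, as claimed.

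The main obstacle is the bookkeeping subtlety that $e_a$ may be \emph{merged} with an interior edge of $T_a$ in passing to $\widetilde T$ — this occurs exactly when $\widetilde{X_a}$ lies entirely within one of the two maximal pendant subtrees of $T_a$, so that the child $a$ becomes a degree-$2$ vertex and is suppressed. I would stress that this does not actually break the argument: the edge of $\widetilde T$ on the path to $x$ that carries $\lambda_{e_a}$ still has exactly $\widetilde{n_a} < n_a$ descending leaves, so its per-leaf contribution still strictly exceeds $\tfrac{\lambda_{e_a}}{n_a}$. Grouping the edges of $P(T;\rho,x)$ into the blocks that merge in $\widetilde T$ and comparing each block's contribution $\tfrac{\sum_\ell \lambda_{e_\ell}}{\widetilde D}$ with $\sum_\ell \tfrac{\lambda_{e_\ell}}{D_{e_\ell}}$ (using $\widetilde D \leq D_{e_\ell}$ for each $e_\ell$ in the block) reproduces the inequality block by block, with the block containing $e_a$ making it strict. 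This merging comparison is routine and is of the same flavour as the edge-merging bookkeeping already used in the proof of Corollary \ref{cor_lem3lem4}, so I would either invoke it directly or simply note that it is subsumed by the mechanism in the proof of Lemma \ref{lem_deletionincrease}.
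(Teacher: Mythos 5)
Your proposal is correct and takes essentially the same approach as the paper: the weak inequality comes from Lemma \ref{lem_deletionincrease}, and strictness comes from the edge $(\rho,a)$, which lies on the path to every surviving leaf of $T_a$ and whose number of descendant leaves strictly drops from $n_a$ to $n_a-|X'|$. The paper packages this via the decomposition $FP_T(x)=FP_{T_a}(x)+\frac{\lambda_a}{n_a}$ versus $FP_{\widetilde{T}}(x)=FP_{\widetilde{T_a}}(x)+\frac{\lambda_a}{n_a-|X'|}$ (applying Lemma \ref{lem_deletionincrease} inside $T_a$), while you run the same mechanism directly on path contributions in $T$; your explicit treatment of the edge-merging bookkeeping is, if anything, more careful than the paper's.
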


\begin{proof} Let $T=(T_a,T_b)$ and $\widetilde{T}$ be as described in the theorem. Assume without loss of generality that the deleted leaf set $X'$ is a proper subset of the taxa of $T_a$, and denote by $\kappa_a:=|X'|>0$ the number of deleted leaves; i.e. we have $X'\subset X_a$ and thus $\kappa_a<n_a$, where $n_a$ denotes the number of leaves in $T_a$. Let $\widetilde{T_a}$ be the tree that results from $T_a$ when the leaves of $X'$ are deleted, and denote by $\lambda_a$ the length of the edge leading to $T_a$ in $T$. Then, $FP_T(x)=FP_{T_a}(x) + \frac{\lambda_a}{n_a}$ and $FP_{\widetilde{T}}(x)=FP_{\widetilde{T_a}}(x) + \frac{\lambda_a}{n_a-\kappa_a}$ for all $x \in X_a$. By Lemma \ref{lem_deletionincrease} we have $FP_{T_a}(x)\leq FP_{\widetilde{T_a}}(x)$. So in total, this implies $$FP_T(x)=FP_{T_a}(x) + \frac{\lambda_a}{n_a} < FP_{\widetilde{T_a}}(x) + \frac{\lambda_a}{n_a-\kappa_a}=FP_{\widetilde{T}}(x),$$ where the strict inequality stems from the fact that $\kappa_a>0$.  This completes the proof. 
\end{proof}

We are now in the position to prove Theorem \ref{thm_boundoneleaf}.

\begin{proof} [Proof of Theorem \ref{thm_boundoneleaf}] Let $T$, $f$, $x^*$ and $x'$ be as stated in the theorem and let $\lambda_a$, $\lambda_b$ denote the lengths of the edges leading to $T_a$ and $T_b$, respectively. Note that if $x'$ is in $T_a$, we have $FP_T(x)=FP_{\widetilde{T}}(x)$ for all $x$ in $T_b$ by Lemma \ref{lem_nonaffectedsubtree}.  However, for all leaves $x$ in $T_a$ (other than the deleted $x'$) we have $FP_{T}(x)<FP_{\widetilde{T}}(x)$ by Lemma \ref{lem_affectedsubtree}. So depending on whether $x'$ is in $T_a$ or $T_b$, we have $n_a-1$ or $n_b-1$ leaves whose FP indices strictly increase when $x'$ is deleted and all others remain unchanged. As $x^*$ is the taxon with maximal $FP_T$ value, the only taxa which can have a larger $FP_{\widetilde{T}}$ value than $x^*$ are the ones whose FP indices increases when $x'$ is deleted, so these are at most $n_a-1$ leaves (as $n_a-1\geq n_b-1$). Moreover, as $n_a \geq n_b$ and $n_a+n_b=n$, we have $n_a\geq \lfloor \frac{n+1}{2}\rfloor$ and thus $n_a-1 \geq \lfloor\frac{n+1}{2} \rfloor-1 =\lfloor\frac{n-1}{2} \rfloor$. This completes the first part of the proof.

For the second part of the proof we use Lemma \ref{lem_strictphi} to assign positive edge lengths to $T_a$ and $T_b$ that induce strict rankings $\pi_{T_a}$ and $\pi_{T_b}$.  We denote by $X_a$ and $X_b$ the leaf sets of $T_a$ and $T_b$, respectively. Then, we fix leaves $x^*_a:=\argmax\limits_{x \in X_a} FP_{T_a}(x)$ and $x^*_b:=\argmax\limits_{x \in X_b} FP_{T_b}(x)$, i.e. $x^*_a$ is the leaf with highest $FP_{T_a}$ value and $x^*_b$ is the leaf with highest $FP_{T_b}$ value. Similarly, we fix leaves $x'_a:=\argmin\limits_{x \in X_a} FP_{T_a}(x)$ and $x'_b:=\argmin\limits_{x \in X_b} FP_{T_b}(x)$, i.e. $x'_a$ is the leaf with smallest $FP_{T_a}$ value and $x'_b$ is the leaf with smallest $FP_{T_b}$ value. We will now proceed as follows: Using Corollary \ref{cor_lem3lem4}, we scale $T_a$ and $T_b$ by positive scaling factors $s_a$ and $s_b$, respectively, such that the rankings induced by $\pi_{T_a}$ and $\pi_{T_b}$ are not changed. If $\lambda_a$ and $\lambda_b$ denote the edge lengths of the edges leading from the root of $T$ to $T_a$ and $T_b$, respectively, we derive the following equalities: $$FP_T(x)=FP_{T_a}(x)\cdot s_a + \frac{\lambda_a}{n_a} \stackrel{s_a\rightarrow 0}{\longrightarrow} \frac{\lambda_a}{n_a} \mbox{  for all  } x \in X_a, \mbox{ and}$$
 $$FP_T(x)=FP_{T_b}(x)\cdot s_b + \frac{\lambda_b}{n_b}\stackrel{s_b\rightarrow 0}{\longrightarrow} \frac{\lambda_b}{n_b} \mbox{  for all  } x \in X_b.$$
 
 Moreover, if we delete $x'_a$ from $T$ and thus from $T_a$ to derive tree $\widetilde{T}$, the FP indices of $T_b$ remain completely unchanged by Lemma \ref{lem_nonaffectedsubtree}, so we have $FP_T(x)=FP_{\widetilde{T}}(x)$ for all $x \in X_b$. However, the FP indices of \emph{all} taxa of $T_a$ strictly increase by Lemma \ref{lem_affectedsubtree}, as the proportion of $\lambda_a$ assigned to all leaves of $T_a$ increases from $\frac{\lambda_a}{n_a}$ to $\frac{\lambda_a}{n_a-1}$ (and possibly the proportion of other edges within $T_a$ that is assigned to taxa from $T_a$ also gets higher, so we can only state the following lower bound for the $FP_{\widetilde{T}}$ values), so we have: 
 
 \begin{equation} \label{eq_lowerboundFPtildeT}FP_{\widetilde{T}}(x) \geq FP_{T}(x) + \frac{\lambda_a}{n_a(n_a-1)} \hspace{0.3cm}\forall x\in X_a\setminus \{x'_a\}.\end{equation} The latter summand is due to the fact that $\frac{\lambda_a}{n_a-1}=\frac{\lambda_a}{n_a}+\frac{\lambda_a}{n_a(n_a-1)}$ for all $x \in X_a\setminus \{x'_a\}$.

We now want to choose positive values for $\lambda_a$, $\lambda_b$, $s_a$ and $s_b$ such that $FP_T(x_b')>FP_T(x_a^*)$ and $FP_{\widetilde{T}}(x_b^*)<FP_{\widetilde{T}}(x_a)$ for all $x_a\in X_a\setminus \{x'_a\}$. Note that these inequalities imply that \emph{all} leaves of $T_b$ have higher $FP_T$ values than the leaves of $T_a$ and that \emph{all} leaves of $T_b$ have lower $FP_{\widetilde{T}}$ values than the leaves of $T_a$ (other than $x_a'$). 

\begin{itemize}
\item We start by setting $\lambda_a:=1$. 
\item Next, we set $\lambda_b:=\frac{(2n_a-1)n_b}{2n_a(n_a-1)}$. Note that $\lambda_b>0$ as $n\geq 3$ and thus $n_a>1$. 
\item We set $s_b:=\frac{n_b-\lambda_b(n_a-1)}{2FP_{T_b}(x_b^*)\cdot n_b (n_a-1)}$. We now show that this ensures that $s_b>0$. First, as before we have $n_a>1$ (because $n\geq 3$), so the denominator is positive. Moreover, we have $n_b-\lambda_b(n_a-1)=n_b-\frac{(2n_a-1)n_b}{2n_a(n_a-1)}(n_a-1)=n_b \cdot \left( 1-\frac{2n_a-1}{2n_a}\right)= \frac{n_b}{2n_a}>0,$ so the enumerator is also positive. Thus, $s_b>0$. 

\item Last, we set $s_a:=\frac{1}{2FP_{T_a}(x_a^*)}\left(\frac{\lambda_b}{n_b}-\frac{1}{n_a}\right)$. We now show that $s_a>0$. Note that this is true precisely if $\lambda_b>\frac{n_b}{n_a}$. Using our choice of $\lambda_b$, this holds if and only if $$ \frac{(2n_a-1)n_b}{2n_a(n_a-1)}>\frac{n_b}{n_a},$$ which in turn holds precisely if 

$$ \frac{2n_a-1}{2(n_a-1)}>1.$$ As $2n_a-1>2n_a-2$, the assertion holds and we therefore indeed have $s_a>0$. 
\end{itemize}

We now first prove that for our choices of $\lambda_a$, $\lambda_b$, $s_a$ and $s_b$, we have $FP_T(x_b')>FP_T(x_a^*)$. Therefore, recall that $$FP_T(x_b')=FP_{T_b}(x_b')\cdot s_b+\frac{\lambda_b}{n_b} > \frac{\lambda_b}{n_b}=\frac{(2n_a-1)}{2n_a(n_a-1)},$$ where the latter inequality is true as all edge lengths of $T_b$ are positive, which implies that all $FP_{T_b}$ values are also positive, and where the last equality uses our choice of $\lambda_b$. On the other hand, we have $$FP_T(x_a^*)=FP_{T_a}(x_a^*)\cdot s_a + \frac{\lambda_a}{n_a}= FP_{T_a}(x_a^*)\cdot\frac{1}{2FP_{T_a}(x_a^*)}\left(\frac{\lambda_b}{n_b}-\frac{1}{n_a}\right) + \frac{1}{n_a}=\frac{1}{2} \left(\frac{\lambda_b}{n_b}+ \frac{1}{n_a} \right)<\frac{\lambda_b}{n_b}.$$
Here, the second equality uses our choices of $s_a$ and $\lambda_a$, and the final inequality is due to the fact that $\lambda_b>\frac{n_b}{n_a}$ as shown above when we chose $s_a$, and thus $\frac{1}{n_a}<\frac{\lambda_b}{n_b}$. This now leads to: $$FP_T(x_a^*)< \frac{\lambda_b}{n_b}< FP_{T_b}(x_b')+\frac{\lambda_b}{n_b}=FP_{T}(x_b').$$

So the $FP_T$ value of $x_a^*$ is strictly smaller than that of $x_b'$, but $x_a^*$ has the maximum such value of all $x_a \in X_a$, and $x_b'$ has the minimum value of all $x_b\in X_b$, so in total we have that $FP_T(x_a)<FP_T(x_b)<FP_T(x_b^*)$ for all $x_a\in X_a$, $x_b \in X_b\setminus\{x_b^*\}$. In particular, the induced ranking $\pi_{T}$ is strict (as the partial rankings $\pi_{T_a}$ and $\pi_{T_b}$ are strict and as the separation between $T_a$ and $T_b$ induced by $\pi_T$ is now also strict). Moreover, we now know that $x_a'=\argmin\limits_{x \in X} FP_T(x)$ and $x_b^*=\argmax\limits_{x \in X} FP_T(x)$, i.e. the minimum and maximum can be taken over $X$, not only over $X_a$ or $X_b$, respectively.

Next, we consider  $FP_{\widetilde{T}}(x_b^*)$. Using our knowledge that the deletion of $x_a'$ does not affect any taxa in $T_b$, particularly not $x_b^*$, and using our choices of $\lambda_b$ and $s_b$, we derive: 
$$FP_{\widetilde{T}}(x_b^*)=FP_{T}(x_b^*)=FP_{T_b}(x_b^*)\cdot s_b+\frac{\lambda_b}{n_b}=FP_{T_b}(x_b^*)\cdot\frac{n_b-\lambda_b(n_a-1)}{2FP_{T_b}(x_b^*)\cdot n_b (n_a-1)}+\frac{\lambda_b}{n_b}$$

\begin{equation} \label{eq_wantedoneleaf}= \frac{n_b-\frac{(2n_a-1)n_b}{2n_a(n_a-1)}\cdot(n_a-1)}{2n_b (n_a-1)}+\frac{\frac{(2n_a-1)n_b}{2n_a(n_a-1)}}{n_b}= \frac{4n_a-1}{4n_a(n_a-1)}. \end{equation}

We now argue that the latter term is smaller than $FP_T(x_a') + \frac{1}{n_a(n_a-1)}$. In order to see this, note that $-1 < \frac{FP_{T_a}(x_a')}{FP_{T_a}(x_a^*)}$, because the right-hand side is positive. Therefore, we have $4n_a-5 < \frac{FP_{T_a}(x_a')}{FP_{T_a}(x_a^*)} + 4 (n_a-1)$ and thus, as $4n_a(n_a-1)>0$, we, we get: 

$$\frac{4n_a-5}{4 n_a(n_a-1)} < \frac{FP_{T_a}(x_a')+FP_{T_a}(x_a^*)\cdot4(n_a-1)}{FP_{T_a}(x_a^*)\cdot 4n_a(n_a-1)}.$$ We rearrange the right-hand side of this inequality to get 

$$\frac{4n_a-5}{4 n_a(n_a-1)} < \frac{FP_{T_a}(x_a')}{FP_{T_a}(x_a^*)}\cdot \frac{1}{2} \left( \frac{2n_a-1}{2n_a(n_a-1)}-\frac{1}{n_a}\right)+\frac{1}{n_a}. $$ 

Using our choices of $\lambda_b$ and then of $s_a$, this leads to: 

$$\frac{4n_a-5}{4 n_a(n_a-1)} < \frac{FP_{T_a}(x_a')}{FP_{T_a}(x_a^*)}\cdot \frac{1}{2} \left( \frac{\lambda_b}{n_b}-\frac{1}{n_a}\right)+\frac{1}{n_a}=FP_{T_a}(x_a') \cdot s_a + \frac{1}{n_a}=FP_T(x_a'). $$ 

So we have $\frac{4n_a-5}{4 n_a(n_a-1)} < FP_T(x_a')$, and thus $$\frac{4n_a-5}{4 n_a(n_a-1)} + \frac{1}{n_a(n_a-1)}< FP_T(x_a') + \frac{1}{n_a(n_a-1)}.$$

Together with \eqref{eq_wantedoneleaf}, this shows that \begin{equation}\label{eq_wantedoneleaf2}FP_{\widetilde{T}}(x_b^*) < FP_T(x_a') + \frac{1}{n_a(n_a-1)}.\end{equation} 

As $FP_{T}(x_a') < FP_T(x_a)$ for all $x_a \in X_a\setminus \{x_a'\}$ and as $FP_{\widetilde{T}}(x_a)\geq FP_T(x_a)+\frac{1}{n_a(n_a-1)}$ for all $x_a \in X_a \setminus \{x_a'\}$ by Equation \eqref{eq_lowerboundFPtildeT}  (using that $\lambda_a=1$), we have $$FP_{T}(x_a')+\frac{1}{n_a(n_a-1)}<FP_T(x_a)+\frac{1}{n_a(n_a-1)} \leq FP_{\widetilde{T}}(x_a)$$ for all $x_a \in X_a\setminus \{x_a'\}.$ So as we have $FP_{\widetilde{T}}(x_b^*)<FP_{T}(x_a')+\frac{1}{n_a(n_a-1)}$ by \eqref{eq_wantedoneleaf2}, we can finally conclude $FP_{\widetilde{T}}(x_b^*)<FP_{\widetilde{T}}(x_a)$ for all $x_a \in X_a\setminus \{x_a'\}$.

In summary, we now have seen that the given positive edge lengths induce a strict ranking such that:
\begin{itemize} \item the unique maximal $FP_T$ value is in $T_b$, namely assumed by $x_b^*$, \item the unique minimal $FP_T$ value is in $T_a$, namely assumed by $x_a'$, \item the deletion of $x_a'$ leads to a tree $\widetilde{T}$ for which \emph{all} taxa of $X_a\setminus  \{x_a'\}$ have a higher $FP_{\widetilde{T}}$ value than $x_b^*$. Thus, there are $|X_a\setminus  \{x_a'\}|=n_a-1$ taxa which are ranked higher in $\widetilde{T}$ than the original maximum $x_b^*$. 
\end{itemize}

The latter assertion shows that for the number $\mathcal{N}$ of leaves in $T$ that have a higher $FP_{\widetilde{T}}$ value than $x^*:=x_b^*$ we have $\mathcal{N}=n_a-1$. This completes the second part of the proof.  
\end{proof}


\setcounter{prop}{0}
\begin{prop} 
Let $T$ be a rooted binary ultrametric caterpillar tree on $X$ with $|X|=n \geq 3$, and let $X' \subset X$ be a subset of the leaves. Let $\widetilde{T}$ be the induced subtree of $T$ restricted to the leaves in $\widetilde{X} \coloneqq X \setminus X'$. Then, $FP_T(x_i) \geq FP_T(x_j)$ implies $FP_{\widetilde{T}}(x_i) \geq FP_{\widetilde{T}}(x_j)$ for all $x_i,x_j \in \widetilde{X}$.
\end{prop}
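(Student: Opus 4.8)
The plan is to reduce the entire statement to a single combinatorial quantity: the \emph{depth} of a leaf, meaning its number of edges to the root. The key insight is that on an ultrametric caterpillar the FP ranking is governed \emph{only} by depth, and that depth order is weakly preserved under leaf deletion. So I would prove three things and then chain them together.

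\emph{Step 1 (FP is strictly decreasing in depth).} First I would fix coordinates: write the internal vertices as a path $\rho = v_1, v_2, \ldots, v_{n-1}$, where for $k \le n-2$ the vertex $v_k$ carries one pendant leaf $\ell_k$ (at depth $k$) and one spine child $v_{k+1}$, while $v_{n-1}$ carries the cherry $\ell_{n-1}, \ell_n$ (both at depth $n-1$). Denote the spine edge lengths $a_k$ and the pendant lengths $p_k$. The ultrametric condition (all root-to-leaf path lengths equal to a common $h$) forces $p_k - p_{k+1} = a_k$. Since the spine edge below $v_k$ has exactly $D_k = n-k \ge 2$ descendant leaves, a direct computation of the FP index gives, for $k = 1,\ldots,n-2$,
$$FP_T(\ell_k) - FP_T(\ell_{k+1}) = (p_k - p_{k+1}) - \frac{a_k}{D_k} = a_k\Big(1 - \frac{1}{D_k}\Big) > 0.$$
Together with $FP_T(\ell_{n-1}) = FP_T(\ell_n)$ and transitivity, this shows that $FP_T$ strictly decreases with depth, with the only tie at the two deepest (cherry) leaves. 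Equivalently, for any ultrametric caterpillar, $FP(x_i) \ge FP(x_j)$ if and only if $\operatorname{depth}(x_i) \le \operatorname{depth}(x_j)$; I will apply this biconditional to both $T$ and $\widetilde{T}$.

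\emph{Step 2 ($\widetilde{T}$ is again an ultrametric caterpillar).} Deleting leaves from a caterpillar and suppressing degree-2 vertices always yields a caterpillar, so the shape is preserved. For ultrametricity, note that suppression adds up merged edge lengths and hence preserves every root-to-leaf path length: if $\rho$ survives in $\widetilde{T}$, all remaining leaves stay at common distance $h$ from it; if $\rho$ is discarded (all survivors lie on one side of $\rho$), the new root is the lowest common ancestor $r$ of $\widetilde{X}$, and every survivor lies at common distance $h - \operatorname{dist}_T(\rho,r)$ from $r$. Either way $\widetilde{T}$ is ultrametric, so Step 1 applies to it.

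\emph{Step 3 (depth order is weakly preserved).} This is the heart of the argument: for $x,y \in \widetilde{X}$, I would show $\operatorname{depth}_T(x) \le \operatorname{depth}_T(y)$ implies $\operatorname{depth}_{\widetilde{T}}(x) \le \operatorname{depth}_{\widetilde{T}}(y)$. If the two $T$-depths are equal, $x,y$ are the cherry pair of $T$; since both survive, their common parent remains a branching vertex and they again form a cherry in $\widetilde{T}$, so their $\widetilde{T}$-depths are equal. If $\operatorname{depth}_T(x) < \operatorname{depth}_T(y)$, then $x = \ell_k$ is a spine leaf and $y$ lies strictly below the LCA $v_k = \operatorname{par}_T(x)$. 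Because $x$ and $y$ both survive and diverge at $v_k$, the vertex $v_k$ stays a branching vertex of $\widetilde{T}$ (it is not suppressed, even if $v_k = \rho$), with $x$ still a direct child and $y$ a proper descendant through the other child. Hence $\operatorname{depth}_{\widetilde{T}}(y) \ge \operatorname{depth}_{\widetilde{T}}(v_k) + 1 = \operatorname{depth}_{\widetilde{T}}(x)$, as required. Chaining the steps then closes the proof: $FP_T(x_i) \ge FP_T(x_j)$ gives $\operatorname{depth}_T(x_i) \le \operatorname{depth}_T(x_j)$ (Step 1 for $T$), hence $\operatorname{depth}_{\widetilde{T}}(x_i) \le \operatorname{depth}_{\widetilde{T}}(x_j)$ (Step 3), hence $FP_{\widetilde{T}}(x_i) \ge FP_{\widetilde{T}}(x_j)$ (Step 1 for $\widetilde{T}$, valid by Step 2).

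I expect the main obstacle to be the bookkeeping in Steps 2 and 3 rather than the FP computation in Step 1, which is routine once the ultrametric identity $p_k - p_{k+1} = a_k$ is in hand. Specifically, the care is in handling the suppression of degree-2 vertices, the possible deletion of the root, and the cherry ties (a pair of leaves at distinct $T$-depths can merge into a tied cherry in $\widetilde{T}$, which is exactly why the statement asks only for the implication $\ge \Rightarrow \ge$ and not for strict preservation). Phrasing Step 3 entirely through the ancestor/LCA relation, which is manifestly invariant under passing to induced subtrees, keeps this part honest.
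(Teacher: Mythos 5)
Your proof is correct, and it takes a genuinely different route from the paper's. The paper's appendix proof is computational end to end: it fixes an explicit parametrization of the ultrametric caterpillar, writes out the full FP sums for every leaf of $T$, then deletes a \emph{single} leaf $x_i$, distinguishes the cases $x_i = x_n$ (adjacent to the root) and $x_i \neq x_n$, writes out the post-deletion FP sums (Equations \eqref{Eq1}--\eqref{Eq3}), compares the formulas directly, and finally disposes of $|X'|>1$ by deleting leaves sequentially and repeating the argument. You instead factor the statement through an invariant: (i) on \emph{any} ultrametric caterpillar the FP order is the reverse of the depth order (your consecutive-difference identity $FP_T(\ell_k)-FP_T(\ell_{k+1})=a_k(1-1/D_k)>0$ is a tidier derivation of what the paper reads off from the full sums); (ii) the class of ultrametric caterpillars is closed under taking induced subtrees; and (iii) depth order is weakly preserved under restriction, via the LCA/branching-vertex argument. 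What this buys: you never recompute FP values on $\widetilde{T}$ at all (Step 1 applies to $\widetilde{T}$ verbatim once Step 2 is in hand), you handle an arbitrary deletion set $X'$ in one shot rather than iterating single deletions, and all the metric content of the problem is quarantined in Step 1 while Step 3 is purely combinatorial. In effect you have formalized the intuition that the paper states only informally in the main text (smaller depth means larger FP index, and deletion yields another caterpillar for which this still holds), whereas its actual proof proceeds by brute-force formula comparison. Your attention to the suppressed-root case in Steps 2--3 and to the possibility of two leaves at distinct $T$-depths becoming a tied cherry in $\widetilde{T}$ is exactly the bookkeeping needed to make this clean approach airtight.
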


\begin{proof}
Let $T$ be an ultrametric caterpillar tree on $X=\{x_1, \ldots, x_n\}$ with $n \geq 3$. Without loss of generality we may assume that the leaf labels and edge lengths of $T$ are as indicated in Figure \ref{Fig_UltrametricCaterpillar}. 
In particular, the length of the pendant edge incident to $x_1$ is $\lambda_1$, and for each $x_j$ with $j=2, \ldots, n$, the length of the pendant edge incident to $x_j$ is equal to $\sum\limits_{k=1}^{j-1} \lambda_k$ (due to the fact that $T$ is ultrametric). Moreover, the number of leaves below an interior edge of length $\lambda_j$ is equal to $j$ for $j=2, \ldots, n-1$. Thus, 
\begin{align*}
    FP_T(x_1) &= \lambda_1 + \sum\limits_{k=2}^{n-1} \frac{\lambda_k}{k},
\end{align*}
and for $j=2, \ldots, n$,
\begin{align*}
    FP_T(x_j) &= \sum\limits_{k=1}^{j-1} \lambda_k + \sum\limits_{k=j}^{n-1} \frac{\lambda_k}{k}.
\end{align*}

This directly implies
\begin{align*}
    FP_T(x_1) \leq FP_T(x_2) \leq FP_T(x_3) \leq \ldots \leq FP_T(x_{n-1}) \leq FP_T(x_n).
\end{align*}
(In fact, we have $FP_T(x_1)=FP_T(x_2)$, whereas all other inequalities are strict.)
Now, suppose that a single element $x_i \in X'$ is deleted from $T$ to obtain $\widetilde{T}$. 
\begin{itemize}
    \item If $x_i=x_n$, $\widetilde{T}$ is a caterpillar tree on $\{x_1, \ldots, x_{n-1}\}$ with the parent of $x_{n-1}$ being the root.
Analogously to the calculations above it now follows that
\begin{align*}
    FP_{\widetilde{T}}(x_1) \leq FP_{\widetilde{T}}(x_2) \leq \ldots \leq FP_{\widetilde{T}}(x_{n-1}). 
\end{align*}
In particular, the ranking order of the elements in $\widetilde{X}$ on $\widetilde{T}$ is identical to the respective ranking order on $T$. 
    \item If $x_i \neq x_n$, $\widetilde{T}$ is a caterpillar tree on $\widetilde{X}=X \setminus \{x_i\}$ obtained from $T$ by deleting $x_i$ and its incident edge and suppressing the resulting degree-2 vertex. Without loss of generality we can assume that $x_i \neq x_1$ (otherwise switch the labels of $x_1$ and $x_2$) and we have
    \begin{align}
        FP_{\widetilde{T}}(x_1) &= \lambda_1 + \sum\limits_{k=2}^{i-2} \frac{\lambda_k}{k} + \frac{\lambda_{i-1} + \lambda_i}{i-1} + \sum\limits_{k=i+1}^{n-1} \frac{\lambda_k}{k-1}.
        \label{Eq1}
     \end{align}
     For $2 \leq j \leq i-1$, we have
     \begin{align}
         FP_{\widetilde{T}}(x_j) = \sum\limits_{k=1}^{j-1} \lambda_k
          + \sum\limits_{k=j}^{i-2} \frac{\lambda_k}{k} + \frac{\lambda_{i-1} + \lambda_i}{i-1} + \sum\limits_{k=i+1}^{n-1} \frac{\lambda_k}{k-1}.
         \label{Eq2}
     \end{align}
     Finally, for $i+1 \leq j \leq n$, we have
     \begin{align}
          FP_{\widetilde{T}}(x_j) = \sum\limits_{k=1}^{j-1} \lambda_k + \sum\limits_{k=j}^{n-1} \frac{\lambda_k}{k-1}.
          \label{Eq3}
     \end{align}
     Comparing Equations \eqref{Eq1}, \eqref{Eq2}, and \eqref{Eq3}, we clearly have
     \begin{align*}
         FP_{\widetilde{T}}(x_1) \leq \ldots \leq FP_{\widetilde{T}}(x_{i-1}) \leq FP_{\widetilde{T}}(x_{i+1}) \leq \ldots \leq FP_{\widetilde{T}}(x_n).
     \end{align*}
     In particular, if $FP_T(x_i) \geq FP_T(x_j)$, then we also have $FP_{\widetilde{T}}(x_i) \geq FP_{\widetilde{T}}(x_j)$ for all $x_i,x_j \in \widetilde{X}$.
\end{itemize}
Now, if $|X'| > 1$, we sequentially delete all other elements of $X'$ and repeat the argument above. This completes the proof.

\begin{figure}[htbp]
\centering
\includegraphics[scale=0.25]{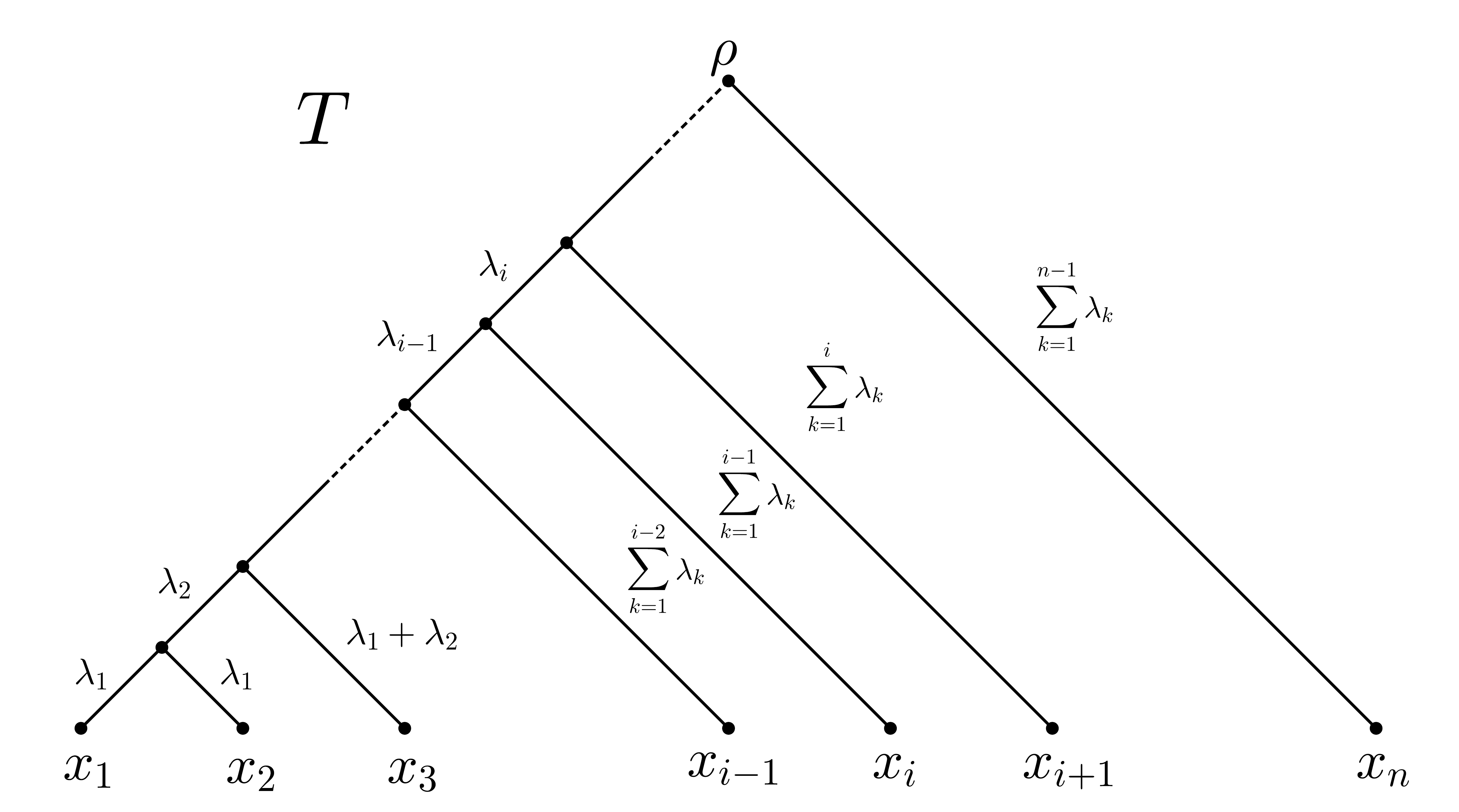}
\caption{Ultrametric caterpillar tree $T$ on $X = \{x_1, \ldots, x_n\}$ used in the proof of Proposition~\ref{Prop_caterpillar_ultrametric}.}
\label{Fig_UltrametricCaterpillar}
\end{figure}
\end{proof}

\end{document}